\title{The Logical Essence of Compiling With Continuations} 
\author{Jos\'e {Esp\'irito Santo}}{Centre of Mathematics, University of Minho, Portugal \and}{jes@math.uminho.pt}{https://orcid.org/0000-0002-6348-5653}{} 
\author{Filipa Mendes}{Centre of Mathematics, University of Minho, Portugal}{filipasimoesmendes@gmail.com}{}{}
\authorrunning{Jos\'e {Esp\'irito Santo} and Filipa Mendes} 
\keywords{Continuation-passing style, Sequent calculus, Generalized applications, Administrative normal form} 
\begin{document}

\maketitle


\newtheorem{defn}{Definition}
\newtheorem{lem}{Lemma}
\newtheorem{prop}{Proposition}
\newtheorem{thm}{Theorem}
\newtheorem{cor}{Corollary}

\newcommand{\SKIP}[1]{}

\newcommand{\warn}[1]{{\color{red}#1}}

\newcommand{\ovl}[1]{\overline{#1}}
\newcommand{\udl}[1]{\underline{#1}}

\newcommand{\mathsc}[1]{{\scriptstyle{#1}}} 


\newcommand{\lb}{\lambda}

\newcommand{\lt}[3]{\mathsf{let}\,#1:=#2\,\mathsf{in}\,#3}
\newcommand{\ltc}[3]{\mathsf{LET}\,#1:=#2\,\mathsf{in}\,#3}

\newcommand{\cutsymbol}{\mathsf{C}}
\newcommand{\rt}[1]{\uparrow\!#1} 
\newcommand{\li}[4]{#1(#2,#3.#4)}  
\newcommand{\CVW}[3]{\cutsymbol_1(#1,#2.#3)} 
\newcommand{\CVN}[3]{\cutsymbol_2(#1,#2.#3)} 
\newcommand{\CMN}[3]{\cutsymbol_3(#1,#2.#3)} 
\newcommand{\Cut}[3]{\cutsymbol(#1,#2.#3)} 
\newcommand{\Cutc}[3]{\cutsymbol(#1:#2.#3)} 
\newcommand{\Cutv}[3]{\cutsymbol_v(#1,#2.#3)} 
\newcommand{\Cutvc}[3]{\cutsymbol_v(#1:#2.#3)} 
\newcommand{\Cutvfs}[2]{\cutsymbol_v(#1,#2)} 
\newcommand{\Cutvfsc}[2]{\cutsymbol_v(#1:#2)} 
\newcommand{\garg}[3]{(#1,#2.#3)} 

\newcommand{\gapp}[4]{#1(#2,#3.#4)}  


\newcommand{\sub}[3]{[#1/#2]#3}
\newcommand{\lsub}[3]{[#1\backslash#2]#3} 


\newcommand{\E}{\mathbb{E}}  

\newcommand{\CC}{\mathbb{C}}  
\newcommand{\cc}{\mathbb{c}}  

\newcommand{\K}{\mathbb{K}}  

\newcommand{\fhole}[1]{[#1]}
\newcommand{\ehole}{\fhole{\_}}


\newcommand{\Bv}{B_v}
\newcommand{\sigmav}{\sigma_v}
\newcommand{\id}{\eta_{cut}}
\newcommand{\idv}{id_v}

\newcommand{\betav}{\beta_v}
\newcommand{\assoc}{assoc}
\newcommand{\ltv}{let_v}
\newcommand{\ltmn}{let_1}
\newcommand{\ltvn}{let_2}
\newcommand{\etalt}{\eta_{let}}
\newcommand{\etac}{\eta_{c}}
\newcommand{\etact}{\eta_{cut}}
\newcommand{\etak}{\eta_{k}}

\newcommand{\red}{\to}
\newcommand{\redd}{\twoheadrightarrow}
\newcommand{\eq}{=}

\newcommand{\eqc}{=_{\mathsf{C}}}
\newcommand{\conv}{\leftrightarrow}

\newcommand{\imp}{\supset}

\newcommand{\seqv}[3]{#1\to#2:#3} 
\newcommand{\seq}[3]{#1\Rightarrow#2:#3} 
\newcommand{\seqc}[4]{#1|#2\Rightarrow#3:#4} 

\newcommand{\seqC}[3]{#1\vdash_{\Csymb}#2:#3} 
\newcommand{\vdashc}{\vdash_{\mathsf{C}}}

\newcommand{\seqCPS}[3]{#1\vdash_{\mathsf{CPS}}#2:#3}
\newcommand{\ac}{\mathscr{A}}  
\newcommand{\bc}{\mathscr{B}}  

\newcommand{\seqJ}[3]{#1\vdash_{\Jsymb}#2:#3}

\newcommand{\Jsymb}{\mathsf{J}}
\newcommand{\lbjv}{\lb\Jsymb_v}

\newcommand{\Csymb}{\mathsf{C}}
\newcommand{\lbc}{\lb{\Csymb}} 

\newcommand{\LJQ}{LJQ}
\newcommand{\lbq}{\lb Q} 
\newcommand{\lbqo}{\lb LJQ} 
\newcommand{\lbqmo}{\lb LjQ} 

\newcommand{\ANF}{\underline{ANF}}
\newcommand{\LNF}{\underline{LNF}}

\newcommand{\VES}{VES}
\newcommand{\VFS}{VFS}
\newcommand{\RCPS}{\underline{CPS}} 
\newcommand{\CPS}{CPS}              

\newcommand{\cnf}{\mathsc{CNF}}
\newcommand{\ces}{\mathsc{CES}}
\newcommand{\vfs}{\mathsc{VFS}} 
\newcommand{\cps}{\mathsc{CPS}} 


\newcommand{\smp}[1]{#1^{\surd}}
\newcommand{\smpv}[1]{#1^{\surd\!\!\!\surd}} 

\newcommand{\knl}[1]{#1^{\triangledown}}
\newcommand{\knlv}[1]{#1^{\triangledown\!\!\!\triangledown}}  

\newcommand{\cpsv}[1]{#1^{\dagger}}  
\newcommand{\cpsc}[2]{(#1:#2)}       
\newcommand{\cpst}[1]{\overline{#1}} 
\newcommand{\cpsk}[1]{#1^{\star}}    

\newcommand{\uCPS}{\underline{CPS}}  

\newcommand{\vfsv}[1]{#1^{\circ}}  
\newcommand{\vfsc}[2]{(#1;#2)}       
\newcommand{\vfst}[1]{#1^{\bullet}} 

\newcommand{\ngv}[1]{#1^{\sim}}   
\newcommand{\ngt}[1]{#1^{-}}   
\newcommand{\angt}[1]{#1^{\wr}}   
\newcommand{\angc}[1]{#1^{\wr}}   

\newcommand{\posv}[1]{#1^{\times\!\!\!\times}}   
\newcommand{\post}[1]{#1^{+}}                    
\newcommand{\apos}[1]{#1^{\times}}               

\newcommand{\dlv}[1]{#1^{\natural}}
\newcommand{\dlt}[1]{#1^{\sharp}}

\newcommand{\iko}[1]{i#1}

\newcommand{\whereproofs}[2]{#1} 

\begin{abstract}
The essence of compiling with continuations is that conversion to continuation-passing style (CPS) is equivalent to a source language transformation converting to administrative normal form (ANF). Taking as source language Moggi's computational lambda-calculus ($\lbc$), we define an alternative to the CPS-translation with target in the sequent calculus LJQ, named \emph{value-filling style} (VFS) translation, and making use of the ability of the sequent calculus to represent contexts formally. The VFS-translation requires no type translation: indeed, double negations are introduced only when encoding the VFS target language in the CPS target language. This optional encoding, when composed with the VFS-translation reconstructs the original CPS-translation. Going back to direct style, the ``essence'' of the VFS-translation is that it reveals a new sublanguage of ANF, the \emph{value-enclosed style} (VES), next to another one, the \emph{continuation-enclosing style} (CES): such an alternative is due to a dilemma in the syntax of $\lbc$, concerning how to expand the application constructor. In the typed scenario, VES and CES correspond to an alternative between two proof systems for call-by-value, LJQ and natural deduction with generalized applications, confirming proof theory as a foundation for intermediate representations.
\end{abstract}

\section{Introduction}\label{sec:intro}


The conversion of a program in a source call-by-value language to continuation-passing style (CPS) by an optimizing translation that reduces on the fly the so-called administrative redexes produces programs which can be translated back to direct style, so that the final result, obtained by composing the two stages of translation, is a new program in the source language which can be obtained from the original one by reduction to administrative normal form (ANF) --  a program transformation in the source language \cite{FlanaganSabryDubaFelleisen93,SabryFelleisen93}. This fact has been dubbed the ``essence'' of compiling with continuations and has had a big impact and generated an on-going debate in the theory and practice of compiler design \cite{FlanaganSabryDubaFelleisen2003,Kennedy07,MaurerDownenAriolaJones2017}.

Our starting point is the refinement of that ``essence'', obtained in \cite{SabryWadler97}, in the form of a reflection of the CPS target in the computational $\lb$-calculus \cite{MoggiLFCS88}, the latter playing the role of source language and here denoted $\lbc$ -- see Fig.~\ref{fig:essence}. Then we ask: What is the proof-theoretical meaning of this reflection? What is the logical reading of this reflection in the typed setting? Of course, the CPS-translation has a well-known logical reading as a negative translation, based on the introduction of double negations, capable of translating a classical source calculus with control operators \cite{MeyerWand1985,GriffinPOPL90,SorensenUrzyczyn2006}. But it is not clear how this reading is articulated with the reflection in Fig.\ref{fig:essence}, which provides a \emph{decomposition} of the CPS-translation as the reduction to ANF followed by a ``kernel'' translation that relates the ``kernel'' ANF with CPS. 
\begin{figure}[t]
			\centering
		$$\xymatrix{
			& \ar@{--}[dddd]  &\\
			\boxed{\lbc} \ar@<0ex>[d]_{\text{admin}} \ar@/^1pc/[drr]^{\text{\qquad CPS-translation}}\\
			\boxed{\ANF}  \ar@<1ex>[rr]^{\text{kernel of CPS-translation}}
			&  & \boxed{\RCPS} \ar@<1ex>[ll]^{\text{inverse CPS-translation}} \\
			\small\text{Direct Style} \!\!\!\!\!\!\!\! & & \!\!\!\!\!\!\!\!\small\text{Continuation-Passing Style}\\ 
			&  &}$$
	\caption{The essence of compiling with continuations}
	\label{fig:essence}
\end{figure}
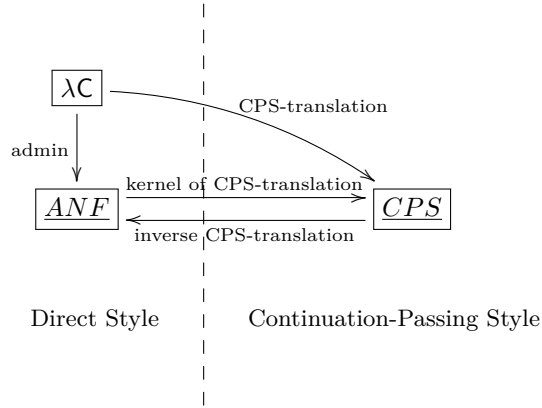


It is also well-known that the CPS-translation can be decomposed in several ways: indeed in the reference \cite{SabryWadler97} alone we may find two of them, one through the monadic meta-language \cite{MoggiIC91}, the other through the linear $\lb$-calculus \cite{MaraistOderskyTurnerWadlerTCS99}. Here we will propose another intermediate language, the sequent calculus $\LJQ$ \cite{DyckhoffLengrandCiE2006,DyckhoffLengrandJLC07}. The calculus $\LJQ$ has a long history and several applications in proof theory \cite{DyckhoffLengrandCiE2006} and can be turned into a typed call-by-value $\lb$-calculus in equational correspondence with $\lbc$ \cite{DyckhoffLengrandJLC07}. Here we want to show it has a privileged role as a tool to analyze the CPS-translation. 

Languages of proof terms for the sequent calculus handle contexts (\emph i.e. $\lb$-terms with a hole) formally \cite{HerbelinCSL94,CurienHerbelinICFP00,jesTOCS09,jesAPAL2013}. This seems most convenient, since a continuation may be seen as a certain kind of context, and suggests that we can write an alternative translation into the sequent calculus, as if we were CPS-translating, but without the need to pass around a reification of the current continuation as a $\lb$-abstraction, nor the concomitant need to translate types by the insertion of double negations, to make room for a type $\ngv{A}$ of values, a type $\neg\ngv{A}$ of continuations and a type $\neg\neg\ngv{A}$ of programs, out of a source type $A$.

We develop this in detail, which requires: to rework entirely the term calculus for $\LJQ$ and obtain a system, named $\lbq$, more manageable for our purposes; and to identify the kernel and the sub-kernel of $\lbq$, the latter being the target system, named $\VFS$ after \emph{value-filling style}, of the new translation. In the end, we are rewarded with an isomorphism between $\VFS$ and the target of the CPS-translation, 
which, when composed with the alternative translation, reconstructs the CPS-translation. The isomorphism is a negative translation, reduced to the role of optional and late stage of translation.

Going back to direct style, the ``essence'' of the VFS-translation is that it reveals a new sublanguage of ANF, the \emph{value-enclosed style} (VES), next to another sublanguage of ANF, the \emph{continuation-enclosing style} (CES): such alternative between VES and CES is due to a dilemma in the syntax of $\lbc$, concerning how to \emph{expand} the application constructor. Hence, these two sub-kernels of $\lbc$ are under a layer of expansion -- and the same was already true for the passage from the kernel to the sub-kernel of $\lbq$. 

While VES corresponds to the sub-kernel VFS of $\lbq$, CES corresponds to a fragment of $\lbjv$ \cite{jesCSL20}, a call-by-value $\lb$-calculus with generalized applications; the fragment is that of \emph{commutative normal forms} (CNF), that is, normal forms w.~r.~t.~the commutative conversions, naturally arising when application is generalized, which reduce both the head term and the argument in an application to the form of values. So the alternative between VES and CES is also a reflection, in the source language, of the alternative between two proof systems for call-by-value: the sequent calculus $LJQ$ and the natural deduction system behind $\lbjv$.

A summary is contained in Fig.~\ref{fig:logical-essence}: it shows a proof-theoretical background hidden in Fig.~\ref{fig:essence}, which this paper wants to reveal. In the process, we want to confirm proof theory as a foundation for intermediate representations useful in the compilation of functional languages.

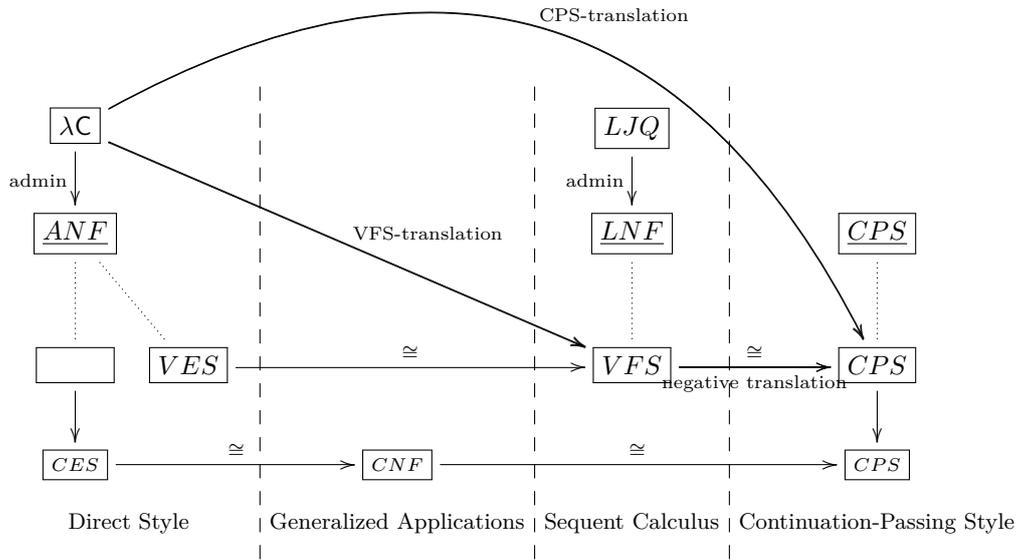
\begin{figure*}[t]
		\centering
		$$\xymatrix@=6pt{ &  &  \ar@{--}[dddddddddd] &         & \ar@{--}[dddddddddd]  & & \ar@{--}[dddddddddd]  & \\
						\boxed{\lbc}  \ar@<0ex>[dd]_{\text{admin}}  \ar@/^7.2pc/@{->}[rrrrrrrddddd]^{\text{\qquad\,\, \, CPS-translation}} \ar@/^7.2pc/@{->}@<0.1mm>[rrrrrrrddddd] \ar@{->}[dddddrrrrr]^{\qquad \text{VFS-translation}} \ar@{->}@<0.1mm>[dddddrrrrr] &  &  &  & & \boxed{\LJQ}  \ar@<0ex>[dd]_{\text{admin}}  &\\
						& & & & & &\\
						\boxed{\ANF} \ar@{..}[ddd] \ar@{..}[dddr] & & & & & \boxed{\LNF} \ar@{..}[ddd] & & \boxed{\RCPS} \ar@{..}[ddd] \\
						& & & & & &\\
						& & & & & &\\
			   		   	\boxed{\textcolor{white}{VES}} \ar@<0ex>[dd] & \boxed{\VES} \ar@{->}[rrrr]^{\cong} & &  & & \boxed{\VFS} \ar@{->}[rr]_{\text{negative translation}}^{\cong}  \ar@{->}@<0.1mm>[rr]& & \boxed{\CPS} \ar@<0ex>[dd] \\
			   		   	& & & & & & \\
						\boxed{\ces} \ar@{->}[rrr]^{\cong} & && \boxed{\cnf}  \ar@{->}[rrrr]^{\cong}   & & & & \boxed{\cps} \\
						\!\!\!\!\! \footnotesize\text{Direct}  \!\!\!\!\!\!\!\!\!\!\!\!\!\!\!\! & \!\!\!\!\!\!\!\!\!\!\!\!\!\!\!\! \footnotesize\text{Style}  \!\!\!\!\! & & \!\!\!\!\! \footnotesize\text{Generalized Applications}\!\!\!\!\! & &\!\!\!\!\!  \footnotesize\text{Sequent Calculus} \!\!\!\!\! & & \!\!\!\!\! \footnotesize\text{Continuation-Passing Style} \!\!\!\!\!  \\
						& & & & & & }$$
	\caption{The logical essence of compiling with continuations}
	\label{fig:logical-essence}
\end{figure*}

\textbf{Plan of the paper.} Section \ref{sec:back} recalls $\lbc$ and the CPS-translation. Section \ref{sec:LJQ} contains our reworking of $LJQ$. Section \ref{sec:VFS} introduces the alternative translation into $LJQ$ and the decomposition of the CPS-translation. Section \ref{sec:bds} goes back to direct style and studies the sub-kernels of $\lbc$. Section \ref{sec:conc} summarizes our contribution and 
discusses related and future work. All proofs \whereproofs{are moved to Appendix \ref{sec:proofs}}{can be found in the long version of this paper \cite{}}.

\section{Background}\label{sec:back}

\textbf{Preliminaries.} Simple types (=formulas) are given by $A,B,C::= a | A\imp B$. In typing systems, a context $\Gamma$ will always be a \emph{consistent} set of declarations $x:A$; consistency here means that no variable can be declared with two different types in $\Gamma$. 


We recall the concepts of equational correspondence, pre-Galois connection and reflection \cite{DyckhoffLengrandJLC07,SabryFelleisen93,SabryWadler97} characterizing different forms of relationship between two calculi.

\begin{defn}
	Let $(\Lambda_1,\red_1)$ and $(\Lambda_2,\red_2)$ be two calculi and, for each $i=1,2$, let $\redd_i$ (resp.~$\conv_i$) be the reflexive-transitive (resp.~reflexive-transitive-symmetric) closure of $\red_i$. Consider the mappings $f: \Lambda_1 \to \Lambda_2$ and $g: \Lambda_2 \to \Lambda_1$.
	\begin{itemize}
		\item $f$ and $g$ form an \textbf{equational correspondence} between $\Lambda_1$ and $\Lambda_2$ if the following conditions hold: (1) If $ M \red_1 N$ then $f(M) \conv_2 f(N)$; (2) If $ M \red_2 N$ then $g(M) \conv_1 g(N)$; (3) $ M \conv_1 g(f(M)) $; (4) $ f(g(M)) \conv_2 M$.
		\item $f$ and $g$ form a \textbf{pre-Galois connection} from $\Lambda_1$ to $\Lambda_2$ if the following conditions hold: (1) If $ M \red_1 N $ then $ f(M) \redd_2 f(N)$; (2) If $ M \red_2 N$ then  $g(M) \redd_1 g(N)$; (3) $ M \redd_1 g(f(M)) $.		
		\item $f$ and $g$ form a \textbf{reflection} in $\Lambda_1$ of $\Lambda_2$ if the following conditions hold: (1) If $ M \red_1 N $ then $ f(M) \redd_2 f(N)$; (2) If $ M \red_2 N $ then $ g(M) \redd_1 g(N)$; (3) $ M \redd_1 g(f(M)) $; (4) $ f(g(M)) \eq M$.
		
	\end{itemize}
	
\end{defn}
Note that if $f$ and $g$ form a pre-Galois connection from $\Lambda_1$ to $\Lambda_2$ and $\red_2$ is confluent, then $\red_1$ is also confluent.  Besides, it is also important to observe that if $f$ and $g$ form a reflection from $\Lambda_1$ to $\Lambda_2$, then $g$ and $f$ form a pre-Galois connection from $\Lambda_2$ to $\Lambda_1$.

\textbf{Computational lambda-calculus.} The computational $\lb$-calculus \cite{MoggiLFCS88} is defined in Table \ref{tab:lC}. In addition to ordinary $\lb$-terms, one also has \emph{let-expressions} $\lt{x}{M}{N}$: these are explicit substitutions which trigger only after the actual parameter $M$ is reduced to a value (that is, a variable or $\lb$-abstraction). So, in addition to the rule $\ltv$ that triggers substitution, there are reduction rules --  $\ltmn$, $\ltvn$ and $\assoc$ -- dedicated to that preliminary reduction of actual parameters in let-expressions. 

For the reduction of $\beta$-redexes, we adopt the rule $B$ from \cite{DyckhoffLengrandJLC07}, which triggers even if the argument $N$ is not a value, and just generates a let-expression. Most presentations of $\lbc$ \cite{MoggiLFCS88,SabryWadler97} have rule $\betav$ instead, which reads $(\lb x.M)V\to\sub{V}{x}{M}$. The two versions of the system are equivalent. In our presentation, the effect of $\betav$ is achieved with $B$ followed by $\ltv$. Conversely, when $N$ is not a value, we can perform the reduction 
$$(\lb x.M)N\to\lt{y}{N}{(\lb x.M)y}\to\lt{y}{N}{\sub{y}{x}{M}}=_{\alpha}\lt{x}{N}{M}\enspace.$$ 
The first step is by $\ltvn$, the second by $\betav$. The last term is the contractum of $B$. 

In this paper, we leave the $\eta$-rule for $\lb$-abstraction out of the definition of $\lbc$, and similarly for other systems -- since it plays no rule in what we want to say. But we include the $\eta$-rule for let-expressions, and other incarnations of it in other systems.

In \cite{DyckhoffLengrandJLC07,SabryWadler97} the $\lbc$-calculus is studied in its untyped version. Here we will also consider its simply-typed version, which handles sequents $\seqC{\Gamma}{M}{A}$, where $\Gamma$ is a set of declarations $x:A$. The typing rules are obvious, Table \ref{tab:lC} only contains the rule for typing let-expressions.

\begin{table*}
		\centering
		
		$$\begin{array}{r r c l}
			\text{(terms)} & M, N, P, Q &::= & V  \, | \,  MN \, | \, \lt{x}{M}{N} \\
			\text{(values)} & V,W &::=& x \, | \, \lb x.M 
		\end{array}$$
%
		$$ \begin{array}{rrcll}
			(B)	& (\lb x.M)N& \red & \lt{x}{N}{M}  \\
			(\ltv) & \lt{x}{V}{M} & \red & \sub{V}{x}{M}\\
			(\etalt) & \lt{x}{M}{x} & \red & M \\
			(\assoc) & \lt{y}{(\lt{x}{M}{N})}{P} & \red & \lt{x}{M}{\lt{y}{N}{P}} \\		
			(\ltmn) &  MN & \red &  \lt{x}{M}{xN} & (a)\\
			(\ltvn) &  VN & \red &  \lt{x}{N}{Vx} & (b)
		\end{array}$$
%
%
	   $$
	   \begin{array}{l}
	   	\infer{\seqC{\Gamma}{\lt{x}{M}{N}}{B}}{\seqC{\Gamma}{M}{A}&\seqC{\Gamma,x:A}{N}{B}}
	   \end{array}
	  $$	
	\caption{The computational $\lb$-calculus, here also named $\lbc$-calculus. Provisos: $(a)$ $M$ is not a value. $(b)$ $N$ is not a value. Typing rules for $x$, $\lb x.M$ and $MN$ as usual.}
	\label{tab:lC}
\end{table*}

The \emph{kernel} of the computational $\lb$-calculus \cite{SabryWadler97} is defined in Table \ref{tab:kernel-lC}. It is named here $\ANF$, after ``administrative normal form'', because its terms are the normal forms w.~r.~t.~the administrative rules of $\lbc$: $\ltmn$, $\ltvn$ and $\assoc$ \cite{SabryWadler97}.

\begin{table*}
		\centering
		
		$$\begin{array}{r r c l}
			\text{(terms)} & M, N, P, Q &::= & V  \, | \,  VW \, | \, \lt{x}{V}{M} \, | \, \lt{x}{VW}{M} \\ 
			\text{(values)} & V,W &::=& x \, | \, \lb x.M 
		\end{array}$$
%
		$$ \begin{array}{rrcll}
			(\Bv)	& \lt{y}{(\lb x.M)V}{P} & \red & \lt{x}{V}{\ltc{y}{M}{P}} & \\
			(\Bv')	& (\lb x.M)V & \red & \lt{x}{V}{M} & \\
			(\ltv) & \lt{x}{V}{M} & \red & \sub{V}{x}{M}\\
			(\etalt) & \lt{x}{VW}{x} & \red & VW 
		\end{array}$$
%
%
	\caption{The kernel of the computational $\lb$-calculus, here named $\ANF$.}
	\label{tab:kernel-lC}
\end{table*}

In the kernel, only a specific form of applications and two forms of let-expressions are primitive. The general form of a let-expression, written $\ltc{y}{M}{P}$, is a derived form defined by recursion on $M$ as follows: 
$$
\begin{array}{rcl}
	\ltc{y}{V}{P} &=& \lt{y}{V}{P}\\
	\ltc{y}{VW}{P} &=& \lt{y}{VW}{P}\\
	\ltc{y}{(\lt{x}{V}{M})}{P} &=& \lt{x}{V}{\ltc{y}{M}{P}}\\
	\ltc{y}{(\lt{x}{VW}{M})}{P} &=& \lt{x}{VW}{\ltc{y}{M}{P}}
\end{array}
$$

Obviously, given $M$ and $P$ in the kernel, $\lt{y}{M}{P}\redd_{\assoc}\ltc{y}{M}{P}$ in $\lbc$. Hence, a $\Bv$-step in the kernel can be simulated in $\lbc$ as a $B$-step followed by a series of $\assoc$-steps. On the other hand $\Bv'$ is a restriction of rule $B$ to the sub-syntax, and the same is true of the remaining rules of the kernel. 

Notice that in the form $\lt{x}{VW}{M}$ the immediate sub-expressions are $V$, $W$ and $M$ -- but not $VW$. For this reason, there is no overlap between the redexes  of rules $\Bv$ and $\Bv'$, nor between the redexes of rules $\Bv'$ and $\etalt$.

Our presentation of the kernel is very close to the original one in \cite{SabryWadler97}, as detailed in Appendix \ref{sec:kernel-lbc}.

\textbf{CPS-translation.} We present in this subsection the call-by-value CPS-translation of $\lbc$. It is a ``refined '' translation \cite{DyckhoffLengrandJLC07}, in the sense that it reduces ``administrative redexes'' at translation time, as already done in \cite{PlotkinTCS75}.

The target of the translation is the system $\RCPS$, presented in Table \ref{tab:RCPS}. This target is a subsystem of the $\lb$-calculus (or of Plotkin's call-by-value $\lb_v$-calculus -- the ``indifference property'' \cite{PlotkinTCS75}), whose expressions are the union of four different classes of $\lb$-terms (commands, continuations, values and terms), and whose reduction rules are either particular cases of rules $\beta$ and $\eta$ (the cases of $\sigmav$ or $\etak$, respectively), or are derivable as two $\beta$-steps (the case of $\betav$). Each command or continuation has a unique free occurrence of $k$, which is a fixed (in the calculus) \emph{continuation variable}. A term is obtained by abstracting this variable over a command. A command is always composed of a continuation $K$, to which a value may be passed (the form $KV$), or which is going to instantiate $k$ in the command resulting from an application $VW$ (the form $VWK$).

There is a simply-typed version of this target, \emph{not} found in \cite{PlotkinTCS75,DyckhoffLengrandJLC07,SabryWadler97}, defined as follows. Simple types are augmented with a new type $\perp$, and we adopt the usual abbreviation $\neg A:=A\imp\perp$. Then, as defined in Table \ref{tab:RCPS}, one has: two subclasses of such types, one ranged by $\ac$, $\ac'$ and the other ranged over by $\bc$, $\bc'$; four kinds of sequents, one per each syntactic class; and one typing rule for each syntactic constructor.

\begin{table*}
		$$
		\begin{array}{rrcl}
		\text{(Commands)} & M,N & ::= & KV \,|\, VWK\\
		\text{(Continuations)} & K & ::= & \lb x.M\,|\, k \\
		\text{(Values)} & V,W & ::= & \lb x.P \,|\, x\\
		\text{(Terms)} & P & ::= & \lb k.M
		\end{array}
		$$
		$$
		\begin{array}{rrcll}
			(\sigmav) & (\lb x.M)V & \to & \sub{V}{x}{M}\\
			(\betav)  & (\lb xk.M)WK & \to & \sub{K}{k}{\sub{W}{x}{M}}\\
			(\etak)  & \lb x.Kx & \to & K & \text{if $x\notin FV(K)$}
		\end{array}
		$$
		$$\text{Types:} \qquad \ac ::= a \,|\, \ac\imp\bc \qquad \bc ::= \neg\neg \ac$$
		$$\text{Contexts $\Gamma$: sets of declarations $(x:\ac)$} $$
		$$\text{Sequents:} \quad \seqCPS{k:\neg\ac,\Gamma}{M}{\perp} \quad \seqCPS{k:\neg\ac,\Gamma}{K}{\neg\ac'} \quad \seqCPS{\Gamma}{V}{\ac} \quad \seqCPS{\Gamma}{P}{\bc}
		$$

		$$
		\begin{array}{c}
		\infer{\seqCPS{k:\neg\ac,\Gamma}{KV}{\perp}}{\seqCPS{k:\neg\ac,\Gamma}{K}{\neg\ac'}&\seqCPS{\Gamma}{V}{\ac'}}
	\\ \\
		\infer{\seqCPS{k:\neg\ac'',\Gamma}{VWK}{\perp}}{\seqCPS{\Gamma}{V}{\ac\imp\neg\neg\ac'}&\seqCPS{\Gamma}{W}{\ac}&\seqCPS{k:\neg\ac'',\Gamma}{K}{\neg\ac'}}\\ \\
		\infer{\seqCPS{k:\neg\ac,\Gamma}{\lb x.M}{\neg\ac'}}{\seqCPS{k:\neg\ac,\Gamma,x:\ac'}{M}{\perp}}
		\qquad
		\infer{\seqCPS{k:\neg\ac,\Gamma}{k}{\neg\ac}}{}
		\\ \\
		\infer{\seqCPS{\Gamma}{\lb x.P}{\ac\imp\bc}}{\seqCPS{\Gamma,x:\ac}{P}{\bc}}
		\qquad		
		\infer{\seqCPS{\Gamma,x:\ac}{x}{\ac}}{}
		\\ \\
		\infer{\seqCPS{\Gamma}{\lb k.M}{\neg\neg\ac}}{\seqCPS{k:\neg\ac,\Gamma}{M}{\perp}}
		\end{array}
		$$
	\caption{The system $\RCPS$}
	\label{tab:RCPS}
\end{table*}
The CPS-translation is defined in Table \ref{tab:cps-translation}. It comprises: For each $V\in\lbc$, a value $\cpsv{V}$; for each term $M\in\lbc$ and continuation $K\in\RCPS$, a command $\cpsc{M}{K}$; for each term $M\in\lbc$, a command $\cpsk{M}$ and a term $\cpst{M}$.

In the typed setting, each simple type $A$ of $\lbc$ determines an $\ac$-type $\cpsv{A}$ and a $\bc$-type $\cpst{A}$, as in Table \ref{tab:cps-translation}. The translation preserves typing, according to the admissible typing rules displayed in the last row of the same table.

\begin{table*}
		\centering
		$$
		\begin{array}{rclcrcll}
			\cpsv x & = & x &\qquad&  \cpsc V K& = & K\cpsv V\\
			\cpsv{(\lb x.M)} & = & \lb x.\cpst M && \cpsc{PQ}{K}& = & \cpsc{P}{\lb m.\cpsc{mQ}{K}} & (a)\\
			\cpst M & = & \lb k.\cpsk M && \cpsc{VQ}{K}& = & \cpsc{Q}{\lb n.\cpsc{Vn}{K}} & (b)\\
			\cpsk M & = & \cpsc M k & & \cpsc{VW}{K} & = & \cpsv{V}\cpsv{W} K\\
			& &  && \cpsc{\lt{y}{M}{P}}{K} & = & \cpsc{M}{\lb y.\cpsc{P}{K}}
		\end{array}
		$$
%
%
		$$
		\cpst{A}=\neg\neg\cpsv{A} \qquad \cpsv{a}=a \qquad \cpsv{(A\imp B)}=\cpsv{A}\imp\cpst{B}
		$$
		$$
		\begin{array}{cc}
			\infer{\seqCPS{\cpsv{\Gamma}}{\cpsv{V}}{A}}{\seqC{\Gamma}{V}{A}}
			&
			\infer{\seqCPS{k:\neg\cpsv{B},\cpsv{\Gamma}}{\cpsc{M}{K}}{\perp}}{\seqC{\Gamma}{M}{A}&\seqCPS{k:\neg\cpsv{B},\Gamma}{K}{\neg\cpsv{A}}}
			\\ \\
			\infer{\seqCPS{k:\neg\cpsv{A},\cpsv{\Gamma}}{\cpsk{M}}{\perp}}{\seqC{\Gamma}{M}{A}}
			&
			\infer{\seqCPS{\cpsv{\Gamma}}{\cpst{M}}{\cpst{A}}}{\seqC{\Gamma}{M}{A}}
		\end{array}
		$$
	\caption{The CPS-translation, from $\lbc$ to $\RCPS$, with admissible typing rules. Provisos: $(a)$ $P$ is not a value. $(b)$ $Q$ is not a value.}
	\label{tab:cps-translation}
\end{table*}

\section{Sequent calculus $\LJQ$ and its simplification $\lbq$}\label{sec:LJQ}

In this section we start by recapitulating the term calculus for $\LJQ$ designed by Dyckhoff-Lengrand \cite{DyckhoffLengrandJLC07}. Next we do some preliminary work, by proposing a simplified variant, named $\lbq$, more appropriate for our purposes in this paper. Finally, we also single out the \emph{kernel} of $\lbq$, which is the sub-calculus of ``administrative'' normal forms. This further simplification will be necessary for the later analysis of CPS.

\textbf{The original term calculus.} An abridged presentation of the original term calculus for $\LJQ$ by Dyckhoff-Lengrand is found in Table \ref{tab:abridged-original-lQ} \footnote{See Appendix \ref{sec:original} for the full system.}. The separation between terms and values corresponds to the separation between the two kinds of sequents handled by $\LJQ$: the ordinary sequents $\seq{\Gamma}{M}{A}$ and the focused sequents $\seqv{\Gamma}{V}{A}$. There are three forms of cut and the reduction rules correspond to cut-elimination rules. We may think of the forms $\CVW{V}{x}{W}$ and $\CVN{V}{x}{N}$ as explicit substitutions: in this abridged presentation we omitted the rules for their stepwise execution.

\begin{table*}	
		
		$$\begin{array}{r r c l}
		 	 \text{(terms)} & M, N &::= &\rt{V}  \, | \,  \li{x}{V}{y}{N} \, | \, \CVN{V}{x}{N} \, | \,  \CMN{M}{x}{N} \\
			 \text{(values)} & V,W &::=& x \, | \, \lb x.M \, | \, \CVW{V}{x}{W}
		\end{array}$$
	
		$$ \begin{array}{rrcll}
		(1)	& \CMN{\rt{(\lb x.M)}}{y}{\li{y}{V}{z}{N}} &\to&  \CMN{\CMN{\rt{V}}{x}{M}}{z}{N} & (a) \\
		(2)	& \CMN{\rt{x}}{y}{N} &\to & \sub{x}{y}{N}  \\
		(3)	& \CMN{M}{x}{\rt{x}} &\to & M   \\
		(4)	& \CMN{\li{z}{V}{y}{P}}{x}{N} &\to & \li{z}{V}{y}{\CMN{P}{x}{N}} \\
		(5)	& \CMN{\CMN{\rt{W}}{y}{\li{y}{V}{z}{P}}}{x}{N} &\to & \CMN{\rt{W}}{y}{\li{y}{V}{z}{\CMN{P}{x}{N}}} & (b) \\
		(6)	& \CMN{\CMN{M}{y}{P}}{x}{N} &\to &\CMN{M}{y}{\CMN{P}{x}{N}} & (c)\\
		(7)	& \CMN{\rt({\lb x.M})}{y}{N} &\to & \CVN{\lb x.M}{y}{N} & (d)\\
		\end{array}$$
%
%
		$$\begin{array}{c c}
			\infer[Ax]{\seqv{\Gamma, x:A}{x}{A}}{} & \infer[Der]{\seq{\Gamma}{\rt{V}}{A}}{\seqv{\Gamma}{V}{A}}\\
			\\
			\infer[R\!\imp]{\seqv{\Gamma}{\lb x.M}{A \imp B}}{\seq{\Gamma, x:A}{M}{B}} &  \infer[Cut_3]{\seq{\Gamma}{\CMN{M}{x}{N}}{B}}{\seq{\Gamma}{M}{A} & \seq{\Gamma, x:A}{N}{B}}\\
			\\
			\multicolumn{2}{c}{\infer[L\!\imp]{\seq{\Gamma, x: A \imp B}{\li{x}{V}{y}{N}}{C}}{\seqv{\Gamma, x:A \imp B}{V}{A} & \seq{\Gamma, x:A \imp B, y:B}{N}{C}}} 
		\end{array}$$
		
	\caption{The original calculus by Dyckhoff-Lengrand, here named $\lbqo$-calculus (abridged). Provisos: $(a)$ $y \notin FV(V) \cup FV(N)$. $(b)$ $y \notin FV(V) \cup FV(P))$. $(c)$ If rule (5) does not apply. $(d)$ If rule (1) does not apply.}
	\label{tab:abridged-original-lQ}
\end{table*}

We now introduce a slight modification of $\lbqo$, named $\lbqmo$, determined by two changes in the reduction rules: in rule (6) we omit the proviso; and rule (5) is dropped. A former redex of (5) is reduced by (6) -- now possible because there is no proviso -- followed by (4), achieving the same effect as previous rule (5).

In fact, very soon we will define a \emph{big} modification and simplification of the original $\lbqo$, which is more appropriate to our goals here. But we need to justify that big modification, by a comparison with the original system. For the purpose of this comparison, we will use, not $\lbqo$, but $\lbqmo$ instead. So, the first thing we do is to check that $\lbqmo$ has the same properties as the original.

The maps between $\lbc$ and $\lbqo$ defined by Dyckhoff-Lengrand can be seen as maps to and from $\lbqmo$ instead. Next, it is easy to see that such maps still establish an equational correspondence, now  between $\lbc$ and $\lbqmo$. It turns out that the correspondence is also a pre-Galois connection from $\lbqmo$ to $\lbc$. Because of this, $\lbqmo$ inherits confluence of $\lbc$, as $\lbqo$ did. 


%
%
%

\textbf{A simplified calculus.} 
We now define the announced simplified calculus, named $\lbq$. It is presented in Table \ref{tab:simplified-lQ}. The idea is to drop the cut forms $\CVW{V}{x}{W}$ and $\CVN{V}{x}{N}$, which correspond to explicit substitutions. Since only one form of cut remain, $\CMN{M}{x}{N}$, we write it as $\Cut MxN$. The typing rules of the surviving constructors remain the same. The omitted reduction rules for the stepwise execution of substitution are now dropped, since they concerned the omitted forms of cut. Rules (1) and (3) are renamed as $\Bv$ and $\etact$, respectively. Rules (4) and (6) are renamed $\pi_1$ and $\pi_2$, respectively, and we let $\pi:=\pi_1\cup\pi_2$. Rules (2) and (7) are combined into a single rule named $\sigmav$.

\begin{table*}
		\centering
		
		$$\begin{array}{rrcl}
			\text{(terms)}& M, N &::= &\rt{V}  \, | \,  \li{x}{V}{y}{N}  \, | \,  \Cut{M}{x}{N} \\
			\text{(values)}& V,W &::= &x \, | \, \lb x.M
		\end{array}$$		
%
		$$\begin{array}{rrcll}
			(\Bv) & \Cut{\rt{(\lb x.M)}}{y}{\li{y}{V}{z}{N}} & \to & \Cut{\Cut{\rt{V}}{x}{M}}{z}{N} & \text{if $y \notin FV(V) \cup FV(N)$} \\
			(\sigmav) & \Cut{\rt{V}}{y}{N} & \to & \sub{V}{y}{N} & \text{if $\Bv$ does not apply}\\
			(\etact) & \Cut{M}{x}{\rt{x}} & \to & M   \\
			(\pi_1) & \Cut{\li{z}{V}{y}{P}}{x}{N} & \to & \li{z}{V}{y}{\Cut{P}{x}{N}} \\
		    (\pi_2) & \Cut{\Cut{M}{y}{P}}{x}{N}  & \to & \Cut{M}{y}{\Cut{P}{x}{N}}
		\end{array}$$
	\caption{The simpified $\lbqo$-calculus, named $\lbq$-calculus}
	\label{tab:simplified-lQ}
\end{table*}

The design of rule $\sigmav$ is interesting. Rule (2) fired a variable substitution operation $\sub{x}{y}{-}$, already present in the original calculus. The contractum of rule (7), being an explicit substitution, has to be replaced by the call to an appropriate, \emph{implicit}, substitution operator $\sub{\lb x.M}{y}{-}$, whose stepwise execution should be coherent with the omitted reduction rules for $\CVW{V}{x}{W}$ and $\CVN{V}{x}{N}$. Hopefully, the sought operation and the already present variable substitution operation are subsumed by a value substitution operation $\sub{V}{y}{-}$.

The critical clause is the definition of $\sub{V}{y}{(\li{y}{W}{z}{P})}$. We adopt
$\sub{V}{y}{(\li{y}{W}{z}{P})}=\Cut{\rt{V}}{y}{\li{y}{\sub{V}{y}{W}}{z}{\sub{V}{y}{P}}}$ 
in the case $V=\lb x.M$, but not in the case of $V=x$, because $\sigmav$ would immediately generate a cycle in the case $y\notin FV(V)\cup FV(N)$. We adopt instead
$\sub{x}{y}{(\li{y}{W}{z}{P})}=\li{x}{\sub{x}{y}{W}}{z}{\sub{x}{y}{P}}$ 
which moreover is what the original calculus dictates. Notice that another cycle would arise, if a $\Bv$-redex was contracted by $\sigmav$. But this is blocked by the proviso of the latter rule.

There is a map $\smp{(\_)}:\lbqmo\to\lbq$, based on the idea of translating the omitted cuts by calls to substitution: $\CVW{V}{x}{W}$ is mapped to $\sub{V}{x}{W}$ and $\CVN{V}{y}{N}$ is mapped to $\sub{V}{x}{N}$. This map, together with the inclusion $\lbq\subset\lbqmo$ (seeing $\Cut{M}{x}{N}$ as $\CMN{M}{x}{N}$) gives a reflection of $\lbq$ in $\lbqmo$. This reflection allows to conclude easily that reduction in $\lbqmo$ is conservative over reduction in $\lbq$. Moreover, this reflection can be composed with the equational correspondence between $\lbc$ and $\lbqmo$ to produce an equational correspondence between $\lbc$ and $\lbq$. Finally, this reflection is also a pre-Galois connection from $\lbq$ to $\lbqmo$. Thus, confluence of $\lbq$ can be pulled back from the confluence of $\lbqmo$.
To sum up, we obtained a more manageable calculus, conservatively extended by the original one, which, as the latter, is confluent and is in equational correspondence with $\lbc$.	

\textbf{The kernel of the simplified calculus.} For a moment, we do an analogy between $\lbc$ and $\lbq$. As was recalled in Section \ref{sec:back}, the former system admits a \emph{kernel}, a subsystem of ``administrative'' normal forms, which are the normal forms with respect to a subset of the set of reduction rules \cite{SabryWadler97}. For $\lbq$, the ``administrative'' normal forms are very easy to characterize: in a cut $\Cut{M}{x}{N}$, $M$ has to be of the form $\rt{V}$. Logically, this means that the left premiss of the cut comes from a sequent $\seqv{\Gamma}{V}{A}$; given that such sequents are obtained either with $Ax$ or $R\!\imp$, the cut formula $A$ in that premiss is not a passive formula of the previous inference; hence the cut is fully permuted to the left -- so we call such forms \emph{left normal forms}. The reduction rules of $\lbq$ which perform left permutation are rules $\pi_1$ and $\pi_2$ (even though textually the outer cut in the redex of those rules seems to move to the right after the reduction), so these rules are declared ``administrative''.

The kernel of $\lbq$ is named $\LNF$. The specific form of cut allowed, namely $\Cut{\rt{V}}{x}{N}$, is written $\Cutv{V}{x}{N}$. No other change is made to the grammar of terms. Given $M,N\in\LNF$, the general form of cut becomes in $\LNF$ a derived constructor written $\Cutvc{M}{z}{N}$ and defined by recursion on $M$ as follows: 
$$
\begin{array}{rcl}
\Cutvc{\rt{V}}{z}{N}&=&\Cutv{V}{z}{N}\\
\Cutvc{\li{x}{V}{y}{M}}{z}{N}&=&\li{x}{V}{y}{\Cutvc{M}{z}{N}}\\
\Cutvc{\Cutv{V}{y}{M}}{z}{N}&=&\Cutv{V}{y}{\Cutvc{M}{z}{N}}
\end{array}
$$
As to reduction rules, rule $\Bv$ in $\LNF$ reads 
$$\Cutv{\lb x.M}{y}{\li{y}{V}{z}{N}} \to \Cutv{V}{x}{\Cutvc{M}{z}{N}}\enspace.$$
Notice that the contractum is the same as $\Cutvc{\Cutvc{\rt{V}}{x}{M}}{z}{N}$. The proviso remains the same: $y \notin FV(V) \cup FV(N)$.
As to the other reduction rules: there is no change to rule $\sigmav$; the specific form of rule $\etact$ that survives becomes a particular case of $\sigmav$, hence is omitted; and the system has no $\pi$-rules. 

%
%


There is a map $\knl{(\_)}:\lbq\to\LNF$ based on the idea of replacing $\Cut{M}{x}{N}$ by $\Cutvc{M}{x}{N}$. This map, together with the inclusion $\LNF\subset\lbq$ (seeing $\Cutv{V}{x}{N}$ as $\Cut{\rt V}{x}{N}$), gives a reflection in $\lbq$ of $\LNF$. Quite obviously, $M\redd_{\pi}\knl{M}$; in fact $\knl{M}$ is a $\pi$-normal form, as are all the expressions of $\LNF$.

$\LNF$ is a stepping stone in the way to the definition, in the next section, of the value-filling style fragment, which will be a central player in this paper.
%
%
%

\section{The value-filling style}\label{sec:VFS}

In this section we define the target language $\VFS$ (a fragment of $\LNF$) of a new compilation of $\lbc$, the value-filling style translation. Next we slightly modify the target $\RCPS$, and introduce the negative translation, mapping $\VFS$ to the modified $\RCPS$. Then we show that the CPS-translation is decomposed in terms of the alternative compilation and the negative translation; and that the negative translation is in fact an isomorphism.

\textbf{The sub-kernel of $\LJQ$.} We now define the \emph{sub-kernel} of $\lbq$, a language named $\VFS$ that will serve as a target language for compilation alternative to $\CPS$. Despite the simplicity of $\lbq$, there is still room for a huge simplification: to forbid the left-introduction constructor $\li{y}{W}{x}{M}$ to stand as a term on its own. However, we regret that, by that omission, that term cannot be used in a very particular situation: as the term $N$ in $\Cutv{V}{y}{N}$, when $y\notin FV(W)\cup FV(M)$. So, we keep that particular combination of cut and left-introduction as a separate form of cut. The result is presented in Table \ref{tab:sub-kernel-lQ}.

\begin{table*}
		\centering
		
		$$\begin{array}{rrcl}
			\text{(terms)} & M, N &::= & \rt{V} \, | \,  \Cutvfs{V}{c} \\
			\text{(values)}& V,W &::=& x \, | \, \lb x.M\\
			\text{(formal contexts)} & c &::= &x.M  \, | \, \garg{W}{x}{M} 
		\end{array}$$
%
		$$\begin{array}{rrcll}
			(\Bv) & \Cutvfs{\lb x.M}{\garg{V}{y}{N}} &  \to & \Cutv{V}{x}{\Cutvc{M}{y}{N}}\\
			(\sigmav) & \Cutv{V}{y}{N} & \to & \sub{V}{y}{N} & 
		\end{array}$$
%
	$$
	\begin{array}{ccc}
	\infer{\seq{\Gamma}{\Cutvfs{V}{c}}{B}}{\seqv{\Gamma}{V}{A}&\seqc{\Gamma}{A}{c}{B}}	
    &
    \infer{\seqc{\Gamma}{A}{x.M}{B}}{\seq{\Gamma,x:A}{M}{B}}
    &
    \infer{\seqc{\Gamma}{A\imp B}{\garg{W}{x}{M}}{C}}{\seqv{\Gamma}{W}{A}&\seq{\Gamma,x:B}{M}{C}}
    \end{array}
	$$		
	\caption{The sub-kernel of the $\lbq$, 
		named $\VFS$. Typing rules for $\rt{V}$, $x$ and $\lb x.M$ as before.}
	\label{tab:sub-kernel-lQ}
\end{table*}
In fact, we introduce a third syntactic class, that of \emph{formal contexts} -- this terminology will be justified later. Think of $\garg{W}{x}{M}$ as $y.\li{y}{W}{x}{M}$ with $y\notin FV(W)\cup FV(M)$. The new class allows us to account uniformly for the two possible forms of cut: $\Cutvfs{V}{c}$. The reduction rules of $\VFS$ are those of the kernel $\LNF$, restricted to the sub-kernel: pleasantly, the side conditions have vanished! Moreover, the operation $\sub{V}{y}{N}$ is now plain substitution.

There is, again, an auxiliary operation used in the contractum of $\Bv$. Cut $\Cutvfsc{M}{c'}$ and formal context $(c:c')$ are defined by simultaneous recursion on $M$ and $c$ as follows: 
$$
\begin{array}{rclcrcl}
\Cutvfsc{\rt V}{c'}&=&\Cutvfs{V}{c'}&&((x.M):c')&=&x.\Cutvfsc{M}{c'}\\
\Cutvfsc{\Cutvfs{V}{c}}{c'}&=&\Cutvfs{V}{(c:c')}&&(\garg{W}{x}{M}:c')&=&\garg{W}{x}{\Cutvfsc{M}{c'}}
\end{array}
$$

In the type system, a third form of sequents is added for the typing of formal contexts. We know the formula $A$ in $\seqv{\Gamma}{V}{A}$ is a focus \cite{DyckhoffLengrandJLC07}, but the formula $A$ in $\seqc{\Gamma}{A}{c}{B}$ is not, since it can simply be selected from the context $\Gamma$ in the typing rule for $x.M$.

We already know how to map $\VFS$ back to $\LNF$. How about the inverse direction? How do we compensate the omission of $\li{y}{W}{x}{M}$? The answer is: by the following expansion
\begin{equation}\label{eq:expansion-vfs}
	\li{y}{W}{x}{M}\leftarrow_{\sigmav}\Cutv{y}{z}{\li{z}{W}{x}{M}}=\Cutvfs{y}{\garg{W}{x}{M}}
\end{equation}

\textbf{The VFS-translation.} The system $\VFS$ is the target of a translation of $\lbc$ alternative to the CPS-translation, to be introduced now. The idea is to represent a term of $\lbc$, not as a command of $\CPS$ (in terms of a continuation that is called of passed), but rather as a cut of the sequent calculus $\VFS$, making use of ``formal contexts''. Later, we will give a detailed comparison with the CPS-translation, which will make sense of the terminology ``formal context'' and ``value-filling''; more importantly, the comparison will show that $\VFS$ and the translation into it is a style equivalent to CPS, but much simpler, in particular due to this very objective fact: there is no translation of types involved.

The VFS-translation is given in Table \ref{tab:vfs-translation}. It comprises: For each $V\in\lbc$, a value $\vfsv{V}$ in $\VFS$; for each $M\in\lbc$ and formal context $c\in\VFS$, a cut $\vfsc{M}{c}$ in $\VFS$; for each $M\in\lbc$, a cut $\vfst{M}$ in $\VFS$. Again: there is no translation of types.

\begin{table*}
	$$
	\begin{array}{rclcrcll}
		\vfsv x & = & x &\qquad&  \vfsc V{x.N}& = & \Cutvfs{\vfsv V}{x.N}\\
		\vfsv{(\lb x.M)} & = & \lb x.\vfst M && \vfsc{PQ}{x.N}& = & \vfsc{P}{m.\vfsc{mQ}{x.N}} & (*)\\
		& &  && \vfsc{VQ}{x.N}& = & \vfsc{Q}{n.\vfsc{Vn}{x.N}} & (**)\\
		\vfst M & = & \vfsc M{x.\rt{x}} & & \vfsc{VW}{x.N} & = & \Cutvfs{\vfsv{V}}{\garg{\vfsv{W}}{x}{N}}\\
		& &  && \vfsc{\lt{y}{M}{P}}{x.N} & = & \vfsc{M}{y.\vfsc{P}{x.N}}
	\end{array}
   $$
%
%
   $$
   \begin{array}{ccc}
   \infer{\seqv{\Gamma}{\vfsv{V}}{A}}{\seqC{\Gamma}{V}{A}}
   &
   \infer{\seq{\Gamma}{\vfsc{M}{c}}{B}}{\seqC{\Gamma}{M}{A}&\seqc{\Gamma}{A}{c}{B}}
   &
   \infer{\seq{\Gamma}{\vfst{M}}{A}}{\seqC{\Gamma}{M}{A}}
   \end{array}
   $$
	\caption{The VFS-translation, from $\lbc$ to $\VFS$. Provisos: $(*)$ $P$ is not a value. $(**)$ $Q$ is not a value.}
	\label{tab:vfs-translation}
\end{table*}

\begin{thm}[Simulation]
	\label{thm:simul-vfs} \quad
	\begin{enumerate}
		\item Let $R\in\{B,\ltv,\etalt\}$. If $M\red_R N$ in $\lbc$ then $\vfst{M}\redd\vfst{N}$ in $\VFS$.
		\item Let $R\in\{\ltmn,\ltvn,\assoc\}$. If $M\red_R N$ in $\lbc$ then $\vfst{M}\eq\vfst{N}$ in $\VFS$.
	\end{enumerate}
\end{thm}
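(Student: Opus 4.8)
The plan is to prove Theorem~\ref{thm:simul-vfs} as the $c = x.\rt{x}$ instance of a stronger simulation statement about the colon-style translation $\vfsc{-}{c}$, recalling that $\vfst{M} = \vfsc{M}{x.\rt{x}}$. Precisely, I would prove: for every formal context $c$ (of the form $x.N$, with $x$ fresh), (i) if $M \red_R N$ in $\lbc$ with $R \in \{B,\ltv,\etalt\}$ then $\vfsc{M}{c} \redd \vfsc{N}{c}$ in $\VFS$, and $\vfsv{V} \redd \vfsv{W}$ whenever $V \red_R W$ are values; and (ii) the same with $\eq$ (syntactic equality, up to $\alpha$) replacing $\redd$ for $R \in \{\ltmn,\ltvn,\assoc\}$. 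The proof is by induction on the position of the contracted redex. The non-root cases are routine, since reduction in $\lbc$ and in $\VFS$ is closed under all term/context formation: one just pushes the induction hypothesis through the relevant defining clause of the translation. The one subtlety there is that $\ltv$ and $\etalt$ may turn a non-value subterm into a value, so the clause chosen for translating that subterm can change across the step; this gap is closed by the expansion lemmas below. Hence the content lies in the base cases, where the redex sits at the root.

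Ahead of the base cases I would establish three lemmas, each by a direct induction on terms. The first is a \emph{substitution lemma}: $\vfsv{\sub{V}{x}{W}} = \sub{\vfsv{V}}{x}{\vfsv{W}}$, $\vfst{\sub{V}{x}{M}} = \sub{\vfsv{V}}{x}{\vfst{M}}$, and $\vfsc{\sub{V}{x}{M}}{c} = \sub{\vfsv{V}}{x}{\vfsc{M}{c}}$ when $x \notin FV(c)$. The second is a \emph{composition lemma} relating the translation to the append operation $\Cutvfsc{-}{-}$ of $\VFS$: $\Cutvfsc{\vfsc{M}{x.N}}{c'} = \vfsc{M}{x.\Cutvfsc{N}{c'}}$; combined with $\Cutvfs{z}{x.N} \red_{\sigmav} \sub{z}{x}{N}$, linearity (below), and $\alpha$-conversion, this yields the crucial corollary $\Cutvfsc{\vfst{M}}{c} \redd_{\sigmav} \vfsc{M}{c}$ --- this thin layer of administrative $\sigmav$-reduction is exactly what reconciles $\vfst{M}$, which ends in a spurious $\rt{x}$, with $\vfsc{M}{c}$. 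The third pair are \emph{expansion lemmas} mirroring clauses $(*)$ and $(**)$: $\vfsc{M}{y.\vfsc{yQ}{c}} \redd_{\sigmav} \vfsc{MQ}{c}$ and $\vfsc{Q}{y.\vfsc{Vy}{c}} \redd_{\sigmav} \vfsc{VQ}{c}$ hold for \emph{all} $M,Q,V$ (an equality when the outer clause already applies, otherwise one $\sigmav$-step, discharged by the substitution lemma). Finally I note that $c$ occurs linearly in $\vfsc{M}{c}$, so $c \redd c'$ implies $\vfsc{M}{c} \redd \vfsc{M}{c'}$.

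For part (ii) the base cases require essentially no computation: the let-clause of the translation is associative in exactly the way $\assoc$ demands (the two sides unfold to one and the same iterated $\vfsc{-}{-}$), and clauses $(*)$, $(**)$ are literally the translations of the right-hand sides of $\ltmn$ and $\ltvn$, so the two members agree up to $\alpha$-renaming the freshly introduced variable. For part (i) two base cases are easy. For $\ltv$ a single $\sigmav$-step suffices: $\vfsc{\lt{y}{V}{M}}{c} = \Cutv{\vfsv{V}}{y}{\vfsc{M}{c}} \red_{\sigmav} \sub{\vfsv{V}}{y}{\vfsc{M}{c}} = \vfsc{\sub{V}{y}{M}}{c}$, by the substitution lemma. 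For $\etalt$ a $\sigmav$-step inside the formal context does it: $\vfsc{\lt{y}{M}{y}}{c} = \vfsc{M}{y.\vfsc{y}{c}}$ and $\vfsc{y}{c} = \Cutv{y}{x}{N} \red_{\sigmav} \sub{y}{x}{N}$ for $c = x.N$, so by linearity and $\alpha$-conversion ($y.\sub{y}{x}{N} =_\alpha x.N$) we get $\vfsc{M}{y.\vfsc{y}{c}} \redd \vfsc{M}{c}$.

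The main obstacle is the base case of rule $B$: $(\lb z.M)N \red \lt{z}{N}{M}$, which I split according to whether $N$ is a value. In both subcases the translation of the redex exposes a $\Bv$-redex of $\VFS$, namely $\Cutvfs{\lb z.\vfst{M}}{\garg{V_0}{x}{N'}}$ with $c = x.N'$ --- with $V_0 = \vfsv{N}$ and at the top level when $N$ is a value (clause $VW$), or $V_0$ a fresh variable sitting inside the formal context produced by clause $(**)$ when $N$ is not --- and firing $\Bv$ produces a term of shape $\Cutv{V_0}{z}{\Cutvc{\vfst{M}}{x}{N'}}$. From here the composition corollary turns $\Cutvc{\vfst{M}}{x}{N'}$ into $\vfsc{M}{x.N'}$ modulo $\redd_{\sigmav}$; when $N$ is a value, the resulting $\Cutv{\vfsv{N}}{z}{\vfsc{M}{x.N'}}$ is already $\vfsc{\lt{z}{N}{M}}{c}$ (unfold the let- and value-clauses), and we are done; when $N$ is not a value, one more $\sigmav$-step plus the substitution lemma turn the body into $\vfsc{\sub{V_0}{z}{M}}{x.N'}$, and then an $\alpha$-renaming of the administrative variable ($z.\sub{V_0}{z}{\vfsc{M}{x.N'}} =_\alpha z.\vfsc{M}{x.N'}$, since $V_0$ is fresh) together with the linearity fact reassemble the whole term as $\vfsc{N}{z.\vfsc{M}{c}} = \vfsc{\lt{z}{N}{M}}{c}$. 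So the $B$-case is bookkeeping with the three lemmas; the delicate point is tracking exactly which administrative $\sigmav$-steps and which $\alpha$-renamings have been used, so that one lands on $\vfsc{\lt{z}{N}{M}}{c}$ itself and not on a proper reduct of it. Every remaining case is mechanical unfolding plus at most one $\sigmav$-step.
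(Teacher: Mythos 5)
Your proposal is correct, and it reaches the same base-case computations as the paper, but it is organized around a different (slightly more general) decomposition. The paper proves the statement directly for $\vfst{(\_)}=\vfsc{\_}{z.\rt z}$, using three lemmas: the substitution lemma (Lemma~\ref{lem:vfs-subs}), the single-step in-context reduction lemma (Lemma~\ref{lem:vfs-red}, your ``linearity'' fact), and the identity-continuation lemma $\Cutvfsc{M}{z.\rt{z}}\redd M$ (Lemma~\ref{lem:vfs-id}), which is what absorbs the spurious $\rt z$ after a $\Bv$-step; the congruence cases are then dismissed with a one-line appeal to compatibility. You instead strengthen the statement to arbitrary formal contexts $c=x.N$ and replace the identity lemma by the composition law $\Cutvfsc{\vfsc{M}{x.N}}{c'}=\vfsc{M}{x.\Cutvfsc{N}{c'}}$ together with its corollary $\Cutvfsc{\vfst{M}}{c}\redd_{\sigmav}\vfsc{M}{c}$ (of which the paper's Lemma~\ref{lem:vfs-id} is essentially the $c=z.\rt z$ instance), and you add the two expansion lemmas to repair the change of translation clause when $\ltv$ or $\etalt$ turns a non-value subterm into a value. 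What each buys: the paper's route is shorter and never needs to mention contexts other than $z.\rt z$; your route makes the induction hypothesis directly applicable under clauses $(*)$ and $(**)$ and makes explicit both the administrative $\sigmav$-steps and the value/non-value subtlety that the paper's ``the inductive cases are easy, because $\redd$ is compatible'' leaves implicit, at the cost of two extra (easy) lemmas. Your root-case computations for $B$, $\ltv$, $\etalt$ and the three administrative rules match the paper's, modulo a harmless slip in the $B$-case where the $\alpha$-renaming should read $n.\sub{n}{z}{\vfsc{M}{x.N'}}=_{\alpha}z.\vfsc{M}{x.N'}$ for the fresh administrative variable $n$ (your $V_0$), not $z.\sub{V_0}{z}{\cdots}$.
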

%

\textbf{The language $\CPS$.} Recall the CPS-translation of $\lbc$, given in Table \ref{tab:cps-translation}, with target system $\RCPS$, given in Table \ref{tab:RCPS}, our own reworking of Reynold's translation and respective target \cite{DyckhoffLengrandJLC07}. We now introduce a tiny modification in the CPS-translation, an $\eta$-expansion of $k$ in the definition of $\cpsk M$: $\cpsk M  =  \cpsc M {\lb x.kx}$. 
This requires a slight modification of the target system. First, the grammar of commands and continuations becomes:
$$
\begin{array}{rrclcrrcl}
	\text{(Commands)} & M,N & ::= & kV \,|\, KV \,|\, VWK&\qquad&
	\text{(Continuations)} & K & ::= & \lb x.M
\end{array}
$$
The continuation variable $k$ is no longer by itself a continuation -- but nothing is lost with respect to $\RCPS$, since $k$ may be expanded thus:
\begin{equation}\label{eq:expansion-cps}
k \leftarrow_{\etak} \lb x.kx
\end{equation}

Since $K$ is now necessarily a $\lb$-abstraction, the $\etak$-reduction $\lb x.Kx\to K$ of $\RCPS$ becomes a $\sigmav$-reduction in the modified target, and so the latter system has no rule $\etak$.

We do a further modification to the reduction rules: instead of following \cite{SabryWadler97} and having rule $\betav$, we prefer that the modified target system has the rule $(\lb xk.M)WK \to (\lb x.\sub{K}{k}{M})W$, named $\Bv$.
That is, we substitute $K$, but not $W$.\footnote{We could have made this modification in Table \ref{tab:RCPS}, without any change to our results. The only thing to observe is that, if we want $\RCPS$ (or its modification) to consists of syntax that is derivable from the ordinary $\lb$-calculus or Plotkin's call-by-value $\lb$-calculus, then we have to consider these systems equipped with the well-known permutation $(\lb x.M)VV'\red(\lb x.MV')V$.} The new contractum is a $\sigmav$-redex, that can be immediately reduced to produce the effect of $\RCPS$'s rule $\betav$.

In the typed case, the typing rule for $k$ is replaced by this one:
$$
\infer{\seqCPS{k:\neg\ac,\Gamma}{kV}{\perp}}{\seqCPS{\Gamma}{V}{\ac}}
$$
No other modification is introduced w.~r.~t.~Table \ref{tab:RCPS}. The obtained system is named $\CPS$. 

For the modified CPS-translation, we reuse the notation $\cpst{M}$, $\cpsk{M}$, $\cpsv{V}$ and $\cpsc{M}{K}$. From now on, ``CPS-translation'' refers to the modified one, while the original one will be called \uCPS-translation.

In $\RCPS$, $k$ is a fixed continuation variable. In $\CPS$, $k$ is a fixed \emph{covariable}, again occurring exactly once in each command and continuation. The word ``covariable'' intends to be reminiscent of the covariables, or ``names'', of the $\lb\mu$-calculus \cite{ParigotLPAR92}. Accordingly, $kV$ is intended to be reminiscent of the naming constructor of that calculus, and some ``structural substitution'' should be definable in $\CPS$. 

Indeed, consider the following notion of \emph{context} for $\CPS$: $\CC ::= K\ehole \,|\, \ehole WK$. 
Filling the hole $\ehole$ of $\CC$ with $V$ results in the command $\CC\fhole{V}$. Then, we can define the structural substitution operation $\sub{\CC}{k}{-}$ whose critical clause is $\sub{\CC}{k}{(kV)}=\CC\fhole{V}$. 
There is no need to recursively apply the operation to $V$, since $k\notin FV(V)$.

Now in the case $\CC=K\ehole$, the structural substitution $\sub{\CC}{k}{-}$ is the same operation as the ordinary substitution $\sub{K}{k}{-}$, and it turns out that we will only need this case of substitution. That is why we will not see the structural substitution anymore in this paper. 

However, contexts $\CC$ will be crucial for understanding the relationship between $\VFS$ and $\CPS$. In preparation for that, we derive typing rules for contexts of $\CPS$. The corresponding sequents are of the form $\seqCPS{\Gamma|A}{\CC}{\perp}$, where $A$ is the type of the hole of $\CC$. Hence, the command $\CC\fhole{V}$ is typed as follows:
$$
\infer[\CC 1]{\seqCPS{\Gamma}{\CC\fhole{V}}{\perp}}{\seqCPS{\Gamma}{V}{A} & \seqCPS{\Gamma | A}{\CC}{\perp}}
$$
The rules for typing $\CC$ are obtained from the rules for typing $KV$ and $VWK$ in Table \ref{tab:RCPS}, erasing the premise relative to $V$ and declaring $V$'s type as the type of the hole of $\CC$:
$$
\begin{array}{ccc}		
\infer[\CC 2]{\seqCPS{k:\neg\ac,\Gamma|\ac'}{K\ehole}{\perp}}{\seqCPS{k:\neg\ac,\Gamma}{K}{\neg\ac'}}
&\qquad&
\infer[\CC 3]{\seqCPS{k:\neg\ac'',\Gamma|\ac\imp\neg\neg\ac'}{\ehole WK}{\perp}}{\seqCPS{\Gamma}{W}{\ac}&\seqCPS{k:\neg\ac'',\Gamma}{K}{\neg\ac'}}
\end{array}
$$
We also observe that $K_{\CC}:=\lb z.\CC\fhole{z}$ is a continuation, and that $K_{\CC}V\to_{\sigmav}\CC\fhole{V}$ in $\CPS$.

\textbf{VFS vs CPS: the negative translation.} We now see that the CPS-translation can be decomposed as the VFS-translation followed by a \emph{negative} translation of system $VFS$. This latter translation is a CPS-translation, hence involving, at the level of types, the introduction of double negations (hence the name ``negative''). It turns out that this negative translation is an isomorphism between $VFS$ and $\CPS$, at the levels of proofs and proof reduction. This renders the last stage of translation (the negative stage) and its style of representation (the CPS style) an optional addition to what is already achieved with VFS.

The negative translation is found in Table \ref{tab:neg-translation}. It comprises: For each $V\in\VFS$, a value $\ngv{V}$ in $\CPS$; for each $M\in\VFS$, a command $\angt{M}$ and a term $\ngt{M}$ in $\CPS$.

The translation has a typed version, mapping between the typed version of source and target calculi. This requires a translation of types: for each simple type $A$ of $\VFS$, there is an $\ac$-type $\ngv{A}$ and a $\bc$-type $\ngt{A}$, as defined in Table \ref{tab:neg-translation}. The translation preserves typing, according to the admissible rules displayed in the last row of the same table.

\begin{table*}
		\centering
		$$
		\begin{array}{rclcrcl}
			\ngv x & = & x &\qquad&  \angt {(\rt{V})} & = & k\ngv V\\
			\ngv{(\lb x.M)} & = & \lb x. \ngt{M} && \angt{\Cutvfs{V}{x.M}}& = & (\lb x.\angt M)\ngv V \\
			\ngt M & = & \lb k.\angt M && \angt{\Cutvfs{V}{\garg{W}{x}{M}}}& = & \ngv V \ngv W (\lb x.\angt M)
		\end{array}
		$$
		$$
		\ngt{A}=\neg\neg\ngv{A} \qquad \ngv{a}=a \qquad \ngv{(A\imp B)}=\ngv{A}\imp\ngt{B}
		$$
		$$
		\begin{array}{cccc}
			\infer{\seqCPS{\ngv{\Gamma}}{\ngv{V}}{\ngv A}}{\seqv{\Gamma}{V}{A}}
			&
			\infer{\seqCPS{k:\neg\ngv{A},\ngv{\Gamma}}{\angt{M}}{\perp}}{\seq{\Gamma}{M}{A}}
			&
			\infer{\seqCPS{\ngv{\Gamma}}{\ngt{M}}{\ngt{A}}}{\seq{\Gamma}{M}{A}}
		\end{array}
		$$
	\caption{The negative translation, from $\VFS$ to $\CPS$, with admissible typing rules}
	\label{tab:neg-translation}
\end{table*}
The negative translation is defined at the level of terms and values. How about formal contexts? A formal context $c$ is translated as a context $\angc{c}$ of $\CPS$, defined as follows:
$$
\begin{array}{rclcrcl}
	\angc{(x.M)} &=& (\lb x.\angt{M})\ehole&\qquad\qquad&
	\angc{\garg{W}{x}{M}} &=& \ehole\ngv{W}(\lb x.\angt{M})
\end{array}
$$
Then the definition of $\angt{\Cutvfs{V}{c}}$ can be made uniform in $c$ as $\angc{c}\fhole{\ngv{V}}$. The translation of non-values $\angt{\Cutvfs{V}{c}}$ is thus defined as filling the (translation) of $V$ in the hole of the actual context $\angc{c}$ that translates the formal context $c$. Hence the name ``value-filling'' of the translation.

We have two admissible typing rules:
$$
\infer[(a)]{\seqCPS{k:\neg\ngv{B},\ngv{\Gamma}|\ngv{A}}{\angc{c}}{\perp}}{\seqc{\Gamma}{A}{c}{B}}\qquad \infer[(b)]{\seqCPS{k:\neg\ngv{B},\ngv{\Gamma}}{K_{\angc{c}}}{\neg\ngv{A}}}{\seqc{\Gamma}{A}{c}{B}}
$$
Rule (a) follows from typing rules $\CC 2$ and $\CC 3$; rule (b) is obtained from (a) and rule $\CC 1$.

It is no exaggeration to say that typing rule (b) is the heart of the negative translation. In the sequent calculus $\VFS$ we can single out a formula $A$ in the l.~h.~s.~of the sequent to act as the type of the hole of a (formal) context $c$. In $\CPS$, we have the related concept of a continuation $K$, a function of type $A\imp\perp$. The type $B$ of $c$ has to be stored as the negated type $\neg B$ of a special variable $k$. Cutting with $c$ in the sequent calculus corresponds to applying $K$, to obtain a command, of type $\perp$. But the cut produces a term of type $B$, while the best we can do in $\CPS$ is to abstract $k$, to obtain $\neg\neg B$. In the sequent calculus, a type $A$ may have uses in both sides of the sequent. To approximate this flexibility in $\CPS$, a type $A$ requires types $\ac$, $\neg\ac$, and $\neg\neg\ac=\bc$, presupposing $\perp$.


\begin{thm}[Decomposition of the CPS-translation]\label{thm:decomposition} \qquad
\begin{enumerate}
	\item For all $V\in\lbc$, $\ngv{{\vfsv{V}}}=\cpsv{V}$.
	\item For all $M\in\lbc$, $N\in\VFS$, $\angt{\vfsc{M}{x.N}}=\cpsc{M}{\lb x.\angt{N}}$.
	\item For all $M\in\lbc$, $\angt{{\vfst{M}}}=\cpsk{M}$.
	\item For all $M\in\lbc$, $\ngt{{\vfst{M}}}=\cpst{M}$.
\end{enumerate}	
\end{thm}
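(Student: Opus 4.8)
The plan is to establish the four equalities simultaneously by induction, with part~(2) carrying all the combinatorial weight and parts~(1),~(3),~(4) being short unfoldings that reduce to it. First I record the chain of reductions. Part~(1) is immediate for $V=x$, and for $V=\lb x.M$ we have $\ngv{\vfsv{(\lb x.M)}}=\ngv{(\lb x.\vfst M)}=\lb x.\ngt{\vfst M}$ while $\cpsv{(\lb x.M)}=\lb x.\cpst M$, so~(1) for such $V$ follows from~(4) for $M$. Part~(4) for $M$ follows from~(3) for $M$, since $\ngt{\vfst M}=\lb k.\angt{\vfst M}$ and $\cpst M=\lb k.\cpsk M$. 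Part~(3) for $M$ is exactly the instance $N:=\rt x$ of part~(2): indeed $\vfst M=\vfsc M{x.\rt x}$, so $\angt{\vfst M}=\angt{\vfsc M{x.\rt x}}$, and $\angt{(\rt x)}=k\ngv x=kx$, so part~(2) yields $\angt{\vfsc M{x.\rt x}}=\cpsc M{\lb x.kx}=\cpsk M$, using here the \emph{modified} CPS-translation and the fact that $\rt x\in\VFS$.

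The core is part~(2), which I prove by induction on the size $|M|$ of the source term, following the clauses defining $\vfsc{M}{c}$ in Table~\ref{tab:vfs-translation} and matching each with the corresponding clause for $\cpsc{M}{K}$ in Table~\ref{tab:cps-translation}. For $M=V$: $\angt{\vfsc V{x.N}}=\angt{\Cutvfs{\vfsv V}{x.N}}=(\lb x.\angt N)\,\ngv{\vfsv V}$, which by part~(1) equals $(\lb x.\angt N)\,\cpsv V=\cpsc V{\lb x.\angt N}$. For $M=VW$: $\angt{\vfsc{VW}{x.N}}=\ngv{\vfsv V}\,\ngv{\vfsv W}\,(\lb x.\angt N)$, which by part~(1) equals $\cpsv V\,\cpsv W\,(\lb x.\angt N)=\cpsc{VW}{\lb x.\angt N}$. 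For $M=\lt yPQ$: $\angt{\vfsc{\lt yPQ}{x.N}}=\angt{\vfsc P{y.\vfsc Q{x.N}}}$; by the induction hypothesis applied to the strictly smaller $P$ (with VFS-term $\vfsc Q{x.N}$ in the second argument) this is $\cpsc P{\lb y.\angt{\vfsc Q{x.N}}}$, and applying the induction hypothesis to $Q$ it equals $\cpsc P{\lb y.\cpsc Q{\lb x.\angt N}}=\cpsc{\lt yPQ}{\lb x.\angt N}$. For $M=PQ$ with $P$ not a value: $\angt{\vfsc{PQ}{x.N}}=\angt{\vfsc P{m.\vfsc{mQ}{x.N}}}=\cpsc P{\lb m.\angt{\vfsc{mQ}{x.N}}}$ (induction hypothesis on $P$) $=\cpsc P{\lb m.\cpsc{mQ}{\lb x.\angt N}}$ (induction hypothesis on $mQ$) $=\cpsc{PQ}{\lb x.\angt N}$; the case $M=VQ$ with $Q$ not a value is symmetric, using the induction hypothesis on $Q$ and on $Vn$.

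The one point requiring care --- and the main, if mild, obstacle --- is the well-foundedness of this induction: in the two application clauses the translations recur on the auxiliary terms $mQ$ and $Vn$, which are \emph{not} subterms of $M$ (the classical Plotkin phenomenon for optimised CPS-translations). It is resolved by measuring on size: a non-value is never a variable, hence has size strictly greater than $1$, so $|mQ|<|PQ|$ whenever $P$ is not a value and $|Vn|<|VQ|$ whenever $Q$ is not a value; thus every recursive appeal to part~(2) is at strictly smaller size. Finally, one checks that the bundle~(1)--(4) is a legitimate simultaneous induction by verifying that the dependency graph within a fixed size $n$ is acyclic: at size $n$, part~(2) may call part~(1) at size $n$ (the value case) and parts~(1)--(4) at sizes $<n$; part~(1) at size $n$ calls only part~(4) at size $<n$ (the abstraction case); part~(3) at size $n$ calls part~(2) at size $n$; and part~(4) at size $n$ calls part~(3) at size $n$. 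With this organisation, the remaining work is the routine clause-by-clause matching displayed above.
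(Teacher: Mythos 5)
Your proof is correct and follows essentially the same route as the paper's: items 3 and 4 are reduced to item 2 (instantiating $N:=\rt{x}$ and then abstracting $k$), and items 1 and 2 are established together by induction with the same clause-by-clause matching of the VFS- and CPS-translation equations. The only difference is that you make the induction measure (term size) explicit to legitimise the recursive appeals to $mQ$ and $Vn$, a well-foundedness point that the paper's ``simultaneous induction on $V$ and $M$'' leaves implicit.
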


Nothing is lost, if we wish to replace $\CPS$ with $\VFS$, because the negative translation is an isomorphism. Its inverse translation comprises: For each term $P\in\CPS$, a term $\post{P}\in\VFS$; for each command $M\in\CPS$, a term $\apos{M}\in\VFS$; for each value $V\in\CPS$, a value $\posv{V}\in\VFS$. 
The definition is as follows:
$$
\begin{array}{rcl}
\post{(\lb k.M)}&=&\apos M\\
\apos{(kV)}&=&\rt{(\posv{V})}\\
\apos{((\lb x.M)V)}&=&\Cutvfs{\posv{V}}{x.\apos{M}}\\
\apos{(VW(\lb x.M))}&=&\Cutvfs{\posv{V}}{\garg{\posv{W}}{x}{\apos{M}}}\\
\posv{x}&=&x\\
\posv{(\lb x.P)}&=&\lb x.\post{P}
\end{array}
$$
\begin{thm}[$\VFS\cong\CPS$]\label{thm:iso} \quad
	\begin{enumerate}
		\item For all $M,V\in\VFS$, $\post{{\ngt{M}}}=M$ and $\apos{{\angt{M}}}=M$ and $\posv{{\ngv{V}}}=V$.
		\item For all $P,M,V\in\CPS$, $\ngt{{\post{P}}}=P$ and $\angt{{\apos{M}}}=M$ and $\ngv{{\posv{V}}}=V$.
		\item If $M_1\to M_2$ in $\VFS$ then $\angt{M_1} \to\angt{M_2}$ in $\CPS$ (hence $\ngt{M_1}\to\ngt{M_2}$ in $\CPS$).
		\item If $M_1\to M_2$ in $\CPS$ then $\apos{M_1}\to\apos{M_2}$ in $\VFS$. Hence If $P_1\to P_2$ in $\CPS$ then $\post{P_1}\to\post{P_2}$ in $\VFS$.
	\end{enumerate}
\end{thm}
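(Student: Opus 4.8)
The statement asserts that the negative translation $\ngt{(\_)}$ (together with its restriction $\angt{(\_)}$ on commands and $\ngv{(\_)}$ on values) and the inverse translation $\post{(\_)}$ (with $\apos{(\_)}$ and $\posv{(\_)}$) form an isomorphism $\VFS\cong\CPS$. The plan is to establish the four clauses in order, the first two by structural induction and the last two by case analysis on the redex.

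\textbf{Clauses 1 and 2 (the two round-trips).} These are purely equational and follow by mutual structural induction on the syntactic categories of $\VFS$ and $\CPS$ respectively. For clause 1, I would prove simultaneously, by induction on $M\in\VFS$ and $V\in\VFS$, that $\apos{{\angt{M}}}=M$ and $\posv{{\ngv{V}}}=V$; the statement $\post{{\ngt{M}}}=M$ then follows because $\ngt{M}=\lb k.\angt{M}$ and $\post{(\lb k.N)}=\apos{N}$. The base cases ($V=x$, $M=\rt V$) are immediate. For $M=\Cutvfs{V}{x.N}$ we have $\angt{M}=(\lb x.\angt N)\ngv V$, so $\apos{{\angt M}}=\Cutvfs{\posv{{\ngv V}}}{x.\apos{{\angt N}}}$, which is $M$ by the IH on $V$ and $N$; similarly for $M=\Cutvfs{V}{\garg{W}{x}{N}}$ using the three-premise clause and the IH on $V$, $W$, $N$. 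For $V=\lb x.M$, $\posv{{\ngv V}}=\lb x.\post{{\ngt M}}=\lb x.\apos{{\angt M}}=\lb x.M$ by the IH. Clause 2 is the symmetric induction on the four syntactic categories of $\CPS$ (terms $P=\lb k.M$, commands $kV$ / $(\lb x.M)V$ / $VW(\lb x.M)$, values $x$ / $\lb x.P$), using the same clause-by-clause matching; note that the revised grammar of $\CPS$ forces every continuation $K$ to be a $\lb$-abstraction, which is exactly why $\apos{(\_)}$ is well defined only on the three command shapes listed and why the induction goes through without leftover cases.

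\textbf{Clauses 3 and 4 (reduction is preserved in both directions).} For clause 3 I would do a case analysis on which rule of $\VFS$ — $\Bv$ or $\sigmav$ — fires, plus a congruence argument. For the $\Bv$-case, $M_1=\Cutvfs{\lb x.M}{\garg{V}{y}{N}}\to\Cutv{V}{x}{\Cutvc{M}{y}{N}}=M_2$; computing $\angt{M_1}=\ngv{(\lb x.M)}\,\ngv V\,(\lb y.\angt N)=(\lb x k.\angt M)\,\ngv V\,(\lb y.\angt N)$, one $\Bv$-step of $\CPS$ yields $(\lb x.\sub{\lb y.\angt N}{k}{\angt M})\,\ngv V$, and the key sublemma is that $\sub{\lb y.\angt N}{k}{\angt M}=\angt{\Cutvc{M}{y}{N}}$ — i.e. the negative translation sends the $\VFS$-level append operation $\Cutvc{(\_)}{y}{N}$ to ordinary substitution of the continuation $\lb y.\angt N$ for the covariable $k$. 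This sublemma is proved by a straightforward induction on $M$, matching the recursive clauses of $\Cutvfsc{(\_)}{c}$ with the clauses of $\angt{(\_)}$ and the fact that $k$ occurs exactly once in $\angt{M}$. Granting it, $\angt{M_1}\to\Cutvfs{\ngv V}{x.\angt{\Cutvc{M}{y}{N}}}$-wait, more precisely $\to (\lb x.\angt{\Cutvc{M}{y}{N}})\ngv V=\angt{M_2}$, a single step. For the $\sigmav$-case, $M_1=\Cutv{V}{y}{N}\to\sub{V}{y}{N}=M_2$; here $\angt{M_1}=(\lb y.\angt N)\ngv V\to_{\sigmav}\sub{\ngv V}{y}{\angt N}$, and the needed sublemma is the substitution-commutation identity $\sub{\ngv V}{y}{\angt N}=\angt{\sub{V}{y}{N}}$ (and $\sub{\ngv V}{y}{\ngv W}=\ngv{\sub V y W}$), again by routine induction on $N$ and $W$. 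Congruence: reductions inside a subterm are handled by observing that $\angt{(\_)}$ and $\ngv{(\_)}$ are compositional, so a step in a subterm maps to a step in the corresponding subterm of the image. Clause 4 is the mirror image: case-split on the rule of $\CPS$ firing in $M_1\to M_2$ ($\sigmav$ or $\Bv$), apply $\apos{(\_)}$, and use the inverse substitution-commutation lemmas $\apos{\sub K k M}=\Cutvc{\apos M}{\cdot}{\cdot}$-style identities together with $\posv{\sub V x W}=\sub{\posv V}{x}{\posv W}$; since $\ngt{(\_)}$ and $\post{(\_)}$ are mutually inverse by clauses 1–2, clause 4 can alternatively be derived from clause 3 by transport, but proving it directly is equally short and avoids having to invert a possibly non-injective-looking step.

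\textbf{Main obstacle.} The only genuinely delicate point is the append/substitution sublemma $\sub{\lb y.\angt N}{k}{\angt M}=\angt{\Cutvc{M}{y}{N}}$ and its formal-context refinement $\angt{\Cutvfsc{M}{c}}=\angc{c}\fhole{\ldots}$-coherence, because here the two different ``glueing'' operations — append of cuts in $\VFS$ versus covariable substitution in $\CPS$ — must be shown to agree, and one has to be careful that $k$ is captured at exactly the right binder and occurs linearly so no duplication or loss happens. Everything else is bookkeeping: once this sublemma and the ordinary substitution-commutation lemmas are in place, clauses 3–4 are immediate, and clauses 1–2 are transparent structural inductions.
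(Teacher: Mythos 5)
Your proposal is correct and follows essentially the same route as the paper: clauses 1--2 by mutual structural induction, and clauses 3--4 by case analysis on the fired rule, using exactly the paper's two kinds of auxiliary lemmas — the substitution-commutation identities ($\angt{(\sub{V}{x}{M})}=\sub{\ngv V}{x}{\angt M}$ and its $\CPS$-side mirror) and the append/covariable-substitution lemma $\angt{\Cutvfsc{M}{y.N}}=\sub{\lb y.\angt N}{k}{\angt M}$ together with its inverse $\apos{(\sub{\lb y.N}{k}{M})}=\Cutvfsc{\apos M}{y.\apos N}$. The only difference is cosmetic: the paper proves clause 4 directly (as you also elect to do) rather than by transport along clauses 1--2.
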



\section{Back to direct style}\label{sec:bds}

We now do to the VFS-translation what \cite{FlanaganSabryDubaFelleisen93,SabryWadler97} did to the CPS-translation, that is, try to find a program transformation in the source language $\lbc$ that corresponds to the effect of the translation. We have seen in Section \ref{sec:VFS} that the VFS-translation identifies reduction steps generated by $\ltmn$, $\ltvn$ and $assoc$. So we start from the normal forms w.~r.~t.~these rules, that is, from the kernel $\ANF$ (recall Table \ref{tab:kernel-lC}). We first identify two sub-syntaxes relevant in this analysis. Next, we point out the proof-theoretical meaning of such alternative. 

\textbf{Two sub-kernels of $\ANF$.} It turns out that the syntax of $\ANF$, despite its simplicity, still contains several dilemmas: (1) Do we need a let-expression whose actual parameter is a value $V$? Or should we normalize with respect to $\ltv$? (2) Do we need $VW$ to stand alone as a term and also as the actual parameter of a let-expression? (3) Is $\etalt$ a reduction or an expansion? 
Some of these dilemmas give rise to the following diagram:
\begin{equation}\label{eq:expansion-lbc}
		\xymatrix{
			VW & \ar@<0ex>[l]_{\ltv}  \lt{x}{V}{xW}\\
			\lt{y}{VW}{y} \ar@<0ex>[u]^{\etalt}  &\ar@<0ex>[l]^{\!\!\!\!\!\!\! \ltv}  \ar@<0ex>[u]_{\etalt} 
			\lt{x}{V}{\underbrace{\lt{y}{xW}{y}}_{c_x}}}
\end{equation}

We take this diagram as giving, in its lower row, two different ways of expanding $VW$. These two alternatives signal two sub-syntaxes of $\ANF$ without $VW$. In the alternative corresponding to the expansion $\lt{y}{VW}{y}$, we are free to, additionally, normalize w.~r.~t.~$\ltv$ and get rid of the form $\lt{x}{V}{M}$. In the alternative $\lt{x}{V}{\lt{y}{xW}{y}}$, we are not free to normalize w.~r.~t.~$\ltv$, as otherwise we might reverse the intended expansions. 
In both cases, values are $V,W::= x\,|\,\lb x.M$. Moreover, we do not want to consider $\etalt$ as a reduction rule; and rule $\Bv'$ disappears, since there are no applications $VW$. 

In the first sub-kernel, named $\ces$, terms $M$ are given by the grammar
$$ M \, ::= \,  V \, |\, \lt{x}{VW}{M} \enspace.$$
We call this representation \emph{continuation enclosing} style, since the ``serious'' (=non-value) terms have the form of an application $VW$ enclosed in a let-expression. The unique reduction rule of $\ces$ is
$$
(\betav) \qquad \lt{y}{(\lb x.M)V}{P}  \red  \ltc{y}{\sub{V}{x}{M}}{P}
$$
In $\ANF$, it corresponds to a $\Bv$-step followed by $\ltv$-step. The operation $\ltc{y}{M}{P}$ of $\ANF$ is reused, except that the base case of its definition integrates a further $\ltv$-step: $\ltc{y}{V}{P}=\sub{V}{y}{P}$.

In the second sub-kernel, named $\VES$, terms are given by the grammar
$$
\begin{array}{rcl}
M,N	& ::= & V \,|\, \lt{x}{V}{c_x}\\
c_x &::=  & M\,|\,\lt{y}{xW}{N}, \text{ where $x\notin FV(W)\cup FV(N)$}
\end{array}
$$
We call this representation \emph{value enclosed} style, since the serious terms have the form of a value enclosed in a let-expression. There are two reduction rules:
$$
\begin{array}{rrcl}
(\Bv)& \lt{y}{(\lb x.M)}{\lt{z}{yV}{P}}  & \red & \lt{x}{V}{\ltc{z}{M}{P}}\\
(\ltv) & \lt{y}{V}{N} & \red & \sub{V}{y}{N}
\end{array}
$$

In $\VES$, we define $\ltc{y}{M}{P}$ and $\ltc{y}{c_z}{P}$, which are a term and an element of the class $c_z$, respectively, the latter satisfying $z\notin FV(P)$. The definition is by simultaneous recursion on $M$ and $c_z$ as follows:
$$
\begin{array}{rcl}
	\ltc{y}{V}{P} &=& \lt{y}{V}{P}\\
	\ltc{y}{(\lt{z}{V}{c_z})}{P} &=& \lt{z}{V}{\ltc{y}{c_z}{P}}\\
		\ltc{y}{(\lt{x}{zW}{N})}{P} &=& \lt{x}{zW}{\ltc{y}{N}{P}}
	\end{array}
$$
In the second equation, since in the l.~h.~s.~$P$ is not in the scope of the (inner) let-expression, we may assume $z\notin FV(P)$. So, the proviso for the call $\ltc{y}{c_z}{P}$ in the r.~h.~s.~is satisfied. In the third equation, $c_z$ in the l.~h.~s.~is $\lt{x}{zW}{N}$. By definition of $c_z$, $z\notin FV(W)\cup FV(N)$; moreover, we may assume $z\notin FV(P)$: hence the r.~h.~s.~is in $c_z$. 

Despite the trouble with variable conditions, this definition corresponds to the operator $\ltc{y}{M}{P}$ of $\ANF$ restricted to the syntax of $\VES$. Therefore, rule $\Bv$ of $\VES$ corresponds, in $\ANF$, to a $\ltv$-step followed by a $\Bv$-step.


\textbf{Proof-theoretical alternative.} We now see that $\VES$ is related to the sequent calculus $\VFS$, while $\ces$ is related to a fragment $\cnf$ of the call-by-value $\lb$-calculus with generalized applications $\lbjv$ introduced in \cite{jesCSL20}. In both cases, the relation is an isomorphism, in the sense of a type-preserving bijection with a 1-1 simulation of reduction steps.

\begin{thm}\label{thm:two-isomorphisms}$\VES\cong\VFS$ and 	$\ces\cong\cnf$.
\end{thm}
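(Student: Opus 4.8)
The plan is, for each of the two isomorphisms, to exhibit a pair of mutually inverse translations defined by simultaneous structural recursion on terms, values and (formal) contexts, and then to check, for each translation: that the two are mutually inverse, that they preserve typing judgements, and that they induce a step-for-step correspondence of reductions. Both parts follow the same recipe; the only genuine work is confined to importing the apparatus of \cite{jesCSL20} in the second part.

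For $\VES\cong\VFS$, the two grammars are already in near-perfect correspondence, so the translation $\Phi\colon\VES\to\VFS$ is essentially forced: it is the identity on variables, acts homomorphically on $\lb$-abstractions, sends a value $V$ used as a term to $\rt{\Phi(V)}$, sends $\lt{x}{V}{M}$ (when the continuation $c_x$ is a plain term) to $\Cutvfs{\Phi(V)}{x.\Phi(M)}$, and sends $\lt{x}{V}{(\lt{y}{xW}{N})}$ to $\Cutvfs{\Phi(V)}{\garg{\Phi(W)}{y}{\Phi(N)}}$ -- here the designated variable $x$, which by the proviso on $c_x$ occurs exactly once, in head position, plays the role of the fresh variable hidden inside $\garg{W}{y}{N}$ (recall it abbreviates $z.\li{z}{W}{y}{N}$, $z$ fresh). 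The inverse $\Psi$ reads these clauses backwards, choosing a fresh bound variable in the $\garg{W}{y}{N}$-clause; that $\Phi$ and $\Psi$ are mutually inverse is a routine induction whose only content is this freshness bookkeeping. Type preservation follows by juxtaposing the $\VFS$-typing rules for $\Cutvfs{V}{c}$ and for the two formal-context formers (Table~\ref{tab:sub-kernel-lQ}) with the $\lbc$-typing of $\lt{x}{V}{M}$ and of $\lt{x}{V}{(\lt{y}{xW}{N})}$: the focused sequent $\seqv{\Gamma}{V}{A}$ matches the declaration $x:A$ created by the outer let, and the context-sequent $\seqc{\Gamma}{A}{c}{B}$ matches the typing of $c_x$ under $x:A$. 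The reduction correspondence rests on one lemma -- that the ``continuation-appending'' operators agree along $\Phi$, i.e. $\Phi(\ltc{y}{M}{P})=\Cutvfsc{\Phi(M)}{y.\Phi(P)}$, with the analogous statement for the $c_z$-clause -- which follows from the simultaneous recursion defining them, using that $\Phi$ commutes with substitution. Granting it, $\VES$'s $\ltv$-rule $\lt{y}{V}{N}\to\sub{V}{y}{N}$ maps onto $\VFS$'s $\sigmav$-rule $\Cutv{V}{y}{N}\to\sub{V}{y}{N}$ (note $\Cutvfs{V}{y.N}=\Cutv{V}{y}{N}$), and $\VES$'s $\Bv$-rule, reducing $\lt{z}{(\lb x.M)}{\lt{y}{zV}{N}}$ to $\lt{x}{V}{\ltc{y}{M}{N}}$, maps onto $\VFS$'s $\Bv$-rule, reducing $\Cutvfs{\lb x.\Phi(M)}{\garg{\Phi(V)}{y}{\Phi(N)}}$ to $\Cutv{\Phi(V)}{x}{\Cutvc{\Phi(M)}{y}{\Phi(N)}}$; by the lemma the contracta correspond. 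As $\Phi$ and $\Psi$ commute with every constructor they are congruences, so the correspondence lifts to all subterm positions, giving the claimed 1-1 simulation in both directions.

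For $\ces\cong\cnf$, recall that $\cnf$ is the fragment of $\lbjv$ \cite{jesCSL20} in which the head term and the argument of every generalized application are values; I would first recall from \cite{jesCSL20} its grammar, its typing rule for $\gapp{V}{W}{x}{M}$, its $\betav$-rule, and the operation that appends a continuation to a term (through which the commutative conversions are built into the $\betav$-rule, keeping its contractum in commutative normal form). The bijection is then transparent: the identity on variables, homomorphic on $\lb$-abstractions, reading $\lt{x}{VW}{M}$ as $\gapp{V}{W}{x}{M}$; it is a bijection by inspection of the two grammars, and type-preserving because the $\lbjv$-typing rule for $\gapp{V}{W}{x}{M}$ and the $\lbc$-typing of $\lt{x}{VW}{M}$ coincide premise for premise ($V:A\imp B$, $W:A$, $M:C$ under $x:B$). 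For reduction, $\ces$ has the single rule $\betav$, which reduces $\lt{y}{(\lb x.M)V}{P}$ to $\ltc{y}{\sub{V}{x}{M}}{P}$; I would check that this matches the $\betav$-step of $\lbjv$ on $\gapp{\lb x.M}{V}{y}{P}$ under the bijection, the crux being that the continuation-appending operation of $\lbjv$, restricted to $\cnf$, computes exactly the operator $\ltc{y}{M}{P}$ of $\ces$ (with $\ces$'s version of $\ltc{y}{M}{P}$ absorbing a $\ltv$-step in its base case, matching the call-by-value flavour of $\lbjv$'s $\betav$). This is again a routine induction in which the renaming of bound variables is verified, and congruence of the bijection then yields the 1-1 simulation.

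The main obstacle is not in $\VES\cong\VFS$, which is bookkeeping once the lemma relating $\ltc{y}{M}{P}$ to $\Cutvfsc{\Phi(M)}{y.\Phi(P)}$ is in place, but in $\ces\cong\cnf$: one must carry over the exact apparatus of \cite{jesCSL20} and, crucially, argue that $\cnf$'s notion of reduction corresponds to $\ces$'s $\betav$ \emph{step for step}, not merely up to $\redd$ -- which hinges on the claim that the commutative-conversion normal form reached after a $\betav$-step in $\lbjv$ is precisely what the operator $\ltc{y}{M}{P}$ produces in $\ces$, and this has to be verified directly against the definitions of \cite{jesCSL20}.
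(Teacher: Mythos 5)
Your proposal matches the paper's proof in essence: the paper likewise exhibits mutually inverse, type-preserving translations $\Psi/\Theta$ between $\VES$ and $\VFS$ and $\Upsilon/\Phi$ between $\ces$ and $\cnf$, supported by substitution lemmas and by exactly the key lemmas you identify -- that the $\mathsf{LET}$-style appending operator is sent to the cut-composition $\Cutvfsc{(-)}{y.(-)}$ in the first case and to left substitution $\lsub{-}{y}{-}$ in the second -- and then checks the single-step simulation on the $\Bv$/$\ltv$ (resp.\ $\betav$) base cases, lifting by congruence. No gaps; the argument is the same as the paper's.
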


Therefore the alternative between the two sub-kernels corresponds to the alternative between two proof-systems for call-by-value, the sequent calculus $\LJQ$ and the natural deduction system with general elimination rules behind $\lbjv$.

A $\lbjv$-term is either a value or a generalized applications $\gapp{M}{N}{x}{P}$, with typing rule
$$
\infer{\seqJ{\Gamma}{\gapp{M}{N}{x}{P}}{C}}{\seqJ{\Gamma}{M}{A\imp B}&\seqJ{\Gamma}{N}{A}&\seqJ{\Gamma,x:B}{P}{C}}
$$
If the head term $M$ is itself an application $\gapp{M_1}{M_2}{y}{M_3}$, then $M_3$ has type $A\imp B$ and the term can be rearranged as $\gapp{M_1}{M_2}{y}{\gapp{M_3}{N}{x}{P}}$, to bring $M_3$ and $N$ together. This is a known \emph{commutative conversion} \cite{JoachimskiMatthesRTA2000}, here named $\pi_1$, which aims to convert the head term $M$ to a value $V$. On the other hand, if the argument $N$ is itself an application $\gapp{N_1}{N_2}{y}{N_3}$, then $N_3$ has type $A$ and the term can be rearranged as $\gapp{N_1}{N_2}{y}{\gapp{M}{N_3}{x}{P}}$, to bring $M$ and $N_3$ together. This is a conversion $\pi_2$ which has \emph{not} been studied, and which aims to convert the argument $N$ to a value $W$.

The combined effect of $\pi:=\pi_1\cup\pi_2$ is to reduce generalized applications to the form $\gapp{V}{W}{x}{P}$, called \emph{commutative normal form}. On these forms, the $\betav$-rule of $\lbjv$ reads 
$$(\betav)\qquad\qquad \gapp{(\lb y.M)}{W}{x}{P}\to \lsub{\sub{W}{y}{M}}{x}{P}$$
The \emph{left substitution} operation $\lsub{N}{x}{P}$ is defined by 
$$
\lsub{V}{x}{P}=\sub{V}{x}{P} \qquad\qquad \lsub{\gapp{V}{W}{y}{N_3}}{x}{P}=\gapp{V}{W}{y}{\lsub{N_3}{x}{P}}
$$
The commutative normal forms, equipped with $\betav$, constitute the system $\cnf$.

\begin{table}[t]
	$$
	\begin{array}{rcl}
		\Psi(V) & = & \rt{\Psi_v(V)}\\
		\Psi (\lt{x}{V}{c_x})& = & \Cutvfs{\Psi_v V}{\Psi_x(c_x)} \\
		\Psi_v(x) &=& x \\
		\Psi_v(\lb x.M) & =& \lb x. \Psi M   \\
		\Psi_x(M) &=& x. \Psi M\\
		\Psi_x(\lt{y}{xW}{N}) & =& \garg{\Psi W}{y}{\Psi N}\\
		\\
		\Theta(\rt{V}) &=& \Theta_v(V)\\
		\Theta(\Cutvfs{V}{c}) &=& \lt{x}{\Theta_v V}{\Theta_x(c)}\\
		\Theta_v(x) &=& x \\
		\Theta_v(\lb x.M) & =& \lb x. \Theta M  \\
		\Theta_x(y.M) &=& \sub{x}{y}{(\Theta M)}\\
		\Theta_x \garg{W}{y}{N} & =& \lt{y}{x(\Theta_v W)}{\Theta N}
	\end{array}
	$$
	\caption{Translation from $\VES$ to $\VFS$ and vice-versa.}
	\label{table:ves-vfs}
\end{table}

\begin{table}[t]
	$$
	\begin{array}{rcl}
		\Upsilon (x) &=& x\\
		\Upsilon (\lb x.M) & = & \lb x. \Upsilon M\\
		\Upsilon(\lt{x}{VW}{M}) & = & \Upsilon V \garg{\Upsilon W}{x}{\Upsilon M}\\
		\\
		\Phi (x) &=& x\\
		\Phi (\lb x.M) & = & \lb x. \Phi M\\
		\Phi(\li{V}{W}{x}{M}) & = & \lt{x}{\Phi V \Phi W}{\Phi M}
	\end{array}
	$$
	\caption{Translation from $\ces$ to $\cnf$ and vice-versa.}
	\label{table:ces-cnf}
\end{table}

The announced isomorphisms are given in Tables \ref{table:ves-vfs} and \ref{table:ces-cnf}. 
The map $\Psi:\VES\to\VFS$ 
requires the key auxiliary map $\Psi_x$, whose design is guided by types: if $\seqC{\Gamma,x:A}{c_x}{B}$ then $\seqc{\Gamma}{A}{\Psi_x(c_x)}B$. 
The isomorphism $\Upsilon:\ces\to\cnf$ should be obvious. 
It can be proved that the operation $\ltc{y}{M}{P}$ in $\ces$ is translated as left substitution: $\Upsilon(\ltc{y}{M}{P})=\lsub{\Upsilon M}{y}{\Upsilon P}$.

A final point. The sub-kernel $\VES$ is isomorphic to the CPS-target, after composition with the negative translation: $\VES\cong\VFS\cong\CPS$. 
A variant of the negative translation delivers:
\begin{thm}\label{thm:iso-cnf} $\cnf\cong\cps$.
\end{thm}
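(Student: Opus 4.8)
The plan is to mirror, for the pair $(\cnf,\cps)$, the strategy already carried out for the pair $(\VFS,\CPS)$ in Theorems~\ref{thm:decomposition} and~\ref{thm:iso}, but composing through the direct-style isomorphisms of Theorem~\ref{thm:two-isomorphisms} rather than through the sequent calculus. Concretely, I would first define a \emph{variant} of the negative translation $(\_)^{\wr}$ that goes directly from $\cnf$ to $\cps$: since a term of $\cnf$ is either a value or a commutative normal form $\gapp{V}{W}{x}{P}$, I would set, for values, $\nu(x)=x$ and $\nu(\lb x.M)=\lb x.\lb k.\mu(M)$, and for terms the command-level map $\mu(V)=k\,\nu(V)$ and $\mu(\gapp{V}{W}{x}{P})=\nu(V)\,\nu(W)\,(\lb x.\mu(P))$, together with the type translation $\nu(a)=a$, $\nu(A\imp B)=\nu(A)\imp\neg\neg\nu(B)$. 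This is the composite $\cnf\xrightarrow{\Phi}\ces\xrightarrow{\Upsilon^{-1}?}$ — more precisely it is obtained by transporting the negative translation $(\_)^{\sim}$ of Table~\ref{tab:neg-translation} along the isomorphisms $\cnf\cong\ces$ and $\VES\cong\VFS$; but it is cleaner to define it outright as above and then check it agrees.

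Second, I would establish that $\nu,\mu$ preserve typing, by a routine induction on derivations using the typing rules of $\cnf$ and $\cps$ — the only interesting case being the generalized-application rule, whose three premises $\seqJ{\Gamma}{V}{A\imp B}$, $\seqJ{\Gamma}{W}{A}$, $\seqJ{\Gamma,x:B}{P}{C}$ translate exactly into the three premises of the $VWK$-rule of $\cps$ once one reads $A\imp B$ as $\nu(A)\imp\neg\neg\nu(B)$ and notes that $\lb x.\mu(P)$ is a continuation of type $\neg\nu(B)$ under $k:\neg\nu(C)$. Third, I would exhibit the inverse. Given the grammar of $\cps$ (commands $kV\mid KV\mid VWK$, continuations $K=\lb x.M$, values, terms $\lb k.M$), define $\rho$ on values by $\rho(x)=x$, $\rho(\lb x.P)=\lb x.\tau(P)$, on terms by $\tau(\lb k.M)=\kappa(M)$, and on commands by $\kappa(kV)=\rho(V)$, $\kappa((\lb x.M)V)=\gapp{?}{?}{?}{?}$ — here lies the subtlety: a command of the form $KV$ with $K=\lb x.M$ has \emph{no} natural preimage as a bare $\cnf$-term, because $\cnf$ has no ``let'' and no isolated value-in-context. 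The resolution is the same as in the $\VES$/$\ces$ discussion around diagram~\eqref{eq:expansion-lbc}: in $\cps$ the only commands produced by $\mu$ are $kV$ and $VWK$, i.e.\ those where the head of a $KV$ is never an arbitrary $\lb$-abstraction applied to a value standing alone — so one shows by inspection of the image of $\mu$ that the form $(\lb x.M)V$ with the outer $\lb x.M$ not arising from an application simply does not occur, exactly as $\Bv'$ and the stand-alone $VW$ disappeared in passing to $\ces$. Hence $\kappa$ need only be defined on $kV$ and $VWK$, giving $\kappa(VWK)=\gapp{\rho(V)}{\rho(W)}{x}{\kappa(M)}$ when $K=\lb x.M$.

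Fourth, one proves the four bijection identities $\tau\circ\nu=\mathrm{id}$, $\kappa\circ\mu=\mathrm{id}$, $\rho\circ\nu=\mathrm{id}$ on $\cnf$ and the reverse three on $\cps$, each by straightforward structural induction, the generalized-application / $VWK$ cases being immediate once the definitions are unfolded. Fifth, and last, one checks the reduction correspondence: a $\betav$-step $\gapp{(\lb y.M)}{W}{x}{P}\to\lsub{\sub{W}{y}{M}}{x}{P}$ must map to a $\Bv$-step (possibly followed by the induced $\sigmav$) in $\cps$, and conversely; here one needs the lemma that $\nu,\mu$ send the left-substitution operation $\lsub{\_}{x}{\_}$ of $\cnf$ to the structural substitution $\sub{\CC}{k}{\_}$ of $\cps$ — precisely the analogue of the fact, noted after Table~\ref{table:ces-cnf}, that $\Upsilon$ sends $\ltc{y}{M}{P}$ to left substitution, and of the computation underlying clause~3 of Theorem~\ref{thm:iso}. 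I expect the main obstacle to be this last point together with the well-definedness of the inverse: one must argue carefully that every command in the image of $\mu$ is of the restricted shape (so that $\kappa$ is total on the relevant fragment) and that the $\betav$/$\Bv$ squares commute on the nose rather than merely up to $\sigmav$-convertibility. Given Theorem~\ref{thm:two-isomorphisms} and the already-proven $\VFS\cong\CPS$, an alternative and shorter route is simply to define $(\_)^{\wr}:=(\_)^{\sim}$ transported along $\cnf\cong\ces$ (Theorem~\ref{thm:two-isomorphisms}) — but then one still owes the reader a direct description of the composite to justify calling it a ``variant of the negative translation'', so the work above is essentially unavoidable.
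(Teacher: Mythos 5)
Your overall route is the paper's: define a variant negative translation out of $\cnf$ directly (your $\nu,\mu$ coincide clause-for-clause with the paper's $\ngv{(\cdot)},\angt{(\cdot)}$ of Table~\ref{table:cnf-cps}), exhibit an inverse on commands $kV$ and $VWK$, prove the round-trip identities by structural induction, and get a 1--1 simulation of $\betav$ from a lemma sending left substitution to substitution for $k$, namely $\angt{(\lsub{N}{x}{M})}=\sub{\lb x.\angt{M}}{k}{\angt{N}}$ -- exactly the paper's Lemma~\ref{lem:col-neg}, with the mirror Lemma~\ref{lem:col-cps} for the converse direction. So the skeleton matches the actual proof.

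The genuine problem is your description of the target calculus. You take $\cps$ to have commands $kV\mid KV\mid VWK$ with reductions $\Bv$ and $\sigmav$, and you then invest effort in arguing that commands $(\lb x.M)V$ ``do not occur in the image of $\mu$'', so that your $\kappa$ need only be defined on $kV$ and $VWK$. But the theorem concerns $\cps$ as defined in the paper immediately before the statement: the sub-calculus of $\CPS$ in which commands $KV$ are \emph{omitted} and $\sigmav$-normalization is enforced, whose unique rule is $(\betav)\ (\lb y.\lb k.M)W(\lb x.N)\to\sub{\lb x.N}{k}{\sub{W}{y}{M}}$, where the substitution carries the critical clause $\sub{\lb x.N}{k}{(kV)}=\sub{V}{x}{N}$. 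With that definition both of your announced obstacles vanish by construction: the inverse is total because the syntax simply has no $KV$, and the reduction squares commute on the nose because the $\sigmav$-step you worry about is folded into the substitution clause. Conversely, with the grammar you wrote down, your ``inspection of the image of $\mu$'' would at best identify $\cnf$ with a proper subset of the commands rather than give a bijection with the whole calculus, and the simulation would indeed hold only up to $\sigmav$-steps, so the isomorphism claim as you set it up would not close. Replace your target by the paper's $\cps$ and the rest of your outline (typing preservation, the substitution lemmas, the round trips, and the 1--1 simulation of $\betav$ in both directions) goes through exactly as in the appendix.
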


So we also have $\ces\cong\cnf\cong\cps$. Here $\cps$ is the sub-calculus of $\CPS$ where commands $KV$ are omitted and $\sigmav$ normalization is enforced. Its unique reduction rule, named $\betav$, becomes 
$$
(\betav)\qquad\qquad (\lb y.\lb k.M)W(\lb x.N)\to\sub{\lb x.N}{k}{\sub{W}{y}{M}}
$$
The definition of substitution $\sub{\lb x.N}{k}{M}$ has the following critical clause: 
$$
\sub{\lb x.N}{k}{(kV)}=\sub{V}{x}{N}
$$
This clause does the reduction of the $\sigmav$-redex $(\lb x.N)V$ on the fly; and it echoes the critical clause 
of a structural substitution. Moreover, $\cps$ is the target of a version of the CPS-translation, obtained by changing just one clause: $\cpsc{V}{\lb x.M}=\sub{\cpsv{V}}{x}{\cpst{M}}$.

The variant of the negative translation yielding $\cnf\cong\cps$ is defined by 
$$
\angt{(\gapp{V}{W}{x}{M})}=\ngv{V}\ngv{W}(\lb x.\angt{M})
$$
All the other needed clauses as before. For the isomorphism, we have to prove: 
$$
\angt{(\lsub{N}{x}{M})}=\sub{\lb x.\angt{M}}{k}{\angt{N}}
$$
This is a last minute bonus: a $\cps$ explanation of left substitution.

\section{Conclusions}\label{sec:conc}


\textbf{Contributions.} %
%
We list our main contribution: the VFS-translation; the negative translation as an isomorphism between the VFS and CPS targets; the decomposition of the CPS-translation in terms of the VFS-translation and the negative translation; the two sub-kernels of $\lbc$ and their perfect relationship with appropriate fragments of the sequent calculus $\LJQ$ and natural deduction with general eliminations; the reworking of the term calculus for $\LJQ$.

In all, we took the polished account of the essence of CPS, obtained in \cite{SabryWadler97} and illustrated in Fig.~\ref{fig:essence}, and revealed a rich proof-theoretical background, as in Fig.~\ref{fig:logical-essence}, with a double layer of sub-kernels, under a layer of expansions (see the dotted lines in Fig.~\ref{fig:logical-essence} and recall (\ref{eq:expansion-vfs}), (\ref{eq:expansion-cps}), and (\ref{eq:expansion-lbc})), intersecting an intermediate zone, between the source language and the CPS targets, of calculi corresponding to proof systems.

\textbf{Related work.} In \cite{DyckhoffLengrandJLC07}, $LJQ$ is studied as a source language, while the CPS translation of $\LJQ$ is a tool to establish indirectly a connection with $\lbc$, through their respective kernels, in order to confirm that cut-elimination in $\LJQ$ is connected with call-by-value computation. There is nothing wrong with using the sequent calculus as source language and translating it with CPS: this has been done abundantly, even by the first author \cite{CurienHerbelinICFP00,WadlerICFP2003,DyckhoffLengrandJLC07,jesRalphLuisLMCS2009}. But the point made here is that the sequent calculus should also be used as a tool to analyze the CPS-translation, and is able to play a special role as an intermediate language.

The sequent calculus was put forward as an intermediate representation for compilation of functional programs in \cite{DownenMaurerAriolaJones2016}. This study addresses compilation of programs for a real-world language; designs an intermediate language \emph{Sequent Core} (SC) inspired in the sequent calculus for such source language; and compares SC with CPS heuristically w.~r.~t.~several desirable properties in the context of optimized compilation. In the present paper, we address the foundations of compilation, employing theoretical languages; pick the sequent calculus $\LJQ$, which is a standard systems with decades of history in proof-theory \cite{DyckhoffLengrandCiE2006}; and compare $LJQ$ and CPS, not through a benchmarking of competing languages, but through mathematical results showing their intimate connection. 

\textbf{Future work.} We know an appropriate CPS target will be capable of interpreting a classical extension of our chosen source language. The problem in moving in this direction is that there is no standard extension of $\lbc$ with control operators readily available. Source languages with let-expressions and control operators can be found in \cite{HerbelinZimmermann09,jesAPAL2013}, but adopting them means to redo all that we have done here -- that is another project. On the other hand, maybe a system with generalized applications will make a good source language. The system $\lbjv$ performed well in this paper, since its sub-kernel of administrative normal forms ($\cnf$) is reachable without consideration of expansions -- a sign of a well calibrated syntax.

\bibliography{bibrefs}


\appendix
\section{The original LJQ system}\label{sec:original}

The original calculus by Dyckhoff-Lengrand is recalled in Table \ref{tab:original-lQ}.

\begin{table}[t]
	
		$$\begin{array}{r r c l}
			\text{(terms)} & M, N &::= &\rt{V}  \, | \,  \li{x}{V}{y}{N} \, | \, \CVN{V}{x}{N} \, | \,  \CMN{M}{x}{N} \\
			\text{(values)} & V,W &::=& x \, | \, \lb x.M \, | \, \CVW{V}{x}{W}
		\end{array}$$
		
		$$ \begin{array}{rrcll}
			(1)	& \CMN{\rt{(\lb x.M)}}{y}{\li{y}{V}{z}{N}} &\to&  \CMN{\CMN{\rt{V}}{x}{M}}{z}{N} & (a) \\
			(2)	& \CMN{\rt{x}}{y}{N} &\to & \sub{x}{y}{N}  \\
			(3)	& \CMN{M}{x}{\rt{x}} &\to & M   \\
			(4)	& \CMN{\li{z}{V}{y}{P}}{x}{N} &\to & \li{z}{V}{y}{\CMN{P}{x}{N}} \\
			(5)	& \CMN{\CMN{\rt{W}}{y}{\li{y}{V}{z}{P}}}{x}{N} &\to & \CMN{\rt{W}}{y}{\li{y}{V}{z}{\CMN{P}{x}{N}}} & (b) \\
			(6)	& \CMN{\CMN{M}{y}{P}}{x}{N} &\to &\CMN{M}{y}{\CMN{P}{x}{N}} & (c)\\
			(7)	& \CMN{\rt({\lb x.M})}{y}{N} &\to & \CVN{\lb x.M}{y}{N} & (d)\\
			(8)	& \CVW{V}{x}{x} &\to & V \\
			(9)	& \CVW{V}{x}{y} &\to & y & (e)\\
			(10)	& \CVW{V}{x}{(\lb y.M)} &\to & \lb y.\CVN{V}{x}{M} \\
			(11)	& \CVN{V}{x}{\rt{W}} &\to & \rt{(\CVW{V}{x}{W})} \\
			(12)	& \CVN{V}{x}{\li{x}{W}{z}{N}} &\to & \CVN{\rt{V}}{x}{\li{x}{\CVW{V}{x}{W}}{z}{\CVN{V}{x}{N}}} \\
			(13)	& \CVN{V}{x}{\li{y}{W}{z}{N}} &\to & \li{y}{\CVW{V}{x}{W}}{z}{\CVN{V}{x}{N}} & (e)\\
			(14)	& \CVN{V}{x}{\CMN{M}{y}{N}} &\to &  \CMN{\CVN{V}{x}{M}}{y}{\CVN{V}{x}{N}}\\
		\end{array}$$
		
		Provisos: $(a)$ $y \notin FV(V) \cup FV(N)$. $(b)$ $y \notin FV(V) \cup FV(P))$. $(c)$ If rule (5) does not apply. $(d)$ If rule (1) does not apply. $(e)$ $x\neq y$.
		
		$$\begin{array}{c c}
			\infer[Ax]{\seqv{\Gamma, x:A}{x}{A}}{} & \infer[Der]{\seq{\Gamma}{\rt{V}}{A}}{\seqv{\Gamma}{V}{A}}\\
			\\
			\infer[R\!\imp]{\seqv{\Gamma}{\lb x.M}{A \imp B}}{\seq{\Gamma, x:A}{M}{B}} &  \infer[Cut_3]{\seq{\Gamma}{\CMN{M}{x}{N}}{B}}{\seq{\Gamma}{M}{A} & \seq{\Gamma, x:A}{N}{B}}\\
			\\
			\infer[Cut_1]{\seqv{\Gamma}{\CVW{V}{x}{W}}{B}}{\seqv{\Gamma}{V}{A} & \seqv{\Gamma, x:A}{W}{B}} & \infer[Cut_2]{\seq{\Gamma}{\CVN{V}{x}{N}}{B}}{\seqv{\Gamma}{V}{A} & \seq{\Gamma, x:A}{N}{B}}\\
			\\
			\multicolumn{2}{c}{\infer[L\!\imp]{\seq{\Gamma, x: A \imp B}{\li{x}{V}{y}{N}}{C}}{\seqv{\Gamma, x:A \imp B}{V}{A} & \seq{\Gamma, x:A \imp B, y:B}{N}{C}}} 
		\end{array}$$
		
	\caption{The original calculus by Dyckhoff-Lengrand}
	\label{tab:original-lQ}
\end{table}

\section{Kernel of $\lbc$}\label{sec:kernel-lbc}

Our presentation of the kernel of $\lbc$ given in Table \ref{tab:kernel-lC} is very close to the original one in \cite{SabryWadler97}, as we now see. In \cite{SabryWadler97}, the terms $M$ of the kernel are generated by the grammar:
$$
\begin{array}{rcl}
	M,N,P & ::= & \K\fhole{V} | \K\fhole{VW}\\
	V,W & ::= & x | \lb x.M \\
	\K & ::= & \ehole | \lt{x}{\ehole}{P}
\end{array}
$$
We take for granted the sets of terms and values of $\lbc$, together with the set of contexts of $\lbc$, which are $\lbc$-terms with a single hole, and the concept of hole filling in such contexts. This grammar defines simultaneously a subset of the terms of $\lbc$, a subset of the values of $\lbc$, and a subset of the contexts of $\lbc$.

The second production in the grammar of terms, $\K\fhole{VW}$, should be understood thus: given in the kernel values $V$, $W$ and a context $\K$, the $\lbc$-term $\K\fhole{VW}$, obtained by filling the hole of $\K$ with the $\lbc$-term $VW$, is in the kernel. In $\lbc$, $VW$ is a subterm of $\K\fhole{VW}$; but, as we observed in Section \ref{sec:back}, in the kernel, the term $VW$ is not an immediate subterm of $\K\fhole{VW}$ -- the immediate subexpressions are just $V$, $W$, and $\K$. Notice the $\lbc$-term $M=VW$ is a term in the kernel, generated by the second production of the grammar with $\K=\ehole$. But that second production \emph{should not} be interpreted as $\K\fhole{M}$ with $M=VW$.

There is no primitive $\K\fhole{M}$ in the kernel. Instead, there is the operation $(M:\K)$, defined by recursion on $M$ as follows:
$$
\begin{array}{rcl}
	(V:\K) &=& \K\fhole{V}\\
	(VW:\K) &=& \K\fhole{VW}\\
	(\lt{x}{V}{M}:\K) &=&\lt{x}{V}{(M:\K)}\\
	(\lt{x}{VW}{M}:\K) &=&\lt{x}{VW}{(M:\K)}
\end{array}
$$
It is easy to see that $(M:\lt{x}{\ehole}{P})=\ltc{x}{M}{P}$ and that $(M:\ehole)=M$.

In \cite{SabryWadler97}, the kernel has the following reduction rule
$$
(\beta.v) \qquad \K\fhole{(\lb x.M)V}\to(\sub{V}{x}{M}:\K) \enspace.
$$
There is no need for the requirement of maximal $\K$ in this rule, as done in \cite{SabryWadler97}, once the above clarification about $\K\fhole{VW}$ is obtained. We now see the relationship between $\beta.v$ and our $\Bv$ and $\Bv'$.

Let $\K=\lt{y}{\ehole}{P}$. Then rule $\Bv$ can re written as 
$$\K\fhole{(\lb x.M)V}\to\lt{x}{V}{(M:\K)} \enspace.$$
The contractum is a $\ltv$-redex, which could be immediately reduced, to achieve the effect of $\beta.v$. Here we prefer to delay this $\ltv$-step, and the same applies to our rule $\Bv'$, which corresponds to the case $\K=\ehole$. This issue of delaying $\ltv$ is also seen in Section \ref{sec:bds}.

Finally, rule $\etalt$ in \cite{SabryWadler97} reads $\lt{x}{\ehole}{\K\fhole{x}}\to\K$. We argue that in our presentation we can derive $$(M:\lt{x}{\ehole}{\K\fhole{x}})\to(M:\K)\enspace.$$

If $\K=\ehole$, then we have to prove $\ltc{x}{M}{x}\to M$. This is proved by an easy induction on $M$: the case $M=V$ (resp.~$M=VW$) gives rise to a $\sigmav$-step (resp.~$\etalt$-step); the remaining two cases follow by induction hypothesis. 

If $\K=\lt{y}{\ehole}{P}$, then we have to prove $\ltc{x}{M}{\lt{y}{x}{P}}\to\ltc{y}{M}{P}$. Now $\lt{y}{x}{P}\to_{\ltv}\sub{y}{x}{P}$. Since $Q\to Q'$ implies $\ltc{x}{M}{Q}\to\ltc{x}{M}{Q'}$, we obtain $\ltc{x}{M}{\lt{y}{x}{P}}\to\ltc{x}{M}{\sub{y}{x}{P}}=_{\alpha}\ltc{y}{M}{P}$.
\section{Proofs}\label{sec:proofs}


\subsection{Proofs of Section \ref{sec:LJQ}}

We reuse the maps $(\_)^{\sharp}:\lbc\to\lbqo$ and $(\_)^{\flat}:\lbqo\to\lbc$ by Dyckhoff-Lengrand \cite{DyckhoffLengrandJLC07}.


\begin{lem}[Equational correspondence between $\lbc$ and $\lbqmo$]\label{lem:ec-lc-lbmo} \qquad
\begin{enumerate}
	 \item If $M\red N$ in $\lbc$ then $M^{\sharp}\redd N^{\sharp}$ in $\lbqmo$. 
	 \item If $M\red N$ in $\lbqmo$ then $M^{\flat}\redd N^{\flat}$ in $\lbc$. 
	 \item $M\redd M^{\flat\sharp}$ in $\lbqmo$. 
	 \item $M\conv M^{\sharp\flat}$ in $\lbc$.
\end{enumerate}
\end{lem}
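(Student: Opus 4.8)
The plan is to obtain the four items by transferring to $\lbqmo$ the equational correspondence between $\lbc$ and the original calculus $\lbqo$ due to Dyckhoff--Lengrand, exploiting that $\lbqmo$ arises from $\lbqo$ only by deleting rule~(5) and removing the proviso of rule~(6). The first thing I would record is a bridging fact relating the two reduction relations: \emph{$\lbqo$ and $\lbqmo$ generate the same conversion $\conv$, and every $\lbqo$-reduction sequence is also a $\lbqmo$-reduction sequence}. For the latter inclusion, every rule other than~(5) is shared (rule~(6) of $\lbqmo$ is merely the proviso-free version of rule~(6) of $\lbqo$), and a rule-(5) step of $\lbqo$ is simulated in $\lbqmo$ by one rule-(6) step followed by one rule-(4) step, reaching exactly the same contractum. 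For the reverse direction, a rule-(6) step of $\lbqmo$ in the situation where rule~(5) of $\lbqo$ would apply is joinable, inside $\lbqo$, with the corresponding rule-(5) step via a single rule-(4) step; hence the two conversions coincide.

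Granting this, items~(1), (3) and~(4) follow with little effort. For~(1): if $M\red N$ in $\lbc$ then $M^{\sharp}\redd N^{\sharp}$ in $\lbqo$ by Dyckhoff--Lengrand's simulation of $\lbc$-reduction, and this is then a $\lbqmo$-reduction sequence by the inclusion just noted. Item~(3), $M\redd M^{\flat\sharp}$ in $\lbqmo$, is the $\lbqo$-version of the same property composed with the same inclusion. Item~(4), $M\conv M^{\sharp\flat}$ in $\lbc$, is verbatim Dyckhoff--Lengrand's statement, since neither $\lbc$ nor the map $(\_)^{\flat}$ is touched by the modification.

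Item~(2) is the genuinely new direction: from $M\red N$ in $\lbqmo$ one must derive $M^{\flat}\redd N^{\flat}$ in $\lbc$, i.e.\ an honest reduction and not merely a conversion. For the rules of $\lbqmo$ that are also rules of $\lbqo$ I would invoke the rule-by-rule reduction simulation underlying Dyckhoff--Lengrand's correspondence. The only additional case is a rule-(6) step whose omitted proviso fails, i.e.\ where rule~(5) of $\lbqo$ would have been used; here I would unfold the definition of $(\_)^{\flat}$ on the redex and on the contractum and check directly that the two images are related by $\redd$ in $\lbc$ --- these cut-permutations are sent to $\assoc$-steps or to the identity, so no expansion is created and the reduction does not degrade to a conversion.

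The main obstacle is precisely this last check for item~(2): one has to inspect what the Dyckhoff--Lengrand translation $(\_)^{\flat}$ does to the proviso-free rule~(6) and confirm that it lands inside $\redd$ of $\lbc$ rather than just $\conv$. Everything else is a mechanical transfer of the already-known correspondence, together with the elementary mutual simulation between rule~(5) and rules~(6)+(4).
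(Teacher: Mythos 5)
Your proposal is correct and follows essentially the same route as the paper: items (1), (3) and (4) are transferred from Dyckhoff--Lengrand's equational correspondence using the inclusion of $\lbqo$-reduction in $\lbqmo$-reduction (rule (5) being simulated by proviso-free (6) plus (4)), and the only genuinely new work is the proviso-free rule (6) case in item (2). The paper settles that last check by observing that $(\_)^{\flat}$ (via the refined Fischer CPS translation of $\lbqo$) \emph{identifies} a (6)-step regardless of the proviso, so redex and contractum have equal images --- a special case of your claim that they are related by $\redd$ in $\lbc$.
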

	\begin{proof} Dyckhoff-Lengrand proved there is an equational correspondence between $\lbc$ and $\lbqo$: 
		\begin{itemize}
			\item[(a)] If $M\red N$ in $\lbc$ then $M^{\sharp}\redd N^{\sharp}$ in $\lbqo$.
			\item[(b)] If $M\red N$ in $\lbqo$ then $M^{\flat}\redd N^{\flat}$ in $\lbc$.
			\item[(c)] $M\redd M^{\flat\sharp}$ in $\lbqo$.
			\item[(d)] $M\conv M^{\sharp\flat}$ in $\lbc$.
		\end{itemize}
		Items (a) - (d) entail items (1) to (4). Items (1) and (3) follow from items (a) and (c), respectively, because $\redd$ of $\lbqo$ is contained in $\redd$ of $\lbqmo$. Item (4) is item (d). As to item (2), the only concern is when $M\to N$ by rule (6), due to the waiving of the proviso. But an inspection of the proof of (b) in \cite{DyckhoffLengrandJLC07} shows the proviso plays no role (because a step of the form (6), regardless of the proviso, is identified by the refined Fischer CPS translation of $\lbqo$, and therefore identified by $(\_)^{\flat}$).
	\end{proof}

This is also a pre-Galois connection from $\lbqmo$ to $\lbc$. Because of this, $\lbqmo$ inherits confluence of $\lbc$, as $\lbqo$ did.

\begin{cor}
	$\lbqmo$ is confluent.
\end{cor}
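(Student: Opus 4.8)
The plan is to exploit the general principle, recorded right after the definition of pre-Galois connection, that such a connection from $\Lambda_1$ to $\Lambda_2$ transports confluence of $\red_2$ back to $\red_1$. So the first step is to observe that the Dyckhoff--Lengrand maps $(\_)^{\sharp}:\lbc\to\lbqmo$ and $(\_)^{\flat}:\lbqmo\to\lbc$ form a pre-Galois connection \emph{from} $\lbqmo$ \emph{to} $\lbc$; that is, with $\Lambda_1=\lbqmo$, $\Lambda_2=\lbc$, forward map $(\_)^{\flat}$ and backward map $(\_)^{\sharp}$. Its three defining clauses are precisely the items of the preceding Lemma, read in the appropriate order: clause (1), ``$M\red N$ in $\lbqmo$ implies $M^{\flat}\redd N^{\flat}$ in $\lbc$'', is Lemma item (2); clause (2), ``$M\red N$ in $\lbc$ implies $M^{\sharp}\redd N^{\sharp}$ in $\lbqmo$'', is Lemma item (1); clause (3), ``$M\redd M^{\flat\sharp}$ in $\lbqmo$'', is Lemma item (3). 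Clauses (1) and (2) are used in their multi-step form, obtained from the single-step statements by an immediate induction on the length of the reduction.

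The second step is to invoke (or re-derive in two lines) the fact that a pre-Galois connection transports confluence. Concretely: given $M\redd N_1$ and $M\redd N_2$ in $\lbqmo$, apply $(\_)^{\flat}$ to get $M^{\flat}\redd N_1^{\flat}$ and $M^{\flat}\redd N_2^{\flat}$ in $\lbc$; close this peak by confluence of $\lbc$ at some $P$, so $N_1^{\flat}\redd P$ and $N_2^{\flat}\redd P$; apply $(\_)^{\sharp}$ to obtain $N_1^{\flat\sharp}\redd P^{\sharp}$ and $N_2^{\flat\sharp}\redd P^{\sharp}$ in $\lbqmo$; and prepend the expansions $N_i\redd N_i^{\flat\sharp}$ given by clause (3). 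Then $P^{\sharp}$ is a common $\redd$-reduct of $N_1$ and $N_2$ in $\lbqmo$, which is confluence.

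The only remaining ingredient is confluence of $\red$ in $\lbc$ itself, which is the classical result for Moggi's computational $\lambda$-calculus --- the very fact already used in \cite{DyckhoffLengrandJLC07} to derive confluence of $\lbqo$. Combining the three ingredients yields confluence of $\lbqmo$. There is essentially no obstacle here, since all the real work has been absorbed into the preceding Lemma (notably the remark that dropping the proviso on rule (6) is harmless, as steps of that shape are identified by the refined Fischer CPS-translation of $\lbqo$ and hence collapsed by $(\_)^{\flat}$); the only point demanding attention is the bookkeeping of directions, so that Lemma items (1)--(3) are matched to the correct clauses of the definition and $\lbqmo$ --- not $\lbc$ --- is the calculus whose confluence is being inferred.
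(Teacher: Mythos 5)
Your proof is correct and follows exactly the paper's route: the paper likewise observes that the maps of the preceding Lemma form a pre-Galois connection from $\lbqmo$ to $\lbc$ and then transports confluence of $\lbc$ back along it, just as you spell out. Nothing further is needed.
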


Next we prove 3 Lemmas needed to obtain Theorem \ref{thm:reflection-of-simplified}.

\begin{lem}\label{lem:subs}
	For all $V,W,M \in \lbq$:
	\begin{enumerate}
		\item $\CVW{V}{x}{W} \redd \sub{V}{x}{W}$ in $\lbqmo$.
		\item $\CVN{V}{x}{M} \redd \sub{V}{x}{M}$ in $\lbqmo$.
	\end{enumerate} 
\end{lem}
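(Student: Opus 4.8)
The plan is to prove statements (1) and (2) \emph{simultaneously}, by induction on the $\lbq$-value $W$ in case (1) and on the $\lbq$-term $M$ in case (2); the mutual appeals to the induction hypothesis always fall on a structurally smaller argument (for $W=\lb y.M'$ in case (1) one invokes (2) on $M'$, and for $M=\rt{W}$ in case (2) one invokes (1) on $W$). In each case the first reduction step is the unique applicable ``substitution-execution'' rule among $(8)$--$(14)$ of $\lbqmo$ --- these rules being inherited unchanged from $\lbqo$ --- after which one applies the induction hypotheses to the substitution-cut subterms $\CVW{V}{x}{\cdot}$ and $\CVN{V}{x}{\cdot}$ created by that step, and finally compares the outcome with the matching defining clause of the $\lbq$-substitution $\sub{V}{x}{\cdot}$. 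Throughout one uses that $\redd$ in $\lbqmo$ is a congruence, and one observes that, since neither rules $(10)$--$(14)$ nor $\sub{V}{x}{\cdot}$ ever descend into $V$, no induction on $V$ is required.

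The routine cases are as follows. For (1): $W=x$ uses rule $(8)$, with contractum $V=\sub{V}{x}{x}$; $W=y$ with $y\neq x$ uses $(9)$, with contractum $y=\sub{V}{x}{y}$; and $W=\lb y.M'$ uses $(10)$, whose contractum $\lb y.\CVN{V}{x}{M'}$ reduces by the induction hypothesis for (2) to $\lb y.\sub{V}{x}{M'}=\sub{V}{x}{W}$. For (2): $M=\rt{W}$ uses $(11)$ and then (1) on $W$; $M=\li{y}{W}{z}{N}$ with $y\neq x$ uses $(13)$ and then (1) on $W$ together with (2) on $N$, reaching the congruence clause of $\sub{V}{x}{\cdot}$ for $L\!\imp$; and $M=\Cut{P}{z}{N}$ uses $(14)$ and then (2) on $P$ and on $N$, reaching the clause $\sub{V}{x}{\Cut{P}{z}{N}}=\Cut{\sub{V}{x}{P}}{z}{\sub{V}{x}{N}}$. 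In all of these the reduct coincides \emph{verbatim} with the relevant clause of $\sub{V}{x}{\cdot}$, after the harmless $\alpha$-renaming that keeps bound names out of $FV(V)$.

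The delicate case --- and the one I expect to be the main obstacle --- is the \emph{principal} left-introduction case $M=\li{x}{W}{z}{N}$, handled by rule $(12)$. Firing $(12)$ and then applying the induction hypotheses to its two substitution-cut subterms leads to the term $\Cut{\rt{V}}{x}{\li{x}{\sub{V}{x}{W}}{z}{\sub{V}{x}{N}}}$. When $V$ is a $\lambda$-abstraction this is \emph{exactly} $\sub{V}{x}{(\li{x}{W}{z}{N})}$, by the very clause by which the $\lbq$-substitution was defined to leave an explicit cut in place in that case (see the discussion of rule $\sigmav$ and of this substitution clause in the paragraph on the simplified calculus). When $V$ is a variable $w$, the definition of $\sub{V}{x}{\cdot}$ instead propagates $w$ directly through $L\!\imp$, and one closes the gap with a single further step by rule $(2)$, which rewrites the residual cut $\Cut{\rt{w}}{x}{\li{x}{\sub{w}{x}{W}}{z}{\sub{w}{x}{N}}}$ to $\sub{w}{x}{(\li{x}{W}{z}{N})}$, using idempotency of renaming to identify $\sub{w}{x}{\sub{w}{x}{W}}$ with $\sub{w}{x}{W}$. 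Hence the entire proof reduces to checking that rule $(12)$ and the two-clause definition of $\sub{V}{y}{(\li{y}{W}{z}{P})}$ dovetail; everything else is the bookkeeping above.
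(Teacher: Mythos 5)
Your proof is correct and follows essentially the same route as the paper's: a simultaneous induction on $W$ and $M$ driven by the substitution-execution rules $(8)$--$(14)$ of $\lbqmo$, matched clause by clause against the defining clauses of the $\lbq$-substitution. The only difference is that you spell out the principal case $M=\li{x}{W}{z}{N}$ (rule $(12)$), with its case split on $V$ and the extra rule-$(2)$ step when $V$ is a variable, which the paper leaves implicit under ``the remaining cases are analogous''; your treatment of that case is accurate and, if anything, more informative than the paper's.
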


\begin{proof}
	By simultaneous induction on $W$ and $M$.
	
		Case $W=x$.		
		$$\begin{array}{rclll}
			\CVW{V}{x}{x} & \red_8 & V & \eq & \sub{V}{x}{x}
		\end{array}$$
	
		Case $W=y$.
		$$\begin{array}{rclll}
			\CVW{V}{x}{y} & \red_9 & y & \eq & \sub{V}{x}{y}
		\end{array}$$
	
		Case $W=\lb y. P$.
		$$\begin{array}{rcll}
			\CVW{V}{x}{(\lb y. P)} & \red_{10} & \lb x. \CVN{V}{x}{P}\\
			& \redd & \lb y. \sub{V}{x}{P} & \text{(by IH)}\\
			& \eq & \sub{V}{x}{(\lb y.P)}
		\end{array}$$
	
		Case $M=\rt{W}$.			
		$$\begin{array}{rcll}
			\CVN{V}{x}{\rt{W}} & \red_{11} & \rt{(\CVW{V}{x}{W})}\\
			& \redd & \rt{(\sub{V}{x}{W})} & \text{(by IH)}\\
			& \eq & \sub{V}{x}{(\rt{W})}
		\end{array}$$
	
		The remaining cases are analogous to the previous one.
	
\end{proof}

\begin{lem}\label{lem:sub-of-sub}
	For all $V, W, M \in \lbq$:
	\begin{enumerate}
		\item $\sub{y}{x}{(\sub{W}{z}{V})}\eq\sub{\sub{y}{x}{W}}{z}{(\sub{y}{x}{V})}$.
		\item $\sub{y}{x}{(\sub{W}{z}{M})}\eq \sub{\sub{y}{x}{W}}{z}{(\sub{y}{x}{M})}$.
	\end{enumerate}
	 
\end{lem}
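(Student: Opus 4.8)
The plan is to prove (1) and (2) simultaneously, by structural induction on the value $V$ (for (1)) and on the term $M$ (for (2)), following the recursive definition of the substitution operation $\sub{-}{-}{-}$ of $\lbq$. Throughout I adopt the usual convention that $x$, $y$ and $z$ are pairwise distinct and that bound variables are chosen fresh for $x,y,z$; without a hypothesis of this kind the identity already fails in degenerate instances (take $z=x$, $W$ a $\lambda$-abstraction, $V=x$). The engine of the argument is a single observation: the renaming $\sub{y}{x}{-}$ is a syntactic homomorphism everywhere except on variables, and in particular it sends a variable to a variable and a $\lambda$-abstraction to a $\lambda$-abstraction — so it preserves the variable/$\lambda$-abstraction dichotomy on values. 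This is precisely the feature that a general value-substitution $\sub{V}{x}{-}$ lacks, and it is exactly what makes the commutation hold here, where the \emph{outer} operation is restricted to a variable renaming.

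The routine part. For (1): if $V$ is a variable $v$, both sides are computed outright — when $v=z$ each side collapses to $\sub{y}{x}{W}$, and when $v\neq z$ each side collapses to $\sub{y}{x}{v}$, using $y\neq z$ so that $\sub{y}{x}{v}$ is a variable distinct from $z$; if $V=\lb w.N$, both substitutions push under the binder (fresh $w$) and the goal reduces to instance (2) for the subterm $N$, supplied by the induction hypothesis. For (2): the cases $M=\rt{(\cdot)}$ and $M=\Cut{M_1}{w}{M_2}$ are homomorphic on both sides and close at once under the induction hypotheses applied to the immediate subexpressions — instance (1) for the value inside $\rt{(\cdot)}$, instance (2) for $M_1$ and $M_2$ — and the case $M=\li{v}{V'}{w}{P}$ with $v\neq z$ is likewise homomorphic, closing under (1) for $V'$ and (2) for $P$.

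The one delicate case, and the main obstacle, is $M=\li{z}{V'}{w}{P}$, i.e.\ when the head variable of the left-introduction coincides with the substituted variable $z$, since there the definition of $\sub{W}{z}{-}$ branches according to whether $W$ is a variable or a $\lambda$-abstraction. By the homomorphism observation above, $\sub{y}{x}{W}$ has the same syntactic kind as $W$, so the \emph{same} branch of the critical clause is selected on both sides of the equation. In the variable branch ($W=v$), unfolding the two relevant critical clauses turns each side into a left-introduction $\li{\sub{y}{x}{v}}{\cdot}{w}{\cdot}$ whose two remaining components are matched, respectively, by the induction hypothesis (1) for $V'$ and (2) for $P$. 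In the $\lambda$-abstraction branch ($W=\lb u.M'$), the critical clause produces on each side a term of the form $\Cut{\rt{(\lb u.\sub{y}{x}{M'})}}{z}{\li{z}{\cdot}{w}{\cdot}}$; here one additionally uses that $\sub{y}{x}{-}$ commutes past the cut-binder $z$ and the constructors $\rt{(\cdot)}$, $\li{\cdot}{\cdot}{\cdot}{\cdot}$ — legitimate because $z\neq x,y$ — after which the nested substitutions in the two components are again reconciled by the induction hypotheses (1) for $V'$ and (2) for $P$. No $\pi$- or $\sigmav$-steps are ever invoked: the two sides are literally equal (up to $\alpha$-conversion of bound variables), as the statement asserts.
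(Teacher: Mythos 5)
Your proof is correct and follows essentially the same route as the paper's: a simultaneous structural induction on $V$ and $M$ over the clauses of the substitution operator, with the variable and abstraction cases computed directly and the remaining cases closed by the induction hypotheses. The paper dismisses everything beyond the displayed cases as ``analogous'', whereas you additionally spell out the critical-clause case $\li{z}{V'}{w}{P}$ (using that the renaming $\sub{y}{x}{-}$ preserves whether $W$ is a variable or a $\lb$-abstraction, so both sides take the same branch) and make explicit the tacit freshness/distinctness conventions; this is the same argument, only more detailed.
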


\begin{proof}
	By simultaneous induction on $V$ and $M$.
	
	Case $V=x$.			
	$$\begin{array}{rclll}
		\sub{\sub{y}{x}{W}}{z}{(\sub{y}{x}{x})} & \eq & \sub{\sub{y}{x}{W}}{z}{y}\\
		& \eq &  y \\
		& \eq & \sub{y}{x}{x} \\
		& \eq & \sub{y}{x}{(\sub{W}{z}{x})}
	\end{array}$$

	Case $V=z$.
	$$\begin{array}{rclll}
		\sub{\sub{y}{x}{W}}{z}{(\sub{y}{x}{z})} & \eq & \sub{\sub{y}{x}{W}}{z}{z}\\
		& \eq &  \sub{y}{x}{W} \\
		& \eq &  \sub{y}{x}{(\sub{W}{z}{z})}
	\end{array}$$

	Case $V=u$, $x \neq u \neq z$.
	$$\begin{array}{rclll}
	\sub{\sub{y}{x}{W}}{z}{(\sub{y}{x}{u})} & \eq & \sub{\sub{y}{x}{W}}{z}{u}\\
		& \eq &  u \\
		& \eq & \sub{y}{x}{u}\\
		& \eq &  \sub{y}{x}{(\sub{W}{z}{u})}
	\end{array}$$

	Case $V=\lb u.P$.
	$$\begin{array}{rclll}
		\sub{\sub{y}{x}{W}}{z}{(\sub{y}{x}{(\lb u.P)})} & \eq & \lb u. \sub{\sub{y}{x}{W}}{z}{(\sub{y}{x}{P})}\\
		& \eq & \sub{y}{x}{(\sub{W}{z}{P})} & \text{(by IH)}\\
		& \eq &  \sub{y}{x}{(\sub{W}{z}{(\lb u.P)})}
	\end{array}$$
	
	The remaining cases are analogous.
	
\end{proof}

The map $\smp{(\_)}:\lbqmo\to\lbq$, based on the idea of translating the omitted cuts by calls to substitution, is defined as follows:

$$\begin{array}{rclcrcl}
	\smp{(\rt{V})} & \eq & \rt{\smpv{V}}&&\smpv{x} & \eq & x\\
	\smp{\li{x}{V}{y}{N}} & \eq & \li{x}{\smpv{V}}{y}{\smp{N}}&&\smpv{(\lb x.M)} & \eq & \lb x. \smp{M}\\
	\smp{\CVN{V}{y}{N}} & \eq & \sub{\smpv{V}}{x}{\smp{N}}&&\smpv{\CVW{V}{x}{W}} & \eq & \sub{\smpv{V}}{x}{\smpv{W}}\\
	\smp{\CMN{M}{x}{N}} & \eq & \Cut{\smp{M}}{x}{\smp{N}}
\end{array}$$

\begin{lem}\label{lem:maps-of-subs}
	For all $V,M \in \lbqmo$:
	\begin{enumerate}
		\item $\smpv{(\sub{y}{x}{V})} \eq \sub{\smpv{y}}{x}{\smpv{V}}$.
		\item $\smp{(\sub{y}{x}{M})} \eq \sub{\smpv{y}}{x}{\smp{M}}$.
	\end{enumerate}
\end{lem}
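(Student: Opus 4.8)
The plan is to prove items (1) and (2) simultaneously by induction on the structure of $V$ and $M$, mirroring the pattern already used in Lemma \ref{lem:sub-of-sub}. First observe that, since $y$ is a variable, $\smpv{y}\eq y$, so the statement to be proved is in effect $\smpv{(\sub{y}{x}{V})}\eq\sub{y}{x}{\smpv{V}}$ and $\smp{(\sub{y}{x}{M})}\eq\sub{y}{x}{\smp{M}}$; I keep the notation $\smpv{y}$ in the displays only to match the statement. Throughout I use the standard variable convention, so every binder ($z$, $w$) occurring in the case analysis may be assumed distinct from $x$ and $y$ and not free in the terms being substituted for.

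For item (1), the variable cases $V\eq x$ and $V\eq z$ (with $z\neq x$) are immediate by unfolding the definitions of substitution and of $\smpv{(\_)}$ on variables. For $V\eq\lb z.M$, pushing $\smpv{(\_)}$ and the substitution past the abstraction reduces the goal to item (2) for $M$, which is the induction hypothesis. The only genuinely interesting case is $V\eq\CVW{V_1}{z}{W}$, where $\smpv{(\_)}$ turns the cut into a substitution: after applying $\smpv{(\_)}$ to both sides and invoking the induction hypotheses on $V_1$ and on $W$, the goal becomes an identity of the form $\sub{\sub{y}{x}{\smpv{V_1}}}{z}{(\sub{y}{x}{\smpv{W}})}\eq\sub{y}{x}{(\sub{\smpv{V_1}}{z}{\smpv{W}})}$, which is exactly an instance of Lemma \ref{lem:sub-of-sub}(1).

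Item (2) follows the same shape. The case $M\eq\rt{V}$ reduces to item (1) for $V$; the case $M\eq\li{z}{V}{w}{N}$ reduces to item (1) for $V$ together with the induction hypothesis for $N$; the case $M\eq\CMN{M_1}{w}{N}$ reduces to the induction hypotheses for $M_1$ and $N$, since $\smp{(\_)}$ maps $\CMN{M_1}{w}{N}$ to $\Cut{\smp{M_1}}{w}{\smp{N}}$ and substitution commutes with this constructor componentwise. The case $M\eq\CVN{V}{w}{N}$ is the analogue of the $\CVW$ case: $\smp{(\_)}$ replaces the cut by a substitution, the induction hypotheses for $V$ (via item (1)) and for $N$ are applied, and the goal is closed by an instance of Lemma \ref{lem:sub-of-sub}(2).

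I expect the only points requiring care to be the two cut cases $\CVW$ and $\CVN$, where the definition of $\smp{(\_)}$ genuinely interacts with substitution rather than merely commuting past a constructor: there one needs the substitution-commutation Lemma \ref{lem:sub-of-sub}, together with the implicit renaming of bound variables ensuring $z$ (resp.\ $w$) is distinct from $x$, $y$. All remaining cases are a routine unfolding of the definitions of $\smp{(\_)}$, $\smpv{(\_)}$ and of substitution.
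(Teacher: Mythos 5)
Your proof is correct and follows essentially the same route as the paper: a simultaneous induction on $V$ and $M$, with all cases being routine unfoldings except the explicit-substitution cuts $\CVW{V_1}{z}{W}$ and $\CVN{V}{w}{N}$, where $\smp{(\_)}$ turns the cut into a substitution and the goal is closed by the substitution-commutation Lemma \ref{lem:sub-of-sub} (items (1) and (2), respectively), exactly as the paper does.
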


\begin{proof}
	By simultaneous induction on $V$ and $M$.
	
	Case $V=x$.
		$$\begin{array}{rclll}
			\smpv{(\sub{y}{x}{x})} & \eq & \smpv{y}\\
			& \eq & \sub{\smpv{y}}{x}{x}\\
			& \eq & \sub{\smpv{y}}{x}{\smpv{x}}
		\end{array}$$
		
		Case $V=z$.
		$$\begin{array}{rclll}
			\smpv{(\sub{y}{x}{z})} & \eq & \smpv{z} \\
			& \eq & z \\
			& \eq & \sub{\smpv{y}}{x}{z}\\
			& \eq & \sub{\smpv{y}}{x}{\smpv{z}}
		\end{array}$$
	
	 	Case $V=\lb z.P$.
	 	$$\begin{array}{rclll}
	 		\smpv{(\sub{y}{x}{(\lb z.P)})} & \eq & \smpv{(\lb z.\sub{y}{x}{P})}\\
	 		& \eq & \lb z. \smp{(\sub{y}{x}{P})}\\
	 		& \eq & \lb z. \sub{\smpv{y}}{x}{\smp{P}} & \text{(by IH)}\\
	 		& \eq & \sub{\smpv{y}}{x}{\smpv{(\lb z.P)}}
	 	\end{array}$$
 	
 		Case $V=\CVW{W_1}{z}{W_2}$.
 		$$\begin{array}{rclll}
 			\smpv{(\sub{y}{x}{(\CVW{W_1}{z}{W_2})})} & \eq & \smpv{\CVW{\sub{y}{x}{W_1}}{z}{\sub{y}{x}{W_2}}}\\
 			& \eq & \sub{\smpv{(\sub{y}{x}{W_1})}}{z}{\smpv{(\sub{y}{x}{W_2})}}\\
 			& \eq & \sub{\sub{\smpv{y}}{x}{\smpv{W_1}}}{z}{(\sub{\smpv{y}}{x}{\smpv{W_2}})} & \text{(by IH)}\\
 			& \eq & \sub{\smpv{y}}{x}{(\sub{\smpv{W_1}}{z}{\smpv{W_2}})} & \text{(by Lemma \ref{lem:sub-of-sub})}\\
 			& \eq & \sub{\smpv{y}}{x}{\smpv{(\CVW{W_1}{z}{W_2}})}
 		\end{array}$$
 	
		The case $M = \CVN{W}{z}{N}$ is analogous to the previous one and the others follow by induction hypothesis just as case $V = \lb z.P$.

\end{proof}

\begin{thm}[Reflection in $\lbqmo$ of $\lbq$]\label{thm:reflection-of-simplified}\quad
\begin{enumerate}
	\item If $M\red N$ in $\lbq$ then $M\redd N$ in $\lbqmo$.
	\item If $M\red N$ in $\lbqmo$ then $\smp{M}\redd \smp{N}$ in $\lbq$.
	\item $M\redd \smp{M}$ in $\lbqmo$.
	\item For all $M\in\lbq$, $M\eq \smp{M}$.
\end{enumerate}
\end{thm}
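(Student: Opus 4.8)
The four claims interlock into a reflection in $\lbqmo$ of $\lbq$, with $f=\smp{(\_)}:\lbqmo\to\lbq$ and $g$ the inclusion $\lbq\subset\lbqmo$: claim~2 is the ``$f$ preserves reduction'' condition, claim~1 the ``$g$ preserves reduction'' condition, claim~3 says $M\redd g(f(M))$, and claim~4 says $f(g(M))\eq M$. I would prove them in the order $4$, $3$, $1$, $2$, the first two being easy inductions on the structure of $M$. Claim~4 is immediate because $\smp{(\_)}$ acts as the identity on the constructors $\rt{V}$, $\li{x}{V}{y}{N}$, $\Cut{M}{x}{N}$ (and on the values $x$, $\lb x.M$) that generate $\lbq$. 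Claim~3 is a structural induction on $M\in\lbqmo$ whose only non-congruence cases are the two omitted cut forms $\CVW{V}{x}{W}$ and $\CVN{V}{x}{N}$; there, applying the induction hypothesis to the immediate subterms and then (an instance of) Lemma~\ref{lem:subs} gives $M\redd\smp{M}$ in $\lbqmo$.

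For claim~1 I would do induction on the derivation of $M\red N$ in $\lbq$. Since $\lbq$ sits inside $\lbqmo$ as a subgrammar and its rules $\Bv$, $\etact$, $\pi_1$, $\pi_2$ are literally the rules $(1)$, $(3)$, $(4)$, $(6)$ of $\lbqmo$ (the provisos agreeing), those steps are already $\lbqmo$-steps and closure under contexts is inherited. The only real work is the $\sigmav$-step $\Cut{\rt{V}}{y}{N}\red\sub{V}{y}{N}$, where $\sub{V}{y}{N}$ denotes the bespoke $\lbq$-substitution: here I would establish as an auxiliary fact that $\Cut{\rt{V}}{y}{N}\redd\sub{V}{y}{N}$ in $\lbqmo$, splitting on $V$. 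If $V$ is a variable this is a single use of rule $(2)$, because on $\lbq$-terms the $\lbq$-substitution of a variable coincides with ordinary variable substitution (by design of the left-introduction clause). If $V=\lb x.M$, the $\sigmav$-proviso ``$\Bv$ does not apply'' matches the proviso of rule $(1)$, so rule $(7)$ fires, $\Cut{\rt{(\lb x.M)}}{y}{N}\red_{7}\CVN{\lb x.M}{y}{N}$, and Lemma~\ref{lem:subs}(2) finishes.

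Claim~2 is the substantial part: a case analysis on the $\lbqmo$-rule used in $M\red N$, computing $\smp{M}$ and $\smp{N}$ each time. Rules $(1)$, $(3)$, $(4)$, $(6)$ translate to a single $\Bv$-, $\etact$-, $\pi_1$-, $\pi_2$-step of $\lbq$ (the provisos transfer because $\smp{(\_)}$ preserves free variables). Rules $(2)$ and $(7)$ translate to a single $\sigmav$-step of $\lbq$: one uses Lemma~\ref{lem:maps-of-subs} to see that $\smp{(\_)}$ commutes with substitution, and the definition of $\lbq$-substitution (including its special clauses) to recognise the contractum, checking that $\Bv$ does not apply to the $\sigmav$-redex produced (again because rule $(1)$ did not apply in $\lbqmo$). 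The substitution-pushing rules $(8)$--$(11)$ and $(14)$ become plain identities $\smp{M}\eq\smp{N}$, since $\lbq$-substitution distributes over those constructors. The delicate cases are $(12)$ and $(13)$, which push a substitution past a left-introduction and so meet the non-standard clauses of $\lbq$-substitution head-on; here Lemmas~\ref{lem:sub-of-sub} and~\ref{lem:maps-of-subs} provide the required algebra of substitutions (idempotence, commutation), and the argument must branch on whether the value being substituted is a variable or an abstraction, matching exactly the corresponding clause and the attached provisos.

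The main obstacle, concentrated in claim~2, is precisely this coherence between the ad hoc $\lbq$-substitution introduced in Section~\ref{sec:LJQ} -- deliberately chosen, on left-introductions, to differ from the ``expected'' definition so as to avoid $\sigmav$-cycles -- and the original cut-propagation rules $(12)$, $(13)$ of $\lbqmo$: getting the case split on the shape of the substituted value right, and confirming that no $\sigmav$-proviso is violated along the way, is the fiddly core of the proof. Once claim~2 is in place the reflection is established; composing it with the equational correspondence of Lemma~\ref{lem:ec-lc-lbmo} yields the equational correspondence between $\lbc$ and $\lbq$, and noting that $f$ and $g$ then form a pre-Galois connection from $\lbq$ to $\lbqmo$ transfers confluence from $\lbqmo$ to $\lbq$.
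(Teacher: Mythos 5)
Your proposal follows essentially the same route as the paper's proof: item~4 is immediate, item~3 is a structural induction whose only non-congruence cases are the omitted cut forms handled via Lemma~\ref{lem:subs}, item~1 splits the $\sigmav$-case into rule (2) for variables and rule (7) plus Lemma~\ref{lem:subs} for abstractions, and item~2 is the same rule-by-rule case analysis on $\lbqmo$, using Lemmas~\ref{lem:sub-of-sub} and~\ref{lem:maps-of-subs}, with the pure substitution-propagation rules collapsing to identities under $\smp{(\_)}$. The only divergence is presentational: where you propose an explicit variable/abstraction branch for rules (12)--(13), the paper computes case (12) directly through the (abstraction) clause of the $\lbq$-substitution and treats the remaining propagation cases as analogous.
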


\begin{proof} 
	
			The proof of the item 1 is by induction on the relation $M \red N$ in $\lbq$, where we must see $\Cut{M}{x}{N}$ as $\CMN{M}{x}{N}$. Cases $(\Bv)$, $(\etact)$, $(\pi_1)$ and $(\pi_2)$ are trivial. Case $(\sigmav)$ must be analyzed in two parts:		
			$$\begin{array}{rcll}
			\CMN{\rt{(\lb x.M)}}{y}{N} & \red_7 & \CVN{\lb x.M}{y}{N}\\
			& \redd & \sub{\lb x.M}{y}{N} & \text{(by  Lemma \ref{lem:subs})}
			\end{array}$$
			and
			$$\begin{array}{rcll}
				\CMN{\rt{x}}{y}{N} & \red_2 & \sub{x}{y}{N}
			\end{array}$$

			The proof of the item 2 is by induction on the relation $M \red N \in \lbqmo$. We proceed by cases:
			
			Case $(1)$.
				$$\begin{array}{rcll}
				\smp{\CMN{\rt{(\lb x.M)}}{y}{\li{y}{V}{z}{N}}} & \eq & \Cut{\rt{(\lb x. \smp{M})}}{y}{\li{y}{\smpv{V}}{z}{\smp{N}}}\\
				& \red_{\Bv} & \Cut{\Cut{\rt{\smpv{V}}}{x}{\smp{M}}}{z}{\smp{N}} \\ 
				& \eq & \smp{\Cut{\Cut{\rt{V}}{x}{M}}{z}{N}}
				\end{array}$$

			    Case $(2)$.
			    $$\begin{array}{rcll}
			    	\smp{\CMN{\rt{x}}{y}{N}} & \eq & \Cut{\rt{\smpv{x}}}{y}{\smp{N}}\\
			    	& \red_{\sigmav } & \sub{\smpv{x}}{y}{\smp{N}}\\
			    	& \eq & \smp{(\sub{x}{y}{N})} & \text{(by Lemma \ref{lem:maps-of-subs})}
			    \end{array}$$
			    
			    Case $(3)$.
			    $$\begin{array}{rcll}
			    	\smp{\CMN{M}{x}{\rt{x}}} & \eq & \Cut{\smp{M}}{x}{\rt{x}} \\
					& \red_{\etact} & \smp{M}
			    \end{array}$$

			    Case $(4)$.
			    $$\begin{array}{rcll}
			    	\smp{\CMN{\li{z}{V}{y}{P}}{x}{N}} & \eq &  \Cut{\li{z}{\smpv{V}}{y}{\smp{P}}}{x}{\smp{N}} \\
		    	    & \red_{\pi_1} & \li{z}{\smpv{V}}{y}{\Cut{\smp{P}}{x}{\smp{N}}}\\
		    	    & \eq  & \smp{\li{z}{V}{y}{\Cut{P}{x}{N}}}
			    \end{array}$$
			    
			     Case $(6)$ without side condition. 
			    $$\begin{array}{rcll}
			    	\smp{\CMN{\CMN{M}{y}{P}}{x}{N}} & \eq & \Cut{\Cut{\smp{M}}{y}{\smp{P}}}{x}{\smp{N}}\\
			    	& \red_{\pi_2} & \Cut{\smp{M}}{y}{\Cut{\smp{P}}{x}{\smp{N}}}\\
			    	& \eq & \smp{\CMN{M}{y}{\CMN{P}{x}{N}}}
			    \end{array}$$

			    Case $(7)$.
			    $$\begin{array}{rcll}
			       \smp{\CMN{\rt{(\lb x.M)}}{y}{N}} & \eq & \Cut{\rt{\smpv{(\lb x.M)}}}{y}{\smp{N}}\\
			    	& \red_{\sigmav} & \sub{\smpv{\lb x.M}}{y}{\smp{N}} & \text{($\smp{N}$ is not an $y$-covalue)}\\
			    	& \eq & \smp{\CVN{\lb x.M}{y}{N}}
			    \end{array}$$
		    
		    	 Case $(12)$.
		    	$$\begin{array}{rcll}
		    	 \smp{\CVN{V}{x}{\li{x}{W}{z}{N}}} & \eq & \sub{\smpv{V}}{x}{\smp{(\li{x}{W}{z}{N}})}\\
		    		& \eq & \Cut{\rt{\smpv{V}}}{x}{\li{x}{\sub{\smpv{V}}{x}{\smpv{W}}}{z}{\sub{\smpv{V}}{x}{\smp{N}}}}\\
		    		& \eq & \Cut{\smp{(\rt{V})}}{x}{\li{x}{\smpv{\CVW{V}{x}{W}}}{z}{\smp{\CVN{V}{x}{N}}}}\\
		    		& \eq & \smp{\CMN{\rt{V}}{x}{\li{x}{\CVW{V}{x}{W}}{z}{\CMN{V}{x}{N}}}}
		    	\end{array}$$
	    	
	    	The remaining cases are analogous to last one.

	    	Item 3 follows by simultaneous induction on $M$ and  $V \in \lbqmo$. We proceed by cases:
	    	
	    		Case $V=x$.
	    		$$\begin{array}{rclll}
	    			x & \redd & x &\eq & \smpv{x}
	    		\end{array}$$
    		
    			Case $V=\lb x.P$.
    			$$\begin{array}{rcll}
    				\lb x.P & \redd & \lb x.\smp{P} & \text{(by IH)}\\
    				& \eq & \smpv{(\lb x.P)}
    			\end{array}$$
    		
    			Case $V=\CVW{W_1}{x}{W_2}$.
    			$$\begin{array}{rcll}
    			  \CVW{W_1}{x}{W_2} & \redd & \sub{W_1}{x}{W_2} & \text{(by Lemma \ref{lem:subs})}\\
    				& \redd & \sub{\smpv{W_1}}{x}{\smpv{W_2}} & \text{(by IH)}\\
    				& \eq & \smpv{\CVW{W_1}{x}{W_2}}
    			\end{array}$$
    		
    			Case $M=\rt{W}$.
    			$$\begin{array}{rcll}
    				\rt{W} & \redd & \rt{\smpv{W}} &\text{(by IH)}\\
    				& \eq & \smp{(\rt{W})}
    			\end{array}$$
    		
    			Case $M=\li{x}{W}{y}{N}$.
    			$$\begin{array}{rcll}
    				\li{x}{W}{y}{N}  & \redd & \li{x}{\smpv{W}}{y}{\smp{N}} & \text{(by IH)}\\
    				& \eq & \smp{\li{x}{W}{y}{N}}
    			\end{array}$$
    		
    			Case $M=\CVN{W}{y}{N}$.
    			$$\begin{array}{rcll}
    				\CVN{W}{y}{N} & \redd & \sub{W}{y}{N} & \text{(by Lemma \ref{lem:subs})}\\
    				& \redd & \sub{\smpv{W}}{x}{\smp{N}} & \text{(by IH)}\\
    				& \eq & \smp{\CVN{W}{y}{N}}
    			\end{array}$$
    		
    			Case $M=\CMN{P}{y}{N}$.
    			$$\begin{array}{rcll}
    				\CMN{P}{y}{N} & \redd & \CMN{\smp{P}}{y}{\smp{N}} & \text{(by IH)}\\
    				& \eq & \Cut{\smp{P}}{y}{\smp{N}} & (\cutsymbol = \cutsymbol_3)\\
    				& \eq & \smp{\CMN{P}{y}{N}}
    			\end{array}$$

	    	Finally, item 4 is trivial if we consider $\cutsymbol = \cutsymbol_3$.
	    
\end{proof}

\begin{cor}[Conservativeness]
	For all $M,N\in\lbq$, $M\redd N$ in $\lbq$ iff $M\redd N$ in $\lbqmo$.
\end{cor}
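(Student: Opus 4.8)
The plan is to read this off directly from Theorem~\ref{thm:reflection-of-simplified}, so the corollary is essentially a packaging result and there is no genuinely hard step; the work has already been done in the theorem.

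For the left-to-right direction, I would suppose $M\redd N$ in $\lbq$ and argue by induction on the number of reduction steps. The base case ($M$ equals $N$) is immediate. For the inductive step, write the reduction as $M\redd P\red N$ with $P\red N$ a single $\lbq$-step; the induction hypothesis gives $M\redd P$ in $\lbqmo$, while item~1 of Theorem~\ref{thm:reflection-of-simplified} gives $P\redd N$ in $\lbqmo$, and concatenating yields $M\redd N$ in $\lbqmo$.

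For the right-to-left direction, I would suppose $M\redd N$ in $\lbqmo$ with $M,N\in\lbq$, apply the map $\smp{(\_)}:\lbqmo\to\lbq$, and again induct on the number of steps: each single step $P\red Q$ in $\lbqmo$ yields $\smp{P}\redd\smp{Q}$ in $\lbq$ by item~2, and composing these gives $\smp{M}\redd\smp{N}$ in $\lbq$. Then I would invoke item~4: since $M,N\in\lbq$ we have $M\eq\smp{M}$ and $N\eq\smp{N}$, so $\smp{M}\redd\smp{N}$ in $\lbq$ is literally $M\redd N$ in $\lbq$, as required.

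The only point that deserves a moment's care — and the closest thing to an obstacle — is that item~4 asserts a genuine syntactic identity ($\eq$ being plain equality), not merely convertibility: it is precisely this that lets us transport the $\lbq$-reduction obtained on the $\smp{(\_)}$-images back to $M$ and $N$ themselves. Equivalently, one notes that $\smp{(\_)}$ restricted to the sub-syntax $\lbq\subset\lbqmo$ is the identity, which is immediate from the defining clauses of $\smp{(\_)}$ once one recalls that in $\lbq$ the only cut is $\Cut{M}{x}{N}$, seen as $\CMN{M}{x}{N}$ with $\cutsymbol=\cutsymbol_3$. Since there is no real difficulty, I would simply present the two inductions above together with this remark.
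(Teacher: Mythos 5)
Your proof is correct and follows essentially the same route as the paper: item~1 of Theorem~\ref{thm:reflection-of-simplified} (closed under multiple steps) gives the forward direction, and item~2 together with the identity $M\eq\smp{M}$ from item~4 gives the converse; you merely spell out the routine inductions on the number of steps that the paper leaves implicit.
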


\begin{proof}
	The left to the right direction follows immediately from the item 1 of the Theorem \ref{thm:reflection-of-simplified}.  Now suppose $M\redd N$ in $\lbqmo$, where $M,N\in\lbq$. 
	Then, $\smp{M}\redd \smp{N}$ in $\lbq$, by item 2. Finally, because $M\eq \smp{M}$ and $N\eq \smp{N}$, we get $M\redd N$ in $\lbq$.
\end{proof}

\begin{cor}[Equational correspondence between $\lbc$ and $\lbq$] (1) If $M\red N$ in $\lbc$ then $M^{\sharp\smp{}} \redd N^{\sharp\smp{}}$ in $\lbq$. (2) If $M\red N$ in $\lbq$ then $M^{\flat}\redd N^{\flat}$ in $\lbc$. (3) $M\redd M^{\flat\sharp\smp{}}$ in $\lbq$. (4) $M\conv M^{\sharp\smp{}\flat}$ in $\lbc$.
\end{cor}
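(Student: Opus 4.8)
The plan is to derive all four items by \emph{composing} the equational correspondence between $\lbc$ and $\lbqmo$ from Lemma~\ref{lem:ec-lc-lbmo} with the reflection in $\lbqmo$ of $\lbq$ from Theorem~\ref{thm:reflection-of-simplified}. Observe first that the composite of $(\_)^{\sharp}:\lbc\to\lbqmo$ with $\smp{(\_)}:\lbqmo\to\lbq$ is precisely the map $(\_)^{\sharp\smp{}}$ named in the statement, while in the opposite direction the reflection contributes only the inclusion $\lbq\subseteq\lbqmo$, so the relevant composite with $(\_)^{\flat}:\lbqmo\to\lbc$ is just $(\_)^{\flat}$ restricted to $\lbq$. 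I would also record once the trivial fact that a map simulating every single reduction step by $\redd$ simulates $\redd$ by $\redd$ (induction on the number of steps, using reflexivity and transitivity of $\redd$), and use it tacitly to promote the one-step clauses of Lemma~\ref{lem:ec-lc-lbmo} and Theorem~\ref{thm:reflection-of-simplified} to reduction sequences.

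Items (1) and (2) would then be immediate chains. For (1): if $M\red N$ in $\lbc$ then $M^{\sharp}\redd N^{\sharp}$ in $\lbqmo$ by Lemma~\ref{lem:ec-lc-lbmo}(1), hence $M^{\sharp\smp{}}\redd N^{\sharp\smp{}}$ in $\lbq$ by Theorem~\ref{thm:reflection-of-simplified}(2). For (2): if $M\red N$ in $\lbq$ then $M\redd N$ in $\lbqmo$ by Theorem~\ref{thm:reflection-of-simplified}(1) (the inclusion), hence $M^{\flat}\redd N^{\flat}$ in $\lbc$ by Lemma~\ref{lem:ec-lc-lbmo}(2).

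For item (3) I would take $M\in\lbq$, regard it inside $\lbqmo$, apply Lemma~\ref{lem:ec-lc-lbmo}(3) to get $M\redd M^{\flat\sharp}$ in $\lbqmo$, then Theorem~\ref{thm:reflection-of-simplified}(2) to get $\smp{M}\redd M^{\flat\sharp\smp{}}$ in $\lbq$, and finally use $\smp{M}\eq M$ (valid since $M\in\lbq$, by Theorem~\ref{thm:reflection-of-simplified}(4)) to conclude $M\redd M^{\flat\sharp\smp{}}$ in $\lbq$. For item (4) I would take $M\in\lbc$ and combine two facts: $M\conv M^{\sharp\flat}$ in $\lbc$ by Lemma~\ref{lem:ec-lc-lbmo}(4); and $M^{\sharp}\redd M^{\sharp\smp{}}$ in $\lbqmo$ by Theorem~\ref{thm:reflection-of-simplified}(3), which by Lemma~\ref{lem:ec-lc-lbmo}(2) gives $M^{\sharp\flat}\redd M^{\sharp\smp{}\flat}$ in $\lbc$, hence $M^{\sharp\flat}\conv M^{\sharp\smp{}\flat}$. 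Transitivity of $\conv$ then yields $M\conv M^{\sharp\smp{}\flat}$ in $\lbc$.

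I do not expect a genuine obstacle here, since all the real work is already isolated in Lemma~\ref{lem:ec-lc-lbmo} and Theorem~\ref{thm:reflection-of-simplified}; the only thing that needs attention is the bookkeeping of which composite is being asserted, and in which of the three calculi, in items (3) and (4). The subtle point worth highlighting is that item (3) can be stated with $\redd$ rather than merely $\conv$ exactly because Theorem~\ref{thm:reflection-of-simplified}(4) gives the \emph{identity} $\smp{M}\eq M$ on $\lbq$, not just convertibility.
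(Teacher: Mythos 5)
Your proposal is correct and follows essentially the same route as the paper, whose proof is exactly the composition of Lemma~\ref{lem:ec-lc-lbmo} with Theorem~\ref{thm:reflection-of-simplified}; you have merely spelled out the bookkeeping (promotion of one-step simulations to $\redd$, and the use of $\smp{M}\eq M$ on $\lbq$ in items (3) and (4)) that the paper leaves implicit.
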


\begin{proof}
	By composing Lemma $\ref{lem:ec-lc-lbmo}$ and Theorem $\ref{thm:reflection-of-simplified}$.
\end{proof}


Theorem \ref{thm:reflection-of-simplified} also gives a pre-Galois connection from $\lbq$ to $\lbqmo$. Thus, confluece of $\lbq$ can be pulled back from the confluence of $\lbqmo$. 
\begin{cor}
	$\lbq$ is confluent.
\end{cor}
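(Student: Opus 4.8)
The plan is to pull confluence of $\lbqmo$ back to $\lbq$ along the reflection of Theorem~\ref{thm:reflection-of-simplified}, exactly as announced in the paragraph preceding the statement. First recall that $\lbqmo$ is confluent (the corollary derived above from Lemma~\ref{lem:ec-lc-lbmo}). Then note that items~1--4 of Theorem~\ref{thm:reflection-of-simplified} are literally conditions~(1)--(4) of the definition of ``reflection in $\lbqmo$ of $\lbq$'', once one takes $g$ to be the inclusion $\lbq\subset\lbqmo$ (reading $\Cut{M}{x}{N}$ as $\CMN{M}{x}{N}$) and $f$ to be $\smp{(\_)}\colon\lbqmo\to\lbq$.

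By the first general observation recorded after the definitions --- a reflection in $\Lambda_1$ of $\Lambda_2$ makes $g$ and $f$ into a pre-Galois connection from $\Lambda_2$ to $\Lambda_1$ --- the inclusion together with $\smp{(\_)}$ forms a pre-Galois connection \emph{from $\lbq$ to $\lbqmo$}. The only point to watch is the direction of this connection: it must run from $\lbq$ to $\lbqmo$, which is precisely what lets confluence flow \emph{into} $\lbq$; here its condition~(3), namely $M\redd\smp{M}$ in $\lbq$ for $M\in\lbq$, is immediate since $M\eq\smp{M}$ by item~4. Then the second general observation --- a pre-Galois connection from $\Lambda_1$ to $\Lambda_2$ transports confluence of $\red_2$ to confluence of $\red_1$ --- yields confluence of $\lbq$ from that of $\lbqmo$.

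If one prefers an argument not routed through the general observation, the underlying diagram chase is short: given $M\redd N_1$ and $M\redd N_2$ in $\lbq$, item~1 gives the same two reductions in $\lbqmo$; confluence of $\lbqmo$ supplies a common reduct $P$; item~2 gives $\smp{N_1}\redd\smp{P}$ and $\smp{N_2}\redd\smp{P}$ in $\lbq$; and since $N_1,N_2\in\lbq$ we have $\smp{N_i}\eq N_i$ by item~4, so $\smp{P}$ is a common $\lbq$-reduct of $N_1$ and $N_2$. I expect no genuine obstacle: the content is entirely the repackaging of Theorem~\ref{thm:reflection-of-simplified} as a pre-Galois connection plus a citation of the confluence-transfer principle.
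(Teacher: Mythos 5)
Your proposal is correct and follows exactly the paper's route: Theorem~\ref{thm:reflection-of-simplified} is read as a reflection in $\lbqmo$ of $\lbq$, hence (by the general observation in Section~\ref{sec:back}) the inclusion and $\smp{(\_)}$ form a pre-Galois connection from $\lbq$ to $\lbqmo$, and confluence of $\lbqmo$ is pulled back to $\lbq$. The explicit diagram chase you add is just the unfolding of that transfer principle, so nothing differs in substance.
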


The definition of the kernel $\LNF$ of $\lbq$ is collected in Table \ref{tab:kernel-lQ}.

\begin{table*}
		\centering
		
		$$\begin{array}{rrcl}
			\text{(terms)} & M, N &::= & \rt{V}  \, | \,  \li{x}{V}{y}{N}  \, | \,  \Cutv{V}{x}{N} \\
			\text{(values)}& V,W &::=& x \, | \, \lb x.M
		\end{array}$$
		
		$$\begin{array}{rrcll}
			(\Bv) & \Cutv{\lb x.M}{y}{\li{y}{V}{z}{N}} &  \to & \Cutv{V}{x}{\Cutvc{M}{z}{N}}
			& \text{if $y \notin FV(V) \cup FV(N)$} \\ 
			(\sigmav) & \Cutv{V}{y}{N} & \to & \sub{V}{y}{N} & \text{if $\Bv$ does not apply}
		\end{array}$$
		%
		\caption{The kernel of the $\lbq$-calculus, named $\LNF$}
		\label{tab:kernel-lQ}
	\end{table*}

There is a map $\knl{(\_)}:\lbq\to\LNF$ defined as follows:
$$\begin{array}{rclcrcl}
	\knl{(\rt{V})} & \eq & \rt{\knlv{V}}&&	\knlv{x} & \eq & x\\
	\knl{\li{x}{V}{y}{N}}  & \eq & \li{x}{\knlv{V}}{y}{\knl{N}}&&	\knlv{(\lb x.M)} & \eq & \lb x. \knl{M}\\
	\knl{\Cut{M}{y}{N}} & \eq & \Cutvc{\knl{M}}{y}{\knl{N}}
\end{array}$$


The next 3 Lemmas are needed to obtain Theorem \ref{thm:reflection-of-kernel}.

\begin{lem}\label{lem:perm-LNF}
	For all $M,N \in \LNF$, $\Cut{M}{x}{N} \redd_{\pi} \Cutc{M}{x}{N}$ in $\lbq$.
\end{lem}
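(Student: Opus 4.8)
The plan is a straightforward induction on the structure of the $\LNF$-term $M$, following exactly the three clauses in the recursive definition of the derived operator $\Cutvc{\cdot}{\cdot}{\cdot}$, and using throughout the inclusion $\LNF\subset\lbq$ under which $\Cutv{V}{x}{N}$ is read as $\Cut{\rt{V}}{x}{N}$. The statement to prove is $\Cut{M}{x}{N}\redd_{\pi}\Cutvc{M}{x}{N}$ in $\lbq$; note that $N\in\LNF$ plays no active role and can be carried along symbolically.

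First, the base case $M=\rt{V}$: here $\Cut{\rt{V}}{x}{N}$ is, by the inclusion, literally $\Cutv{V}{x}{N}$, which equals $\Cutvc{\rt{V}}{x}{N}$ by the first defining clause; so zero $\pi$-steps are needed and the result holds reflexively. Next, the case $M=\li{z}{V}{y}{P}$ with $P\in\LNF$: one application of $\pi_1$ gives $\Cut{\li{z}{V}{y}{P}}{x}{N}\to_{\pi_1}\li{z}{V}{y}{\Cut{P}{x}{N}}$, and the induction hypothesis applied to $P$ yields $\li{z}{V}{y}{\Cut{P}{x}{N}}\redd_{\pi}\li{z}{V}{y}{\Cutvc{P}{x}{N}}$; the right-hand side is precisely $\Cutvc{\li{z}{V}{y}{P}}{x}{N}$ by the second defining clause. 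Finally, the case $M=\Cutv{V}{y}{P}$ with $P\in\LNF$: reading the redex as $\Cut{\Cut{\rt{V}}{y}{P}}{x}{N}$, one application of $\pi_2$ gives $\Cut{\rt{V}}{y}{\Cut{P}{x}{N}}$, i.e. $\Cutv{V}{y}{\Cut{P}{x}{N}}$; the induction hypothesis on $P$ then gives $\Cutv{V}{y}{\Cut{P}{x}{N}}\redd_{\pi}\Cutv{V}{y}{\Cutvc{P}{x}{N}}=\Cutvc{\Cutv{V}{y}{P}}{x}{N}$, by the third defining clause. This exhausts the grammar of $\LNF$-terms.

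There is essentially no hard step here: the only thing worth being careful about is the bookkeeping of the $\LNF\subset\lbq$ embedding — in particular that an $\LNF$ cut $\Cutv{V}{y}{P}$ is a $\lbq$ cut whose left premiss is $\rt{V}$, so that the outer cut in $\Cut{\Cutv{V}{y}{P}}{x}{N}$ is genuinely a $\pi_2$-redex — together with the observation that $\pi_1$ and $\pi_2$ carry no side conditions in $\lbq$, so no variable-freshness obligations arise along the way. (If one prefers, the same induction can be phrased as: $\Cut{M}{x}{N}$ reduces by $\pi$ to its $\pi$-normal form, which by definition of $\knl{(\_)}$ and the fact that $M,N$ are already $\pi$-normal is $\Cutvc{M}{x}{N}$.)
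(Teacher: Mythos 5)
Your proof is correct and is essentially identical to the paper's: both proceed by induction on $M$ following the three clauses defining the derived cut, with zero steps in the base case $M=\rt{V}$, one $\pi_1$-step plus the induction hypothesis for $M=\li{z}{V}{y}{P}$, and one $\pi_2$-step plus the induction hypothesis for $M=\Cutv{V}{y}{P}$ (read as $\Cut{\rt V}{y}{P}$ under the inclusion $\LNF\subset\lbq$). Your remarks about the embedding and the absence of side conditions are accurate bookkeeping, nothing more is needed.
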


\begin{proof}
	By induction on $M$.

		Case $M=\rt{V}$.
		$$\begin{array}{rcll}
			\Cut{\rt{V}}{x}{N} & \eq & \Cutc{\rt{V}}{x}{N}
		\end{array}$$
		
	Case $M=\li{y}{V}{z}{P}$.
		$$\begin{array}{rcll}
			 \Cut{\li{y}{V}{z}{P}}{x}{N} & \red_{\pi_1} &  \li{y}{V}{z}{\Cut{P}{x}{N}} \\
			& \redd & \li{y}{V}{z}{\Cutc{P}{x}{N}} & \text{(by IH)}\\
			& \eq & \Cutc{\li{y}{V}{z}{P}}{x}{N}
		\end{array}$$
		
		Case $M=\Cutv{V}{z}{P}$.
		$$\begin{array}{rcll}
			\Cut{\Cut{\rt{V}}{z}{P}}{x}{N} & \red_{\pi_2} & \Cut{\rt{V}}{z}{\Cut{P}{x}{N}} \\
			& \redd & \Cut{\rt{V}}{z}{\Cutc{P}{x}{N}} & \text{(by IH)}\\
			& \eq & \Cutc{\Cut{\rt{V}}{z}{P}}{x}{N}
		\end{array}$$

\end{proof}

\begin{cor}
For all $M\in\lbq$, $M\redd_{\pi}\knl{M}$.
\end{cor}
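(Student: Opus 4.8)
The plan is a straightforward induction following the recursive structure of $\knl{(\_)}$, using Lemma~\ref{lem:perm-LNF} to absorb the single step where the generic cut $\Cut{\cdot}{\cdot}{\cdot}$ must be turned into the iterated left-permuted cut $\Cutc{\cdot}{\cdot}{\cdot}$. Since $\knl{(\_)}$ is defined by simultaneous recursion on terms and values, I would prove the statement together with the companion claim on values, namely that $V\redd_{\pi}\knlv{V}$ for all $V\in\lbq$; here $\redd_{\pi}$ on values is read via the evident congruence of $\pi$-reduction under all term and value constructors, which I would note explicitly at the outset (it is immediate, since $\pi_1,\pi_2$ are ordinary rewrite rules closed under contexts).

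First I would dispatch the value cases. For $V=x$ there is nothing to do, as $x=\knlv{x}$. For $V=\lb x.M$, the induction hypothesis gives $M\redd_{\pi}\knl{M}$, and closing under $\lambda$-abstraction yields $\lb x.M\redd_{\pi}\lb x.\knl{M}=\knlv{(\lb x.M)}$. Then I would handle the term cases. For $M=\rt{V}$, the induction hypothesis on $V$ and closure under $\rt{(\_)}$ give $\rt{V}\redd_{\pi}\rt{\knlv{V}}=\knl{(\rt{V})}$. For $M=\li{x}{V}{y}{N}$, applying the induction hypothesis to $V$ and $N$ and closing under the left-introduction constructor gives $\li{x}{V}{y}{N}\redd_{\pi}\li{x}{\knlv{V}}{y}{\knl{N}}=\knl{\li{x}{V}{y}{N}}$.

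The only case with content is $M=\Cut{M_1}{y}{N}$. The induction hypotheses give $M_1\redd_{\pi}\knl{M_1}$ and $N\redd_{\pi}\knl{N}$, so by congruence $\Cut{M_1}{y}{N}\redd_{\pi}\Cut{\knl{M_1}}{y}{\knl{N}}$. Now $\knl{M_1}$ and $\knl{N}$ lie in $\LNF$ (a fact read off directly from the clauses defining $\knl{(\_)}$, which only produce terms of $\LNF$ shape), so Lemma~\ref{lem:perm-LNF} applies and yields $\Cut{\knl{M_1}}{y}{\knl{N}}\redd_{\pi}\Cutc{\knl{M_1}}{y}{\knl{N}}$. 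By definition the right-hand side is exactly $\knl{\Cut{M_1}{y}{N}}$, so chaining the two multi-step reductions closes the case.

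I do not expect any genuine obstacle: the proof is entirely routine once one has Lemma~\ref{lem:perm-LNF} in hand. The only points worth stating carefully are (i) that $\pi$-reduction is a congruence, so that the induction hypotheses can be invoked inside subterms, and (ii) that the image of $\knl{(\_)}$ is contained in $\LNF$, which is what legitimizes the appeal to Lemma~\ref{lem:perm-LNF} in the cut case. Both are immediate from the respective definitions, so they can be mentioned in a single sentence rather than proved separately.
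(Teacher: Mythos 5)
Your proof is correct and follows essentially the same route as the paper: induction on $M$ (with the companion claim for values), congruence of $\pi$-reduction for the easy cases, and Lemma~\ref{lem:perm-LNF} applied to $\knl{M_1}$ and $\knl{N}$ in the cut case, which is exactly how the paper dispatches the only interesting case. Your explicit remarks that $\pi$ is closed under contexts and that the image of $\knl{(\_)}$ lies in $\LNF$ are the same (implicit) justifications the paper relies on.
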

\begin{proof}
	By induction on $M$. The only interesting cases is $M=\Cut{M'}{x}{N}$, which follows by IH and Lemma \ref{lem:perm-LNF}.
\end{proof}

\begin{lem}\label{lem:subs-LQ-LNF}
	For all $V,W,M \in \lbq$:
	\begin{enumerate}
		\item $\knlv{(\sub{V}{x}{W})} \eq \sub{\knlv{V}}{x}{\knlv{W}}$.
		\item $\knl{(\sub{V}{x}{M})} \eq \sub{\knlv{V}}{x}{\knl{M}}$.
	\end{enumerate} 
\end{lem}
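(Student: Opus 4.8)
The plan is to prove items (1) and (2) simultaneously, by induction on the structure of the value $W$ and of the term $M$ in $\lbq$, keeping every binder fresh with respect to $x$ and $V$ so that no capture occurs. A preliminary remark: on the left-hand sides the substitutions are those of $\lbq$, whereas on the right-hand sides they are those of $\LNF$; but $\LNF$ is closed under $\lbq$-substitution (a critical clause produces $\Cut{\rt{V'}}{y}{N}$, which is the $\LNF$-form $\Cutv{V'}{y}{N}$) and the two operations agree on $\LNF$-expressions, so I shall not belabour the distinction. Note also that, unlike for Lemma~\ref{lem:maps-of-subs}, the map $\knl{(\_)}$ never turns a cut into a substitution, so no composed substitutions arise and nothing like Lemma~\ref{lem:sub-of-sub} is needed.

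For item (1) the cases $W=x$ and $W=y$ (with $y\neq x$) are immediate from the definitions, and the case $W=\lb y.P$ reduces, via $\knlv{(\lb y.P)}=\lb y.\knl{P}$, to item (2) for $P$. For item (2) the case $M=\rt{W}$ follows from item (1) applied to $W$. The case $M=\li{y}{W}{z}{N}$ needs attention because $\lbq$-substitution has a non-standard critical clause exactly at $\sub{V}{x}{(\li{x}{W'}{z}{N'})}$, which moreover distinguishes $V$ a variable from $V$ a $\lb$-abstraction. When $y\neq x$, substitution is componentwise on both sides and the case closes by the induction hypotheses on $W$ and $N$; likewise when $y=x$ and $V$ is a variable (both sides rename the head and recurse componentwise). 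When $y=x$ and $V=\lb w.M'$, applying $\knl{(\_)}$ to the left-hand side gives $\Cutv{\knlv{V}}{x}{\li{x}{\knlv{(\sub{V}{x}{W})}}{z}{\knl{(\sub{V}{x}{N})}}}$, and the right-hand side, obtained by the critical clause of $\LNF$-substitution with $\knlv{V}=\lb w.\knl{M'}$, gives the same term; the induction hypotheses on $W$ and $N$ conclude.

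The genuinely delicate case is $M=\Cut{M_1}{y}{N}$, where $\knl{\Cut{M_1}{y}{N}}=\Cutvc{\knl{M_1}}{y}{\knl{N}}$. After pushing the substitution inside the primitive cut on the left and invoking the induction hypotheses on $M_1$ and $N$, the case reduces to the commutation
$$\sub{\knlv{V}}{x}{(\Cutvc{P}{y}{Q})}\eq\Cutvc{\sub{\knlv{V}}{x}{P}}{y}{\sub{\knlv{V}}{x}{Q}}\qquad(P,Q\in\LNF)\enspace.$$
I expect this to be the main obstacle, and I would isolate it as an auxiliary lemma (for every value $U$, with $y$ fresh for $U$ and $y\neq x$): $\sub{U}{x}{(\Cutvc{P}{y}{Q})}\eq\Cutvc{\sub{U}{x}{P}}{y}{\sub{U}{x}{Q}}$. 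It is proved by induction on $P$ following the three defining clauses of $\Cutvc{M}{z}{N}$: the base case $P=\rt{W}$ unfolds $\Cutvc{\rt{W}}{y}{Q}=\Cutv{W}{y}{Q}$ and is immediate; the case $P=\li{u}{W}{v}{P'}$ splits again on whether $u=x$, and when $u=x$ with $U$ a $\lb$-abstraction one uses that this derived operator commutes past the outer cut $\Cutv{U}{x}{-}$ created by the critical clause (via the clause $\Cutvc{\Cutv{W}{u}{M}}{z}{N}=\Cutv{W}{u}{\Cutvc{M}{z}{N}}$), followed by the induction hypothesis on $P'$; the case $P=\Cutv{W}{u}{P'}$ is analogous and simpler. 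With this auxiliary lemma in hand the cut-case of the main induction closes, completing the simultaneous induction.
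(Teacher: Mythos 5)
Your proposal is correct and follows essentially the same route as the paper: a simultaneous structural induction on $W$ and $M$, with the only delicate points being the critical clause of substitution at a left-introduction whose head is $x$ and the cut case. The one thing you add is to isolate explicitly the commutation $\sub{U}{x}{(\Cutvc{P}{y}{Q})}=\Cutvc{\sub{U}{x}{P}}{y}{\sub{U}{x}{Q}}$ as an auxiliary lemma for the case $M=\Cut{M_1}{y}{N}$ (where $\knl{\Cut{M_1}{y}{N}}=\Cutvc{\knl{M_1}}{y}{\knl{N}}$); the paper's proof subsumes this under ``the remaining cases are analogous,'' so your treatment is a more complete rendering of the same argument rather than a different one.
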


\begin{proof}
	By simultaneous induction on $W$ and $M$.
	
		Case $W=x$.
		$$\begin{array}{rcll}
			\knlv{(\sub{V}{x}{x})} & \eq & \knlv{V} \\
			& \eq & \sub{\knlv{V}}{x}{x}\\
			& \eq & \sub{\knlv{V}}{x}{\knlv{x}}
		\end{array}$$
	
		Case $W=y$.
		$$\begin{array}{rcll}
			\knlv{(\sub{V}{x}{y})} & \eq & \knlv{y} \\
			& \eq & y \\
			& \eq & \sub{\knlv{V}}{x}{y}\\
			& \eq & \sub{\knlv{V}}{x}{\knlv{y}}
		\end{array}$$
	
		Case $W=\lb y.P$.
		$$\begin{array}{rcll}
			\knlv{(\sub{V}{x}{(\lb y.P)})} & \eq & \knlv{(\lb y. \sub{V}{x}{P})} \\
			& \eq & \lb y. \knl{(\sub{V}{x}{P})}\\
			& \eq & \lb y. \sub{\knlv{V}}{x}{\knl{P}} & \text{(by IH)}\\
			& \eq & \sub{\knlv{V}}{x}{\knlv{(\lb y.P)}}
		\end{array}$$
	
		Case $M=\rt{W}$.
		$$\begin{array}{rcll}
			\knl{(\sub{V}{x}{(\rt{W})})} & \eq & \knl{(\rt{(\sub{V}{x}{W})})} \\
			& \eq & \rt{\knlv{(\sub{V}{x}{W})}}\\
			& \eq &\rt{( \sub{\knlv{V}}{x}{\knlv{W}})} & \text{(by IH)}\\
			& \eq & \sub{\knlv{V}}{x}{\knl{(\rt{W})}}
		\end{array}$$
	
		The remaining cases are analogous.

\end{proof}

\begin{lem}\label{lem:id-LNF}
	For all $M \in \LNF$, $\Cutvc{M}{y}{y} \redd M$ in $\LNF$.
\end{lem}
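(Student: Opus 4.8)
The plan is to prove the statement by structural induction on $M \in \LNF$, in exactly the style of the proof of Lemma~\ref{lem:perm-LNF} and of its $\lbc$-level analogue $\ltc{x}{M}{x} \to M$ discussed in Appendix~\ref{sec:kernel-lbc}. I read the third argument of the cut as the term $\rt{y}$, so that $\Cutvc{M}{y}{\rt{y}}$ is the general cut of $M$ against the identity continuation $y.\rt{y}$; the induction then splits on the top constructor of $M$, using in each case the corresponding defining clause of the derived operator $\Cutvc{(\_)}{(\_)}{(\_)}$.

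In the base case $M = \rt{V}$ the derived cut unfolds to $\Cutvc{\rt{V}}{y}{\rt{y}} = \Cutv{V}{y}{\rt{y}}$. Rule $\Bv$ cannot fire, since the continuation $y.\rt{y}$ is a dereliction and not a left-introduction $y.\li{y}{W}{z}{N}$; hence the proviso of $\sigmav$ is met and $\Cutv{V}{y}{\rt{y}} \red_{\sigmav} \sub{V}{y}{\rt{y}} = \rt{V} = M$, using that substitution commutes with dereliction and that $\sub{V}{y}{y} = V$. In the two inductive cases, $M = \li{x}{V}{z}{P}$ or $M = \Cutv{V}{z}{P}$ (with $z \neq y$ after $\alpha$-renaming), the relevant clause of the definition pushes the derived cut through the outer constructor, producing $\li{x}{V}{z}{\Cutvc{P}{y}{\rt{y}}}$, respectively $\Cutv{V}{z}{\Cutvc{P}{y}{\rt{y}}}$; applying the induction hypothesis $\Cutvc{P}{y}{\rt{y}} \redd P$ together with the fact that $\redd$ is compatible with $\LNF$'s term-forming operators — the same closure already invoked in the proof of Lemma~\ref{lem:perm-LNF} — one concludes $\Cutvc{M}{y}{\rt{y}} \redd M$, with all intermediate terms staying in $\LNF$ because prepending a fixed constructor preserves membership in the grammar.

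I do not expect a genuine obstacle: the statement is an $\eta$-style identity for cuts and the induction is routine bookkeeping about the derived constructor $\Cutvc{(\_)}{(\_)}{(\_)}$. The single point deserving an explicit check is the side condition of $\sigmav$ in the base case, and it holds automatically for the reason given above; the inductive cases are then immediate from the induction hypothesis and the compatibility of $\redd$ with contexts.
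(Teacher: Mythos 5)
Your proof is correct and follows essentially the same route as the paper's: structural induction on $M$, discharging the base case $M=\rt{V}$ with a single $\sigmav$-step (reading the lemma's second occurrence of $y$ as the term $\rt{y}$, which is indeed how the lemma is used in Theorem~\ref{thm:reflection-of-kernel}) and the two inductive cases via the defining clauses of $\Cutvc{(\_)}{(\_)}{(\_)}$, the induction hypothesis, and compatibility of $\redd$. Your explicit check that the $\sigmav$ proviso holds because $y.\rt{y}$ is not a left-introduction is a small point the paper leaves implicit, but the argument is the same.
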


\begin{proof}
	By induction on $M$.

		Case $M=\rt{V}$.
		$$\begin{array}{rcll}
			\Cutvc{\rt{V}}{y}{y} & \eq & \Cutv{V}{y}{y} \\
			& \red_{\sigmav} & \sub{V}{y}{y}\\
			& \eq & V
		\end{array}$$
	
	Case $M=\li{x}{V}{z}{N}$.
		$$\begin{array}{rcll}
			\Cutvc{\li{x}{V}{z}{N}}{y}{y} & \eq & \li{x}{V}{z}{\Cutvc{N}{y}{y}} \\
			& \redd & \li{x}{V}{z}{\Cutv{N}{y}{y}} & \text{(by IH)}
		\end{array}$$
	
	Case $M=\Cutv{W}{z}{N}$.
		$$\begin{array}{rcll}
			\Cutvc{\Cutv{W}{z}{N}}{y}{y} & \eq & \Cutv{W}{z}{\Cutvc{N}{y}{y}} \\
			& \redd & \Cutv{W}{z}{\Cutv{N}{y}{y}} & \text{(by IH)}
		\end{array}$$

\end{proof}

\begin{thm}[Reflection in $\lbq$ of $\LNF$]\label{thm:reflection-of-kernel}\quad
	\begin{enumerate}
		\item If $M\red N$ in $\LNF$ then $M\redd N$ in $\lbq$.
		\item If $M\red N$ in $\lbq$ then $\knl{M}\redd \knl{N}$ in $\LNF$.
		\item $M\redd \knl{M}$ in $\lbq$.
		\item For all $M\in\LNF$, $M\eq \knl{M}$
	\end{enumerate}
\end{thm}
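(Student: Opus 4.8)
The plan is to transcribe the proof of Theorem~\ref{thm:reflection-of-simplified}, replacing the map $\smp{(\_)}$ by $\knl{(\_)}$ and the role of ``$\lbq$ as kernel'' by ``$\LNF$ as kernel''. Items~3 and~4 are bookkeeping, items~1 and~2 carry the content, and the $\sigmav$-case of item~2 is the one requiring genuine care, precisely because $\knl{(\_)}$ is not a homomorphism: it collapses general cuts through $\Cutvc{}{}{}$. Item~4 is a routine simultaneous induction on $M$ and $V$; the only case worth a line is $M=\Cutv{V}{x}{N}$, read as $\Cut{\rt V}{x}{N}$, where $\knl{(\Cut{\rt V}{x}{N})}=\Cutvc{\rt{\knlv V}}{x}{\knl N}$ and the induction hypothesis ($\knlv V=V$ and $\knl N=N$) together with the defining clause $\Cutvc{\rt V}{x}{N}=\Cutv V x N$ closes it. Item~3 is immediate from the Corollary to Lemma~\ref{lem:perm-LNF}, which gives the stronger $M\redd_{\pi}\knl M$. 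For item~1 I would induct on $M\red N$ in $\LNF$, reading $\Cutv{V}{x}{N}$ as $\Cut{\rt V}{x}{N}$: the $\sigmav$-rule of $\LNF$ is literally the $\sigmav$-rule of $\lbq$ (same term, same side condition ``$\Bv$ does not apply''), and the $\LNF$-rule $\Cutv{\lb x.M}{y}{\li{y}{V}{z}{N}}\red\Cutv{V}{x}{\Cutvc{M}{z}{N}}$ is simulated by one $\lbq$-$\Bv$-step to $\Cut{\Cut{\rt V}{x}{M}}{z}{N}$ followed, via Lemma~\ref{lem:perm-LNF} and the defining clauses of $\Cutvc{}{}{}$, by $\pi$-steps down to $\Cutv{V}{x}{\Cutvc{M}{z}{N}}$.

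For item~2 I would induct on $M\red N$ in $\lbq$. The compatible-closure cases rest on routine monotonicity properties of $\Cutvc{}{}{}$ in each argument, which in turn lean on the associativity $\Cutvc{\Cutvc{A}{y}{B}}{x}{C}=\Cutvc{A}{y}{\Cutvc{B}{x}{C}}$ (induction on $A$) and on substitution commuting with $\Cutvc{}{}{}$. The $\pi_1$- and $\pi_2$-cases are no-ops: $\knl{(\_)}$ already identifies source and target, by the clauses of $\Cutvc{}{}{}$ for $\pi_1$ and by the said associativity for $\pi_2$. The $\etact$-case $\Cut{M}{x}{\rt x}\red M$ becomes $\Cutvc{\knl M}{x}{\rt x}\redd\knl M$, which is exactly Lemma~\ref{lem:id-LNF}. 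The $\Bv$-case runs as in item~1 read through $\knl{(\_)}$: both sides map, after one $\Bv$-step in $\LNF$, to $\Cutv{\knlv V}{x}{\Cutvc{\knl M}{z}{\knl N}}$, the proviso $y\notin FV(V)\cup FV(N)$ surviving because $\knl{(\_)}$ preserves free variables.

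The main obstacle is the $\sigmav$-case: from $\Cut{\rt V}{y}{N}\red\sub V y N$ in $\lbq$ (with $\Bv$ inapplicable) we must obtain $\knl{(\Cut{\rt V}{y}{N})}=\Cutv{\knlv V}{y}{\knl N}\redd\sub{\knlv V}{y}{\knl N}=\knl{(\sub V y N)}$ in $\LNF$, where the outer equalities are the definition of $\knl{(\_)}$ and Lemma~\ref{lem:subs-LQ-LNF}. The difficulty, absent for $\smp{(\_)}$ in Theorem~\ref{thm:reflection-of-simplified}, is that $\knl{(\_)}$ may flatten a general cut sitting inside $N$ into a left-introduction, so that $\knl N$ has the shape $\li{y}{W}{z}{Q}$ with $y\notin FV(W)\cup FV(Q)$ although $N$ itself does not: thus $\Bv$ may fire in $\LNF$ on $\Cutv{\knlv V}{y}{\knl N}$ although it did not fire in $\lbq$. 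I would defuse this with the observation that, whenever $\Bv$ does fire in $\LNF$ on $\Cutv{\lb x.P}{y}{\li{y}{W}{z}{Q}}$, one has $\sub{\lb x.P}{y}{(\li{y}{W}{z}{Q})}=\Cutv{\lb x.P}{y}{\li{y}{W}{z}{Q}}$: the critical clause of value-substitution returns exactly that cut, the inner substitutions being vacuous since $y$ occurs in $\li{y}{W}{z}{Q}$ only as the subject variable. Hence $\Cutv{\knlv V}{y}{\knl N}\redd\sub{\knlv V}{y}{\knl N}$ in every case --- by one $\sigmav$-step when $\Bv$ is inapplicable in $\LNF$, and by zero steps when it is (and there is nothing to check when $\knlv V=V$ is a variable). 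This closes item~2; items~1 and~2 then give a pre-Galois connection from $\LNF$ to $\lbq$, whence confluence of $\LNF$ can be pulled back from that of $\lbq$.
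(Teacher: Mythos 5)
Your proof is correct and takes essentially the same route as the paper's: read $\LNF$ inside $\lbq$ via $\Cutv{V}{x}{N}=\Cut{\rt V}{x}{N}$, get items 1 and 3 from $M\redd_{\pi}\knl{M}$ (Lemma~\ref{lem:perm-LNF} and its corollary), and prove item 2 by induction on the step using Lemmas~\ref{lem:subs-LQ-LNF} and~\ref{lem:id-LNF}. Your handling of the $\sigmav$-case of item 2 is in fact more careful than the paper's own, which fires $\sigmav$ in $\LNF$ without checking that its proviso survives the collapsing effect of $\knl{(\_)}$ (a cut inside $N$ may flatten to a left-introduction and create a $\Bv$-redex); your observation that in exactly that situation the value substitution returns the cut unchanged, so zero steps suffice, correctly closes this small gap.
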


\begin{proof}
	The proof of the item 1 is by induction on the relation $M\red N$ in $\LNF$, where we must think of $\Cutv{V}{x}{M}$ as $\Cut{\rt{V}}{x}{M}$. We proceed by cases:

		Case $(\Bv)$.
		$$\begin{array}{rcll}
		  \Cut{\rt{(\lb x.M)}}{y}{\li{y}{V}{z}{N}} & \red_{\Bv} & \Cut{\Cut{\rt{V}}{x}{M}}{z}{N} \\
			& \redd & \Cutc{\Cut{\rt{V}}{x}{M}}{z}{N} & \text{(by Lemma \ref{lem:perm-LNF})}\\
			& \eq & \Cut{\rt{V}}{x}{\Cutc{M}{z}{N}}
		\end{array}$$
	
	Case $(\sigmav)$.
		$$\begin{array}{rcll}
			\Cut{\rt{(\lb x.M)}}{y}{\rt{y}} & \red_{\id} & \lb x.M \\
			& \eq & \sub{\lb x.M}{y}{y}
		\end{array}$$
		and 
		$$\begin{array}{rcll}
			\Cut{\rt{V}}{y}{N} & \red_{\sigmav} & \sub{V}{y}{N}
		\end{array}$$

	The proof of the item 2 is by induction on the relation $M\red N$ in $\lbq$. We proceed by cases:
	
Case $(\Bv)$.
		$$\begin{array}{rcll}
			 \knl{\Cut{\rt{(\lb x.M)}}{y}{\li{y}{V}{z}{N}}} & \eq & \Cutv{\lb x. \knl{M}}{y}{\li{y}{\knlv{V}}{z}{\knl{N}}}\\
			& \red_{\Bv} & \Cut{\knlv{V}}{x}{\Cutc{\knl{M}}{z}{\knl{N}}}\\
			& \eq &\Cutvc{\Cutv{\knlv{V}}{x}{\knl{M}}}{z}{\knl{N}}\\
			& \eq & \knl{\Cut{\Cut{\rt{V}}{x}{M}}{z}{N}}
		\end{array}$$
	
Case $(\sigmav)$.
		$$\begin{array}{rcll}
			\knl{\Cut{\rt{V}}{y}{N}} & \eq & \Cutvc{\knlv{V}}{y}{\knl{N}}\\
			& \red_{\sigmav} & \sub{\knlv{V}}{y}{\knl{N}}\\
			& \eq & \knl{(\sub{V}{y}{N})} & \text{(by Lemma \ref{lem:subs-LQ-LNF})}
		\end{array}$$
	
		Case $(\id)$.
		$$\begin{array}{rcll}
			\knl{\Cut{M}{y}{\rt{y}}} & \eq & \Cutvc{\knl{M}}{y}{\rt{y}}\\
			& \redd & \knl{M} & \text{(by Lemma \ref{lem:id-LNF})}
		\end{array}$$
	
	Case $(\pi_1)$.
		$$\begin{array}{rcll}
		  \knl{\Cut{\li{z}{V}{y}{P}}{x}{N}} & \eq & \Cutvc{\li{z}{\knlv{V}}{y}{\knl{P}}}{x}{\knl{N}} \\
			& \eq & \li{z}{\knlv{V}}{y}{\Cutvc{\knl{P}}{x}{\knl{N}}}\\
			& \eq & \knl{\li{z}{V}{y}{\Cut{P}{x}{N}}}
		\end{array}$$
	
	 Case $(\pi_2)$ is analogous to the previous one.

	Item 3 follow by simultaneous induction on $V$ and $M \in \lbq$. We proceed by cases:

	Case $V=x$.
		$$\begin{array}{rclll}
			x & \redd & x &\eq & \knlv{x}
		\end{array}$$
		
	Case $V=\lb x.P$.
		$$\begin{array}{rcll}
			\lb x.P & \redd & \lb x.\knl{P} & \text{(by IH)}\\
			& \eq & \knlv{(\lb x.P)}
		\end{array}$$
	
		Case $M=\rt{W}$.
		$$\begin{array}{rcll}
			\rt{W} & \redd & \rt{\knlv{W}} & \text{(by IH)}\\
			& \eq & \knl{(\rt{W})}
		\end{array}$$
		
		Case $M=\li{x}{W}{y}{N}$.
		$$\begin{array}{rcll}
			\li{x}{W}{y}{N}  & \redd & \li{x}{\knlv{W}}{y}{\knl{N}} & \text{(by IH)}\\
			& \eq & \knl{\li{x}{W}{y}{N}}
		\end{array}$$
	
	Case $M=\Cut{M}{y}{N}$.
		$$\begin{array}{rcll}
			\Cut{M}{y}{N}  & \redd & \Cut{\knl{M}}{y}{\knl{N}} & \text{(by IH)}\\
			& \redd & \Cutc{\knl{M}}{y}{\knl{N}} & \text{(by Lemma \ref{lem:perm-LNF})}\\
			& \eq & \knl{\Cut{M}{y}{N}}
		\end{array}$$

	If we consider $\Cutv{V}{x}{M} = \Cut{\rt{V}}{x}{M}$, then item 4 is trivial.
	
\end{proof}

\begin{cor}
	Reduction in $\LNF$ is confluent.
\end{cor}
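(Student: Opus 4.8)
The plan is to obtain the corollary for free from the abstract rewriting facts recorded in the preliminaries, exactly as was done for the confluence of $\lbqmo$ and of $\lbq$. Recall two observations made there: a reflection in $\Lambda_1$ of $\Lambda_2$, read with the roles of the two maps interchanged, yields a pre-Galois connection from $\Lambda_2$ to $\Lambda_1$; and a pre-Galois connection from $\Lambda_1$ to $\Lambda_2$ with $\red_2$ confluent makes $\red_1$ confluent as well. So no genuine rewriting argument is needed for $\LNF$ itself; everything reduces to quoting Theorem~\ref{thm:reflection-of-kernel} together with the already-established confluence of $\lbq$.

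First I would make explicit that Theorem~\ref{thm:reflection-of-kernel} is precisely the statement that $\knl{(\_)}\colon\lbq\to\LNF$ and the inclusion $\LNF\subseteq\lbq$ (viewing $\Cutv{V}{x}{N}$ as $\Cut{\rt V}{x}{N}$) form a reflection in $\lbq$ of $\LNF$: item~2 is clause~(1), item~1 is clause~(2), item~3 is clause~(3), and item~4 is clause~(4) of the definition of reflection. By the first observation above, the same two maps --- now with the inclusion as the first map and $\knl{(\_)}$ as the second --- form a pre-Galois connection from $\LNF$ to $\lbq$. By the second observation, since reduction in $\lbq$ is confluent (the earlier corollary, obtained by pulling confluence back along the pre-Galois connection from $\lbq$ to $\lbqmo$, itself grounded in the confluence of $\lbc$), reduction in $\LNF$ is confluent.

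I do not anticipate any real difficulty: the statement is a bookkeeping consequence of what is already in place. The only point that warrants a line of care is matching the three clauses of ``pre-Galois connection from $\LNF$ to $\lbq$'' against the items of Theorem~\ref{thm:reflection-of-kernel}: clause~(1), simulation of $\LNF$-steps in $\lbq$, is item~1; clause~(2), simulation of $\lbq$-steps under $\knl{(\_)}$, is item~2; and clause~(3), which here reads $M\redd\knl{M}$ in $\LNF$ for $M\in\LNF$, is discharged not by item~3 but by item~4, since $\knl{M}=M$ on $\LNF$ makes the required reduction trivial. Once this alignment is spelled out, the corollary follows immediately.
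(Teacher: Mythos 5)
Your proposal is correct and matches the paper's (implicit) argument exactly: the corollary is obtained by reading Theorem~\ref{thm:reflection-of-kernel} as the reflection in $\lbq$ of $\LNF$, turning it around into a pre-Galois connection from $\LNF$ to $\lbq$ via the general observation in the preliminaries, and pulling back the previously established confluence of $\lbq$. Your careful matching of the theorem's items to the clauses (in particular discharging clause~(3) of the pre-Galois connection via item~4, $\knl{M}=M$ on $\LNF$) is exactly the bookkeeping the paper leaves to the reader.
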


\subsection{Proofs of Section \ref{sec:VFS}}

The next 3 Lemmas are needed to obtain Theorem \ref{thm:simul-vfs}.

\begin{lem}\label{lem:vfs-id}
	For all $M \in \VFS$, $\Cutvfsc{M}{z.\rt{z}} \redd M$ in $\VFS$.
\end{lem}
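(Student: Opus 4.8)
The plan is to prove Lemma~\ref{lem:vfs-id} by structural induction on $M \in \VFS$, unfolding the recursive definitions of the auxiliary operations $\Cutvfsc{M}{c'}$ and $(c:c')$ and reducing the administrative clutter away with the two reduction rules of $\VFS$.

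For the base case $M = \rt V$: by definition $\Cutvfsc{\rt V}{z.\rt z} = \Cutvfs{V}{z.\rt z}$, which is the cut $\Cutv{V}{z}{\rt z}$. A single application of rule $\sigmav$ gives $\Cutv{V}{z}{\rt z} \to \sub{V}{z}{\rt z} = \rt{(\sub{V}{z}{z})} = \rt V = M$, hence $\Cutvfsc{M}{z.\rt z} \redd M$. For the inductive case $M = \Cutvfs{V}{c}$: by definition $\Cutvfsc{\Cutvfs{V}{c}}{z.\rt z} = \Cutvfs{V}{(c:z.\rt z)}$, and I would do a case analysis on the formal context $c$. If $c = x.N$, then $(c:z.\rt z) = x.\Cutvfsc{N}{z.\rt z}$; since $N$ is a proper subterm of $M$, the induction hypothesis gives $\Cutvfsc{N}{z.\rt z} \redd N$, and closure of reduction under the term constructors yields $\Cutvfs{V}{x.\Cutvfsc{N}{z.\rt z}} \redd \Cutvfs{V}{x.N} = M$. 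If $c = \garg{W}{x}{N}$, then $(c:z.\rt z) = \garg{W}{x}{\Cutvfsc{N}{z.\rt z}}$; again $N$ is a proper subterm of $M$, so the induction hypothesis and compatibility give $\Cutvfs{V}{\garg{W}{x}{\Cutvfsc{N}{z.\rt z}}} \redd \Cutvfs{V}{\garg{W}{x}{N}} = M$.

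There is no real obstacle here. The only two points needing a word of care are: (i) that the recursion is well-founded — in either shape of $c$ the subterm $N$ reached by unfolding $(c:z.\rt z)$ is structurally smaller than $M = \Cutvfs{V}{c}$, so a plain induction on $M$ suffices and no simultaneous induction on $c$ is needed; and (ii) that $\redd$ is the reflexive–transitive closure of the compatible closure of $\to$, so the induction hypothesis lifts through the enclosing cut and through the $\garg{W}{x}{\_}$ slot. The equation $\sub{V}{z}{\rt z} = \rt V$ used in the base case is immediate from the definition of substitution. This lemma, together with the two that follow it in the appendix, will feed into the Simulation Theorem~\ref{thm:simul-vfs}.
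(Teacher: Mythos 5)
Your proof is correct and follows essentially the same route as the paper's: structural induction on $M$ with the base case $M=\rt V$ discharged by a single $\sigmav$-step and the two cut cases (split on the shape of the formal context) handled by the induction hypothesis on the inner term plus compatibility of $\redd$. The additional remarks on well-foundedness and on lifting $\redd$ through the constructors are fine but not needed beyond what the paper already assumes implicitly.
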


\begin{proof}
	By induction on $M$.
	
	Case $M=\rt{V}$.
		$$\begin{array}{rcll}
			\Cutvfsc{\rt{V}}{z.\rt{z}} & \eq & \Cutvfs{V}{z.\rt{z}}\\
			& \red_{\sigmav} & \sub{V}{z}{(\rt{z})}\\
			& \eq & \rt{V}
		\end{array}$$
	
	    Case $M=\Cutvfs{V}{x.N}$.
		$$\begin{array}{rcll}
			\Cutvfsc{\Cutvfs{V}{x.N}}{z.\rt{z}} & \eq & \Cutvfs{V}{(x.N):z.\rt{z}}\\
			& \eq & \Cutvfs{V}{x.\Cutvfsc{N}{z.\rt{z}}}\\
			& \redd &\Cutvfs{V}{x.N} & \text{(by IH)}
		\end{array}$$
	
		Case $M=\Cutvfs{V}{\garg{W}{x}{N}}$.
		$$\begin{array}{rcll}
			\Cutvfsc{\Cutvfs{V}{\garg{W}{x}{N}}}{z.\rt{z}} & \eq & \Cutvfs{V}{\garg{W}{x}{N}:z.\rt{z}}\\
			& \eq & \Cutvfs{V}{\garg{W}{x}{\Cutvfsc{N}{z.\rt{z}}}}\\
			& \redd &  \Cutvfs{V}{\garg{W}{x}{N}} & \text{(by IH)}
		\end{array}$$
	
\end{proof}

\begin{lem}\label{lem:vfs-red}
	For all $M \in \lbc$, $N,N' \in \VFS$, if $N \red N'$ in $\VFS$ then $\vfsc{M}{x.N} \red \vfsc{M}{x.N'}$ in $\VFS$.
\end{lem}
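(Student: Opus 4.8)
The plan is to argue by induction on the structure of the $\lbc$-term $M$, unfolding in each case the defining clause of the VFS-translation from Table~\ref{tab:vfs-translation}. The one external fact used throughout is that reduction in $\VFS$ is a congruence, i.e.\ closed under the term and formal-context constructors; hence $N\red N'$ gives at once $\Cutvfs{V}{x.N}\red\Cutvfs{V}{x.N'}$ and $\Cutvfs{V}{\garg{W}{y}{N}}\red\Cutvfs{V}{\garg{W}{y}{N'}}$ for any value $V$ and any $W,y$.

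In the cases where $M$ is a value ($M=x$ or $M=\lb y.M'$) or an application of two values ($M=VW$), the translation $\vfsc{M}{x.N}$ is, respectively, $\Cutvfs{\vfsv{V}}{x.N}$ or $\Cutvfs{\vfsv{V}}{\garg{\vfsv{W}}{x}{N}}$, so the required single reduction step is obtained directly by congruence. If $M=\lt{y}{M'}{P}$, then $\vfsc{M}{x.N}=\vfsc{M'}{y.\vfsc{P}{x.N}}$; the induction hypothesis on $P$ gives $\vfsc{P}{x.N}\red\vfsc{P}{x.N'}$, and the induction hypothesis on $M'$ applied to this reduction (which sits inside the body of the formal context $y.(-)$) yields $\vfsc{M'}{y.\vfsc{P}{x.N}}\red\vfsc{M'}{y.\vfsc{P}{x.N'}}$, as wanted. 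If $M=PQ$ with $P$ not a value, then $\vfsc{M}{x.N}=\vfsc{P}{m.\vfsc{mQ}{x.N}}$; I first establish $\vfsc{mQ}{x.N}\red\vfsc{mQ}{x.N'}$ and then lift this reduction through $\vfsc{P}{m.(-)}$ by the induction hypothesis on $P$. For the intermediate claim: if $Q$ is a value, $\vfsc{mQ}{x.N}=\Cutvfs{m}{\garg{\vfsv{Q}}{x}{N}}$ and congruence suffices; if $Q$ is not a value, $\vfsc{mQ}{x.N}=\vfsc{Q}{n.\Cutvfs{m}{\garg{n}{x}{N}}}$, congruence gives $\Cutvfs{m}{\garg{n}{x}{N}}\red\Cutvfs{m}{\garg{n}{x}{N'}}$, and the induction hypothesis on $Q$ lifts it through $\vfsc{Q}{n.(-)}$. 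The remaining case $M=VQ$ with $Q$ not a value is identical, starting from $\vfsc{M}{x.N}=\vfsc{Q}{n.\Cutvfs{\vfsv{V}}{\garg{n}{x}{N}}}$ and using the induction hypothesis on $Q$.

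The only delicate point is that in the case $M=PQ$ the auxiliary term $mQ$ occurring in the formal context is not literally a subterm of $PQ$; but since its head $m$ is a variable, the translation $\vfsc{mQ}{x.N}$ reduces, through the clauses just used, to an expression whose only recursive translation call lands on the genuine subterm $Q$, so a plain structural induction on $M$ goes through (alternatively, one may induct on $\mathrm{size}(M)$, noting $\mathrm{size}(mQ)<\mathrm{size}(PQ)$ since a non-value $P$ has size at least $3$). No case produces more than one reduction step, so the simulation is one-step, matching the statement.
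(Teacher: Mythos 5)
Your proof is correct and follows essentially the same route as the paper's: structural induction on $M$, with the value and $VW$ cases handled by compatibility (congruence) of $\VFS$-reduction, and the $PQ$, $VQ$ and let cases handled by unfolding the translation clauses so that the induction hypothesis is only ever invoked on the genuine subterms $P$, $Q$ (or $M'$), exactly as in the paper's treatment of the $PQ$ case. Your explicit remark about $mQ$ not being a subterm, and the alternative size measure, only make explicit what the paper leaves implicit.
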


\begin{proof}
	By induction on $M$.
	
		Case $M=V$.
		$$\begin{array}{rcll}
			\vfsc{V}{x.N} & \eq & \Cutvfs{\vfsv{V}}{x.N}\\
			& \red & \Cutvfs{\vfsv{V}}{x.N'} & \text{(by hypothesis)}\\
			& \eq & \vfsc{V}{x.N'}.
		\end{array}$$
	
		Case $M=PQ$, where $P$ is not a value.
		$$\begin{array}{rcll}
			\vfsc{PQ}{x.N} &\eq& \vfsc{P}{m.\vfsc{Q}{n.\Cutvfs{\vfsv{m}}{\garg{\vfsv{n}}{x}{N}}}}
		\end{array}$$
		$N_1 \eq\Cutvfs{\vfsv{m}}{\garg{\vfsv{n}}{x}{N}} \red \Cutvfs{\vfsv{m}}{\garg{\vfsv{n}}{x}{N'}} \eq N_1'$, by hypothesis. Because of that, $N2 \eq \vfsc{Q}{n.N_1} \red \vfsc{Q}{n.N_1'} \eq N_2'$, by induction hypothesis. Consequently, $\vfsc{P}{m.N_2} \red \vfsc{P}{m.N_2'}$, once more by induction hypothesis, and $\vfsc{P}{m.N_2'} \eq \vfsc{PQ}{x.N'}.\\$
		
		The remaining cases are analogous.
	
\end{proof}

\begin{lem}\label{lem:vfs-subs}
	For all $V,W, M \in \lbc$, $P \in \VFS$:
	\begin{enumerate}
		\item $\vfsv{(\sub{V}{x}{W})} \eq \sub{\vfsv{V}}{x}{\vfsv{W}}$.
		\item $\sub{\vfsv{V}}{x}{\vfsc{M}{y.P}} \eq \vfsc{\sub{V}{x}{M}}{y.\sub{\vfsv{V}}{x}{P}}$.
		\item $\vfst{(\sub{V}{x}{M})} \eq \sub{\vfsv{V}}{x}{\vfst{M}}$
	\end{enumerate}
\end{lem}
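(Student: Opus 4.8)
The plan is to prove the three items by a single simultaneous induction, after first observing that item~3 is a special case of item~2. Indeed, $\vfst{M}=\vfsc{M}{z.\rt{z}}$ where the bound variable $z$ may be chosen with $z\neq x$, so item~2 (instantiated with the formal context $z.\rt{z}$, i.e. $y:=z$ and $P:=\rt{z}$) gives $\sub{\vfsv{V}}{x}{\vfsc{M}{z.\rt{z}}}\eq\vfsc{\sub{V}{x}{M}}{z.\sub{\vfsv{V}}{x}{(\rt{z})}}\eq\vfsc{\sub{V}{x}{M}}{z.\rt{z}}\eq\vfst{(\sub{V}{x}{M})}$, using $\sub{\vfsv{V}}{x}{(\rt{z})}=\rt{z}$ since $z\neq x$. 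So it remains to prove items~1 and~2.

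Items~1 and~2 are proved together, by induction on the sizes of $W$ and $M$, with item~1 (about values) settled before item~2 (about terms) at each size level; this staging is legitimate because the proof of item~1 appeals to the induction hypothesis only at strictly smaller terms. For item~1: the cases $W=x$ and $W=y\neq x$ are immediate unfoldings of $\vfsv{(\_)}$ and of substitution; the case $W=\lb z.N$ reduces to $\vfst{(\sub{V}{x}{N})}\eq\sub{\vfsv{V}}{x}{\vfst{N}}$, which is item~3 for the strictly smaller term $N$, i.e. item~2 for $N$, available by the induction hypothesis. For item~2: when $M$ is a value $V'$, unfolding gives $\sub{\vfsv{V}}{x}{\Cutvfs{\vfsv{V'}}{y.P}}=\Cutvfs{\sub{\vfsv{V}}{x}{\vfsv{V'}}}{y.\sub{\vfsv{V}}{x}{P}}$, and item~1 rewrites $\sub{\vfsv{V}}{x}{\vfsv{V'}}$ as $\vfsv{(\sub{V}{x}{V'})}$, yielding $\vfsc{\sub{V}{x}{V'}}{y.\sub{\vfsv{V}}{x}{P}}$; the let-case $M=\lt{z}{M_1}{M_2}$ unfolds $\vfsc{M}{y.P}$ to $\vfsc{M_1}{z.\vfsc{M_2}{y.P}}$ and the claim follows by applying the induction hypothesis (item~2) first to $M_1$ and then, underneath, to $M_2$, re-folding with the let-clause (here $\alpha$-conversion ensures $z\neq x$ and $z\notin FV(V)$).

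The real work, and the step I expect to be the main obstacle, is the two application cases of item~2: $M=M_1M_2$ with $M_1$ not a value, and $M=V'M_2$ with $M_2$ not a value. The difficulty is that the VFS-translation of an application is defined by a cascade of nested calls introducing fresh bookkeeping variables ($m$, $n$), so one must push $\sub{\vfsv{V}}{x}{(\_)}$ through several layers. Taking $M=M_1M_2$: unfold $\vfsc{M}{y.P}=\vfsc{M_1}{m.\vfsc{mM_2}{y.P}}$, apply the induction hypothesis to $M_1$, and then commute the substitution through $\vfsc{mM_2}{y.P}$ by cases on whether $M_2$ is a value (using item~1) or not (using the induction hypothesis on $M_2$); in both cases one lands on $\vfsc{m(\sub{V}{x}{M_2})}{y.\sub{\vfsv{V}}{x}{P}}$, and one re-folds the outer layer using the elementary fact that substituting a value preserves the property of being (resp. not being) a value, so that $\sub{V}{x}{M_1}$ is still a non-value and the application clause of the translation applies again. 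Throughout, $\alpha$-conversion keeps $m,n$ distinct from $x$ and from $FV(V)$, so $\sub{\vfsv{V}}{x}{(\_)}$ passes transparently through them. The case $M=V'M_2$ is symmetric. These computations are the $\VFS$ counterparts of Lemmas~\ref{lem:maps-of-subs} and~\ref{lem:subs-LQ-LNF} and have the same flavour.
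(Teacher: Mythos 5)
Your proof is correct and follows essentially the same route as the paper's: item~3 is reduced to item~2 via the identity context $z.\rt{z}$, and items~1 and~2 are then proved by a simultaneous induction on $W$ and $M$ with the same case analysis, including the cascaded application cases where item~1 handles value subterms and the induction hypothesis handles non-value ones. The only differences are presentational: you make explicit the (implicitly needed) fact that substituting a value preserves being or not being a value when re-folding the translation clauses, and the top-level application case $M=VW$ with both components values, missing from your enumeration, is just the trivial instance of the computation you already perform for the inner layer $mM_2$.
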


\begin{proof}
	Notice that, if item 2 holds for some $M$, then item 3 holds for the same $M$. Item 3 is obtained using item 2.
	$$\begin{array}{rcll}
		\vfst{(\sub{V}{x}{M})} & \eq & \vfsc{\sub{V}{x}{M}}{z.\rt{z}}\\
		& \eq & \vfsc{\sub{V}{x}{M}}{z.\sub{\vfsv{V}}{x}{(\rt{z})}}\\
		& \eq & \sub{\vfsv{V}}{x}{\vfsc{M}{z.\rt{z}}} & \text{(by IH)}\\
		& \eq & \sub{\vfsv{V}}{x}{\vfst{M}}
	\end{array}$$

	Hence, it suffices to prove items 1 and 2, and this is done by simultaneous induction on $W$ and $M$.

	Case $W=x$.
		$$\begin{array}{rcll}
			\vfsv{(\sub{V}{x}{x})} & \eq & \vfsv{V}\\
			& \eq & \sub{\vfsv{V}}{x}{x}\\
			& \eq & \sub{\vfsv{V}}{x}{\vfsv{x}}
		\end{array}$$
	
	Case $W=y$.
		$$\begin{array}{rcll}
			\vfsv{(\sub{V}{x}{y})} & \eq & \vfsv{y}\\
			& \eq & y\\
			& \eq & \sub{\vfsv{V}}{x}{y}\\
			& \eq & \sub{\vfsv{V}}{x}{\vfsv{y}}
		\end{array}$$
	
	Case $W=\lb y.P$.
		$$\begin{array}{rcll}
			\vfsv{(\sub{V}{x}{(\lb y.P)})} & \eq & \vfsv{(\lb y.\sub{V}{x}{P})}\\
			& \eq & \lb y. \vfst{(\sub{V}{x}{P})}\\
			& \eq & \lb y. \sub{\vfsv{V}}{x}{\vfst{P}} & \text{(by IH)}\\
			& \eq & \sub{\vfsv{V}}{x}{\vfsv{(\lb y.P)}}
		\end{array}$$
	
		Case $M=W$.
		$$\begin{array}{rcll}
			 \sub{\vfsv{V}}{x}{\vfsc{W}{y.P}}& \eq & \sub{\vfsv{V}}{x}{\Cutvfs{\vfsv{W}}{y.P}}\\
			& \eq & \Cutvfs{\sub{\vfsv{V}}{x}{\vfsv{W}}}{y.\sub{\vfsv{V}}{x}{P}}\\
			& \eq & \Cutvfs{\vfsv{(\sub{V}{x}{W})}}{y.\sub{\vfsv{V}}{x}{P}} & \text{(by IH)}\\
			& \eq & \vfsc{\sub{V}{x}{W}}{y.\sub{\vfsv{V}}{x}{P}}
		\end{array}$$
	
	Case $M=WW'$.
		$$\begin{array}{rcll}
			 \sub{\vfsv{V}}{x}{\vfsc{(WW')}{y.P}}	& \eq & \sub{\vfsv{V}}{x}{\Cutvfs{\vfsv{W}}{\garg{\vfsv{W'}}{y}{N}}}\\
			& \eq & \Cutvfs{\sub{\vfsv{V}}{x}{\vfsv{W}}}{\garg{\sub{\vfsv{V}}{x}{\vfsv{W'}}}{y}{\sub{\vfsv{V}}{x}{N}}}\\ 
			& \eq & \Cutvfs{\vfsv{(\sub{V}{x}{W})}}{\garg{\vfsv{(\sub{V}{x}{W'})}}{y}{\sub{\vfsv{V}}{x}{N}}} & \text{(by IH)}\\
			& \eq & \vfsc{\sub{V}{x}{W}\sub{V}{x}{W'}}{y. \sub{\vfsv{V}}{x}{N}}\\
		 	& \eq & \vfsc{\sub{V}{x}{(WW')}}{y. \sub{\vfsv{V}}{x}{N}}
		\end{array}$$
	
		Case $M=WQ$, where $Q$ is not a value.
		$$\begin{array}{rcll}
			 \sub{\vfsv{V}}{x}{\vfsc{(WQ)}{y.P}} & \eq & \sub{\vfsv{V}}{x}{\vfsc{Q}{n.\Cutvfs{\vfsv{W}}{\garg{\vfsv{n}}{y}{P}}}}\\
			& \eq & \vfsc{\sub{V}{x}{Q}}{n.\sub{\vfsv{V}}{x}{\Cutvfs{\vfsv{W}}{\garg{\vfsv{n}}{x}{P}}}} & \text{(by IH)}\\
			& \eq & \vfsc{\sub{V}{x}{Q}}{n.\Cutvfs{\sub{\vfsv{V}}{x}{\vfsv{W}}}{\garg{\vfsv{n}}{y}{\sub{\vfsv{V}}{x}{P}}}}\\
			& \eq & \vfsc{\sub{V}{x}{Q}}{\Cutvfs{\vfsv{(\sub{V}{x}{W})}}{\garg{\vfsv{n}}{y}{\sub{\vfsv{V}}{x}{P}}}} & \text{(by IH)}\\
			& \eq & \vfsc{\sub{V}{x}{W}\sub{V}{x}{Q}}{y.\sub{\vfsv{V}}{x}{P}}\\
			& \eq & \vfsc{\sub{V}{x}{(WQ)}}{y.\sub{\vfsv{V}}{x}{P}}
		\end{array}$$ 
		
		Cases $M=NQ$, where $N$ is not a value, and $M=\lt{z}{N}{Q}$ are analogous.
		
\end{proof}

\noindent \textbf{Theorem \ref{thm:simul-vfs}} (Simulation by VFS-translation)\textbf{.}
\begin{enumerate}
	\item Let $R\in\{B,\ltv,\etalt\}$. If $M\red_R N$ in $\lbc$ then $\vfst{M}\redd\vfst{N}$ in $\VFS$.
	\item Let $R\in\{\ltmn,\ltvn,\assoc\}$. If $M\red_R N$ in $\lbc$ then $\vfst{M}\eq\vfst{N}$ in $\VFS$.
\end{enumerate}

\begin{proof}
	Item 1. Let $R\in\{B,\ltv,\etalt\}$. The proof is by induction on the relation $M\red_R N$. The base cases are as follows.	

		Case $(B)$.
		$$\begin{array}{rcll}
			 \vfst{((\lb x.M)V)} & \eq &  \Cutvfs{\lb x. \vfst{M}}{\garg{\vfsv{V}}{z}{\rt{z}}}\\
			& \red_{\Bv} & \Cutvfs{\vfsv{V}}{x.\Cutvfsc{\vfst{M}}{z.\rt{z}}}\\
			& \redd & \Cutvfs{\vfsv{V}}{x.\vfst{M}} & \text{(by Lemma \ref{lem:vfs-id})}\\
			& \eq & \Cutvfs{\vfsv{V}}{x. \vfsc{M}{z.\rt{z}}}\\
			& \eq & \vfsc{V}{x.\vfsc{M}{z.\rt{z}}}\\
			& \eq & \vfst{(\lt{x}{V}{M})}
		\end{array}$$
	
		On the other hand, suppose $Q$ is not a value.
		$$\begin{array}{rcll}
			\vfst{((\lb x.M)Q)}& \eq & \vfsc{Q}{n.\Cutvfs{\lb x.\vfst{M}}{\garg{n}{z}{\rt{z}}}}\\
			& \red_{\Bv}& \vfsc{Q}{n.\Cutvfs{n}{x.(\vfst{M}:z.\rt{z})}}\\ 
			& \eq & \vfsc{Q}{n.\Cutvfs{n}{(x.\vfst{M}:z.\rt{z})}}\\
			& \eq &\vfsc{Q}{n.\Cutvfsc{\Cutvfs{n}{x.\vfst{M}}}{z.\rt{z}}} & \text{(by Lemma \ref{lem:vfs-red})}\\
			& \redd & \vfsc{Q}{n.\Cutvfs{n}{x.\vfst{M}}} & \text{(by Lemma \ref{lem:vfs-id})}\\
			& \red_{\sigmav} & \vfsc{Q}{n.\sub{n}{x}{\vfst{M}}} & \text{(by Lemma \ref{lem:vfs-red})}\\
			& \eq_\alpha & \vfsc{Q}{x.\vfst{M}}
 		\end{array}$$
 	
 		Case $(\ltv)$.
 		$$\begin{array}{rcll}
 			 \vfst{(\lt{x}{V}{M})}& \eq & \vfsc{V}{x.\vfsc{M}{z.\rt{z}}}\\
 			& \eq & \Cutvfs{\vfsv{V}}{x.\vfsc{M}{z.\rt{z}}}\\
 			& \eq & \Cutvfs{\vfsv{V}}{x.\vfst{M}}\\
 			& \red_{\sigmav} & \sub{\vfsv{V}}{x}{\vfst{M}}\\
 			& \eq & \vfst{(\sub{V}{x}{M})} & \text{(by Lemma \ref{lem:vfs-subs})}
 		\end{array}$$
 	
 		Case $(\etalt)$.
 		$$\begin{array}{rcll}
 			 \vfst{(\lt{x}{M}{x})}& \eq & \vfsc{M}{x.\vfsc{x}{z.\rt{z}}}\\
 			& \eq & \vfsc{M}{x.\Cutvfs{x}{z.\rt{z}}}\\
 			& \red_{\sigmav} & \vfsc{M}{x.\sub{x}{z}{(\rt{z})}} & \text{(by Lemma \ref{lem:vfs-red})}\\
 			& \eq & \vfsc{M}{x.\rt{x}}\\
 			& \eq & \vfst{M}
 		\end{array}$$

    The inductive cases are easy, because $\redd$ is compatible.
    
	Item 2. Let $R\in\{\ltmn,\ltvn,\assoc\}$. The proof is by induction on the relation $M\red_R N$. The base cases are as follows:

		Case $(\ltmn)$.
		$$\begin{array}{rcll}
			\vfst{(MN)} & \eq & \vfsc{M}{m.\vfsc{mN}{z.\rt{z}}}\\
			& \eq_\alpha &   \vfsc{M}{x.\vfsc{xN}{z.\rt{z}}}\\
			& \eq & \vfst{(\lt{x}{M}{xN})}
		\end{array}$$
	
		Case $(\ltvn)$.
		$$\begin{array}{rcll}
			\vfst{(VN)} & \eq & \vfsc{N}{n.\vfsc{Vn}{z.\rt{z}}}\\
			& \eq_\alpha &   \vfsc{N}{x.\vfsc{Vx}{z.\rt{z}}}\\
			& \eq & \vfst{(\lt{x}{N}{Vx})}
		\end{array}$$
	
	    Case $(\assoc)$.
		$$\begin{array}{rcll}
			 \vfst{(\lt{y}{\lt{x}{M}{N}}{P})} & \eq & \vfsc{\lt{x}{M}{N}}{y.\vfsc{P}{z.\rt{z}}}\\
			& \eq & \vfsc{M}{x.\vfsc{N}{y.\vfsc{P}{z.\rt{z}}}}\\
			& \eq & \vfsc{M}{x.\vfsc{\lt{y}{N}{P}}{z.\rt{z}}}\\
			& \eq & \vfst{(\lt{x}{M}{\lt{y}{N}{P}})}
		\end{array}$$
	    
	    The inductive cases are trivial.

\end{proof}

\noindent \textbf{Theorem \ref{thm:decomposition}} (Decomposition of the CPS-translation)\textbf{.}
	\begin{enumerate}
		\item For all $V\in\lbc$, $\ngv{{\vfsv{V}}}=\cpsv{V}$.
		\item For all $M\in\lbc$, $N\in\VFS$, $\angt{\vfsc{M}{x.N}}=\cpsc{M}{\lb x.\angt{N}}$.
		\item For all $M\in\lbc$, $\angt{{\vfst{M}}}=\cpsk{M}$.
		\item For all $M\in\lbc$, $\ngt{{\vfst{M}}}=\cpst{M}$.
	\end{enumerate}	

\begin{proof}
	Notice that, if item 2 holds for some $M$, then item 3 and 4 hold for the same $M$. Item 3 is obtained using item 2.
	$$
	\begin{array}{rcll}
		\angt{{\vfst{M}}}&=&\angt{\vfsc M{x.\rt{x}}}\\
		&=&\cpsc{M}{\lb x.\angt{(\rt{x})}}&\text{(by item 2)}\\
		&=&\cpsc{M}{\lb x.kx}\\
		&=&\cpsk{M}
	\end{array}
	$$
	Item 4 is obtained using item 3.
	$$
	\begin{array}{rcll}
		\ngt{{\vfst{M}}}&=&\lb k.\angt{{\vfst{M}}}\\
		&=&\lb k.\cpsk{M}&\text{(by item 3)}\\
		&=&\cpst{M}
	\end{array}
	$$
	Hence, it suffices to prove items 1 and 2, and this is done by simultaneous induction on $V$ and $M$.
	
	Case $V=x$.
	$$
	\begin{array}{rcll}
		\ngv{{\vfsv{x}}}&=&\ngv x\\
		&=&x\\
		&=&\cpsv x
	\end{array}
	$$
	
	Case $V=\lb x.M$. By IH, item 4 holds of $M$. Then:
	$$
	\begin{array}{rcll}
		\ngv{{\vfsv{(\lb x.M)}}}&=&\ngv{(\lb x.\vfst{M})}\\
		&=&\lb x.\ngt{{\vfst{M}}}\\
		&=&\lb x.\cpst{M}&\text{(by item 4)}\\
		&=&\cpsv{(\lb x.M)}
	\end{array}
	$$
	
	Case $M=V$.
	$$
	\begin{array}{rcll}
		\angt{\vfsc{V}{x.N}}&=&\angt{\Cutvfs{\vfsv V}{x.N}}\\
		&=&(\lb x.\angt M)\ngv{{\vfsv V}}\\
		&=&(\lb x.\angt M)\cpsv V & \text{(by IH)}\\
		&=&\cpsc{M}{\lb x.\angt{N}}
	\end{array}
	$$
	
	Case $M=PQ$, with $P$ not a value.
	$$
	\begin{array}{rcll}
		\angt{\vfsc{PQ}{x.N}}&=&\angt{\vfsc{P}{m.\vfsc{mQ}{x.N}}} \\
		&=& \cpsc{P}{\lb m.\angt{\vfsc{mQ}{x.N}}} & \text{(by IH)}\\
		&=& \cpsc{P}{\lb m.\cpsc{mQ}{\lb x.\angt{N}}} & \text{(by IH)}\\
		&=& \cpsc{PQ}{\lb x.\angt{N}}
	\end{array}
	$$
	
	Case $M=VQ$, with $Q$ not a value.
	$$
	\begin{array}{rcll}
		\angt{\vfsc{VQ}{x.N}}&=&\angt{\vfsc{Q}{n.\vfsc{Vn}{x.N}}} \\
		&=& \cpsc{Q}{\lb n.\angt{\vfsc{Vn}{x.N}}} & \text{(by IH)}\\
		&=& \cpsc{Q}{\lb n.\cpsc{Vn}{\lb x.\angt{N}}} & \text{(by IH)}\\
		&=& \cpsc{VQ}{\lb x.\angt{N}}
	\end{array}
	$$
	
	Case $M=VW$.
	$$
	\begin{array}{rcll}
		\angt{\vfsc{VW}{x.N}}&=& \angt{\Cutvfs{\vfsv{V}}{\garg{\vfsv{W}}{x}{N}}}\\
		&=& \ngv{{\vfsv{V}}}\ngv{{\vfsv{W}}}(\lb x.\angt N)\\
		&=& \cpsv{V}\cpsv{V}(\lb x.\angt N) & \text{(by IH)}\\
		&=& \cpsc{VW}{\lb x.\angt{N}}
	\end{array}
	$$
	
	Case $M=\lt{y}{P}{Q}$.
	$$
	\begin{array}{rcll}
		&&\angt{\vfsc{\lt{y}{P}{Q}}{x.N}}\\
		&=& \angt{\vfsc{P}{y.\vfsc{Q}{x.N}}}\\
		&=& \cpsc{P}{\lb y.\angt{\vfsc{Q}{x.N}}} & \text{(by IH)}\\
		&=& \cpsc{P}{\lb y.\cpsc{Q}{\lb x.\angt N}} & \text{(by IH)}\\
		&=& \cpsc{\lt{y}{P}{Q}}{\lb x.\angt{N}}
	\end{array}
	$$
	
\end{proof}


The next 4 Lemmas are needed to obtain Theorem \ref{thm:iso}.

\begin{lem}\label{lem:cong-subs}
	For all $V, W, M \in \VFS$:
	\begin{enumerate}
			\item $\ngv{(\sub{V}{x}{W})} = \sub{\ngv{V}}{x}{\ngv{W}}$.
			\item $\angt{(\sub{V}{x}{M})} = \sub{\ngv{V}}{x}{\angt{M}}$.
	\end{enumerate} 
\end{lem}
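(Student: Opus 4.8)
The plan is to prove items 1 and 2 simultaneously, by induction on the structure of $W$ (for item 1) and of $M$ (for item 2), with the two shapes of formal context absorbed into the term case $M=\Cutvfs{W}{c}$. Throughout I adopt the usual $\alpha$-convention, so that any variable bound in $W$ or $M$ is assumed distinct from $x$ and from the free variables of $V$; I also use the fact that the fixed covariable $k$ is bound in every $\CPS$-term (by the grammar, $\CPS$-values contain no free $k$), so that pushing $\sub{\ngv{V}}{x}{-}$ underneath a $\lb k$ — which happens because of the clause $\ngt{M}=\lb k.\angt{M}$ — never captures. Substitution in $\VFS$ is ordinary capture-avoiding substitution, and likewise in $\CPS$, so all the manipulations are the expected ones.

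For item 1, the base cases $W=x$ and $W=y$ with $y\neq x$ are immediate unfoldings of the definitions of $\ngv{-}$ and of substitution. For $W=\lb y.N$ I compute $\ngv{(\sub{V}{x}{(\lb y.N)})}=\lb y.\lb k.\angt{(\sub{V}{x}{N})}$; the induction hypothesis for item 2 applied to $N$ rewrites $\angt{(\sub{V}{x}{N})}$ as $\sub{\ngv{V}}{x}{\angt{N}}$, and since $y,k\notin FV(\ngv{V})$ the substitution commutes back out past $\lb y$ and $\lb k$, yielding $\sub{\ngv{V}}{x}{(\lb y.\lb k.\angt{N})}=\sub{\ngv{V}}{x}{\ngv{(\lb y.N)}}$.

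For item 2, the case $M=\rt{W}$ uses $\angt{(\rt{W})}=k\ngv{W}$ together with the induction hypothesis for item 1 on $W$, and then the fact that $x\neq k$ and $k\notin FV(\ngv{V})$. The case $M=\Cutvfs{W}{x'.N}$ first pushes the substitution through the constructor, $\sub{V}{x}{\Cutvfs{W}{x'.N}}=\Cutvfs{\sub{V}{x}{W}}{x'.\sub{V}{x}{N}}$, then unfolds $\angt{-}$ to $(\lb x'.\angt{(\sub{V}{x}{N})})\ngv{(\sub{V}{x}{W})}$, applies the induction hypotheses (item 2 on $N$, item 1 on $W$), and collects $\sub{\ngv{V}}{x}{-}$ back out around the application. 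The case $M=\Cutvfs{W}{\garg{W'}{x'}{N}}$ is entirely analogous, now with three appeals to the induction hypothesis, one for each component of $\angt{\Cutvfs{W}{\garg{W'}{x'}{N}}}=\ngv{W}\,\ngv{W'}\,(\lb x'.\angt{N})$.

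I do not expect a genuine obstacle: every case is a mechanical rearrangement once the definitions of $\ngv{-}$, $\angt{-}$, and of substitution in $\VFS$ and $\CPS$ are unfolded. The only point needing a little attention is the freshness bookkeeping described above — in particular that the covariable $k$ silently introduced by $\ngt{M}=\lb k.\angt{M}$ is harmless — but this is dispatched once and for all by the $\alpha$-convention and the grammar of $\CPS$-values, and then each inductive step goes through by routine calculation.
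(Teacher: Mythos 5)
Your proof is correct and follows essentially the same route as the paper: a simultaneous structural induction on $W$ and $M$, unfolding the definitions of $\ngv{(\cdot)}$ and $\angt{(\cdot)}$ in each case, using the item-2 hypothesis inside $\lb y$ (and $\lb k$) for the abstraction case, the item-1 hypothesis for $\rt{W}$ and the cut cases, and relying on $k\notin FV(\ngv{V})$ and the $\alpha$-convention to commute the substitution in and out. The only difference is presentational: you spell out the two cut cases that the paper dismisses as ``analogous,'' and you make the freshness of $k$ explicit, which the paper leaves implicit.
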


\begin{proof}
	By simultaneous induction on $W$ and $M$.
	
	Case $W=x$.
	$$
	\begin{array}{rcll}
		\ngv{(\sub{V}{x}{x})} & = & \ngv{V}\\
		& = & \sub{\ngv{V}}{x}{x}\\
		& = & \sub{\ngv{V}}{x}{\ngv{x}}
	\end{array}
	$$
	
	Case $W=y$.
	$$
	\begin{array}{rcll}
		\ngv{(\sub{V}{x}{y})} & = & \ngv{y}\\
		& = & y\\ 
		& = & \sub{\ngv{V}}{x}{y}\\
		& = & \sub{\ngv{V}}{x}{\ngv{y}}
	\end{array}
	$$
	
	Case $W=\lb y.P$.
	$$
	\begin{array}{rcll}
		\ngv{(\sub{V}{x}{(\lb y.P)})} & = & \lb y. \angt{(\sub{V}{x}{P})} \\
		& = & \lb y. \sub{\ngv{V}}{x}{\angt{P}} & \text{(by IH)}\\ 
		& = & \sub{\ngv{V}}{x}{\ngv{(\lb y.P)}}
	\end{array}
	$$
	
	Case $M=\rt{W}$.
	$$
	\begin{array}{rcll}
		\angt{(\sub{V}{x}{(\rt{W})})} & = & \angt{(\rt{(\sub{V}{x}{W})})}\\
		& = & k\ngv{(\sub{V}{x}{W})}\\
		& = & k\sub{\ngv{V}}{x}{\ngv{W}} & \text{(by IH)}\\
		& = & \sub{\ngv{V}}{x}{(k\ngv{W})}\\
		& = & \sub{\ngv{V}}{x}{\angt{(\rt{W})}}
	\end{array}
	$$
	
	The remaining cases are analogous.
	
\end{proof}


\begin{lem}\label{lem:cong-col}
	For all $M,N \in \VFS$, $\angt{\Cutvfsc{M}{y.N}} = \sub{\lb y. \angt{N}}{k}{\angt{M}}$.
\end{lem}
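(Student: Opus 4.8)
The plan is to prove the identity by structural induction on $M\in\VFS$, letting the recursive definition of the auxiliary operation $\Cutvfsc{-}{-}$ (together with the concatenation $(c:c')$ of formal contexts) dictate the case split. Before starting the induction I would record one trivial auxiliary fact: for every value $V\in\VFS$, the covariable $k$ does not occur free in $\ngv{V}$ -- this is immediate by inspecting the clauses defining the negative translation on values, since a value is sent either to $x$ or to $\lb x.\ngt{M}=\lb x.\lb k.\angt{M}$, and in the latter case $k$ is bound. This fact, plus the observation that ordinary substitution in $\CPS$ commutes with application and with $\lb$-abstraction over a variable distinct from $k$ (under the usual freshness convention on bound variables), is all the machinery needed.

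In the base case $M=\rt{V}$ we have $\Cutvfsc{\rt{V}}{y.N}=\Cutvfs{V}{y.N}$, so the left-hand side is $(\lb y.\angt{N})\ngv{V}$ by the definition of $\angt{(-)}$ on value-filled cuts; the right-hand side is $\sub{\lb y.\angt{N}}{k}{(k\ngv{V})}$, which collapses to $(\lb y.\angt{N})\ngv{V}$ precisely because $k\notin FV(\ngv{V})$ -- this is the ``critical clause'' of structural substitution reappearing. For the step case $M=\Cutvfs{V}{x.M'}$, the definitions give $\Cutvfsc{M}{y.N}=\Cutvfs{V}{x.\Cutvfsc{M'}{y.N}}$, whence $\angt{\Cutvfsc{M}{y.N}}=(\lb x.\angt{\Cutvfsc{M'}{y.N}})\ngv{V}$; the induction hypothesis applied to $M'$ rewrites the body as $\sub{\lb y.\angt{N}}{k}{\angt{M'}}$, and pushing the substitution back out past $\lb x$ and the argument $\ngv{V}$ (legitimate since $k\notin FV(\ngv{V})$) yields $\sub{\lb y.\angt{N}}{k}{((\lb x.\angt{M'})\ngv{V})}=\sub{\lb y.\angt{N}}{k}{\angt{M}}$. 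The case $M=\Cutvfs{V}{\garg{W}{x}{M'}}$ is identical in shape: $\Cutvfsc{M}{y.N}=\Cutvfs{V}{\garg{W}{x}{\Cutvfsc{M'}{y.N}}}$, the translation unfolds to $\ngv{V}\,\ngv{W}\,(\lb x.\angt{\Cutvfsc{M'}{y.N}})$, the induction hypothesis rewrites the body, and the substitution is pulled back out over the two arguments $\ngv{V},\ngv{W}$ (again using that neither contains $k$).

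I expect no serious obstacle: the statement is essentially the assertion that $\angt{(-)}$ turns the ``substitution-like'' operation $\Cutvfsc{-}{-}$ of $\VFS$ into genuine $\CPS$-substitution of $\lb y.\angt{N}$ for $k$, and it follows by bookkeeping. The only point that needs care is variable capture: when $\lb y.\angt{N}$ is pushed under the binder $x$ in the step cases, its free variables must not be captured, so the argument silently relies on the Barendregt-style convention that the bound variable $x$ in $\Cutvfs{V}{x.M'}$ (and in $\garg{W}{x}{M'}$) is chosen apart from the free variables of $N$ -- the same convention already implicit in the fact that $\Cutvfsc{-}{-}$ carries no side conditions. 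It is worth noting that this lemma is exactly the ingredient needed for the $\Bv$-case of Theorem~\ref{thm:iso}: a $\Bv$-step of $\VFS$ maps to a $\Bv$-step of $\CPS$ whose contractum has the form $(\lb x.\sub{\lb y.\angt{N}}{k}{\angt{M'}})\ngv{V}$, and Lemma~\ref{lem:cong-col} identifies this with $\angt{(-)}$ of the $\VFS$-contractum.
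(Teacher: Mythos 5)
Your proposal is correct and follows essentially the same route as the paper's proof: induction on $M$ with the three cases $\rt{V}$, $\Cutvfs{V}{x.M'}$ and $\Cutvfs{V}{\garg{W}{x}{M'}}$, using the observation that $k\notin FV(\ngv{V})$ (and $FV(\ngv{W})$) to pull the substitution $\sub{\lb y.\angt{N}}{k}{-}$ out past the translated values. Your explicit remarks on the freshness of bound variables and on the role of this lemma in the $\Bv$-case of Theorem~\ref{thm:iso} are accurate but add nothing beyond what the paper's argument implicitly assumes.
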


\begin{proof}
	By induction on $M$.
	
	Case $M=\rt{V}$.
	$$
	\begin{array}{rcll}
		\angt{\Cutvfsc{\rt{V}}{y.N}}& = & \angt{\Cutvfs{V}{y.N}} \\
		& = & (\lb y. \angt{N})\ngv{V}\\
		& = & \sub{\lb y. \angt{N}}{k}{k\ngv{V}}\\
		& = & \sub{\lb y. \angt{N}}{k}{\angt{(\rt{V})}}
	\end{array}
	$$
	
	Case $M=\Cutvfs{V}{x.M'}$.
	$$
	\begin{array}{rcll}
		\angt{\Cutvfsc{\Cutvfs{V}{x.M_0}}{y.N}}& = &\angt{\Cutvfs{V}{x.\Cutvfsc{M_0}{y.N}}}\\
		& = & (\lb x. \angt{\Cutvfsc{M_0}{y.N}})\ngv{V}\\
		& = & (\lb x. \sub{\lb y. \angt{N}}{k}{\angt{M_0}})\ngv{V} & \text{(by IH)}\\
		& = & \sub{\lb y. \angt{N}}{k}{((\lb x. \angt{M_0})\ngv{V})} & (k \notin \ngv{V})\\
		& = & \sub{\lb y. \angt{N}}{k}\angt{\Cutvfs{V}{x.M'}} 
    \end{array}
	$$
	
	Case $M=\Cutvfs{V}{\garg{W}{x}{M_0}}$.
	$$
	\begin{array}{rcll}
		 \angt{(\Cutvfs{V}{\garg{W}{x}{M_0}}:y.N)} & = & \angt{\Cutvfs{V}{\garg{W}{x}{\Cutvfsc{M_0}{y.N}}}} \\
		& = & \ngv{V}\ngv{W}(\lb x. \angt{\Cutvfsc{M_0}{y.N}}) \\
		& = &  \ngv{V}\ngv{W}(\lb x. \sub{\lb y. \angt{N}}{k}{\angt{M_0}}) & \text{(by IH)}\\
		& = & \sub{\lb y. \angt{N}}{k}{(\ngv{V}\ngv{W}(\lb x. \angt{M_0}))} & (k \notin \ngv{V}, \ngv{W}) \\
		& = & \sub{\lb y. \angt{N}}{k}\angt{\Cutvfs{V}{\garg{W}{x}{M_0}}} 
	\end{array}
	$$
	
\end{proof}


\begin{lem}\label{lem:cong-subs-CPS}
	For all $V,W,M,P \in \CPS$:
	\begin{enumerate}
		\item $\posv{(\sub{V}{x}{W})} = \sub{\posv{V}}{x}{\posv{W}}$.
		\item $\apos{(\sub{V}{x}{M})} = \sub{\posv{V}}{x}{\apos{M}}$.
		\item $\post{(\sub{V}{x}{P})} = \sub{\posv{V}}{x}{\post{P}}$.
	\end{enumerate} 
\end{lem}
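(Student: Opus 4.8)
The plan is to prove items~1 and~2 by a simultaneous induction on the structure of the value $W$ and the command $M$, and then to derive item~3 from item~2 for free, exactly as item~3 followed from item~2 in Lemma~\ref{lem:vfs-subs} and item~4 from item~3 in Theorem~\ref{thm:decomposition}. Explicitly, if $P=\lb k.M$ then $\sub{V}{x}{P}=\lb k.\sub{V}{x}{M}$, using that $k\notin FV(V)$ since $V$ is a value, so that $\post{(\sub{V}{x}{P})}=\apos{(\sub{V}{x}{M})}$; by item~2 for $M$ this equals $\sub{\posv{V}}{x}{\apos{M}}=\sub{\posv{V}}{x}{\post{P}}$. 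Hence it suffices to establish items~1 and~2.

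For item~1, the cases $W=x$ and $W=y$ with $y\neq x$ are immediate from $\posv{x}=x$: both sides collapse to $\posv{V}$ and to $y$ respectively. For $W=\lb y.P_0$ I would unfold $\posv{(\lb y.P_0)}=\lb y.\post{P_0}$, push $\sub{V}{x}{-}$ past the binder $\lb y$ (assuming $y$ fresh for $V$), and invoke the induction hypothesis on the command underlying $P_0$, which is a proper subterm of $W$; this is the only clause in which the mutual recursion is actually used. For item~2 the three clauses of $\apos{(\_)}$ are handled uniformly: in each of the forms $kV'$, $(\lb y.N)V'$ and $V'W'(\lb y.N)$ the substitution $\sub{V}{x}{-}$ distributes over the $\CPS$ constructor, one applies item~1 to the value subterms and item~2 to the command subterm, and one recognizes the outcome as $\sub{\posv{V}}{x}{-}$ distributed over the corresponding $\VFS$ constructor ($\rt{(-)}$, or a cut $\Cutvfs{-}{c}$), again using that the bound variables may be taken fresh for $V$.

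Since both substitutions involved are ordinary capture-avoiding substitution --- in the $\lb$-calculus on the $\CPS$ side, and the plain substitution of Table~\ref{tab:sub-kernel-lQ} on the $\VFS$ side --- the bookkeeping is routine and I do not expect a genuine obstacle. The only organizational point is the observation that item~2 subsumes item~3, which lets the simultaneous induction range over just the two syntactic classes, values and commands, that actually carry constructors.
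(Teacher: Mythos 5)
Your proposal is correct and matches the paper's proof in substance: the paper proves all three items by one simultaneous induction on $W$, $M$ and $P$, where its case $P=\lb k.M$ is exactly your derivation of item~3 from item~2, and its remaining cases coincide with yours. Folding the term class into the command class (using that the body of $\lb y.P_0$ contains a command that is still a proper subterm) is only a trivial reorganization, so this is essentially the same argument.
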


\begin{proof}
By simultaneous induction on $W, M$ and $P$.
	
	Case $W=x$.
	$$
	\begin{array}{rcll}
		\posv{(\sub{V}{x}{x})} & = & \posv{V}\\
		& = & \sub{\posv{V}}{x}{x}\\
		& = & \sub{\posv{V}}{x}{\posv{x}}
	\end{array}
	$$
	
	Case $W=y$.
	$$
	\begin{array}{rcll}
		\posv{(\sub{V}{x}{y})} & = & \posv{y}\\
		& = & y\\
		& = & \sub{\posv{V}}{x}{y}\\
		& = & \sub{\posv{V}}{x}{\posv{y}}
	\end{array}
	$$
	
	Case $W=\lb y.P$.
	$$
	\begin{array}{rcll}
		\posv{(\sub{V}{x}{(\lb y.P)})} & = & \posv{(\lb y. \sub{V}{x}{P})}\\
		& = & \lb y. \post{(\sub{V}{x}{P})}\\
		& = & \lb y. \sub{\posv{V}}{x}{\post{P}} & \text{(by IH)}\\
		& = &  \sub{\posv{V}}{x}{(\lb y.\post{P})}\\
		& = & \sub{\posv{V}}{x}{\posv{(\lb y.P)}}
	\end{array}
	$$
	
	Case $M=kW$.
	$$
	\begin{array}{rcll}
		\apos{(\sub{V}{x}{(kW)})} & = & \apos{(k\sub{V}{x}{W})}\\
		& = & \rt{(\posv{\sub{V}{x}{W}})}\\
		& = & \rt{(\sub{\posv{V}}{x}{\posv{W}})} & \text{(by IH)}\\
		& = &  \sub{\posv{V}}{x}{(\rt{\posv{W}})}\\
		& = & \sub{\posv{V}}{x}{\apos{(kW)}}
	\end{array}
	$$
	
	Case $P=\lb k.M$.
	$$
	\begin{array}{rcll}
		\post{(\sub{V}{x}{(\lb k.P))})} & = & \post{(\lb k. \sub{V}{x}{M})}\\
		& = & \lb k. \apos{\sub{V}{x}{M}} \\
		& = &  \lb k. \sub{\posv{V}}{x}{\apos{M}} & \text{(by IH)}\\
		& = &  \sub{\posv{V}}{x}{(\lb k. \apos{M})}\\
		& = & \sub{\posv{V}}{x}{\post{(\lb k.M)}}
	\end{array}
	$$
	
	Cases $M=(\lb x.N)V$ and $M=W_1W_2(\lb x.N)$ are analogous to the case $M=kW$.
	
\end{proof}

\begin{lem}\label{lem:cong-col-CPS}
	For all $M,N \in \CPS$, $ \apos{(\sub{\lb y.N}{k}{M})} = \Cutvfsc{\apos{M}}{y.\apos{N}}$.
\end{lem}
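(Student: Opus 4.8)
The plan is to prove Lemma~\ref{lem:cong-col-CPS} by a straightforward structural induction on the command $M\in\CPS$, in complete analogy with the proof of Lemma~\ref{lem:cong-col} (the two statements are the two halves of the same fact, read through the isomorphism of Theorem~\ref{thm:iso}). Throughout, $\sub{\lb y.N}{k}{M}$ denotes ordinary capture-avoiding substitution of the continuation $\lb y.N$ for the unique free occurrence of the covariable $k$ in $M$. The key standing observation is that $k$ never occurs free in a value of $\CPS$, so this substitution never descends into value positions and only propagates into the continuation component of a command; this is exactly the role played by the side remarks ``$(k\notin\ngv{V})$'' in the proof of Lemma~\ref{lem:cong-col}.

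There are three cases, one per clause defining $\apos{(\_)}$ on commands. If $M=kV$, then $\sub{\lb y.N}{k}{(kV)}=(\lb y.N)V$, so the left-hand side is $\apos{((\lb y.N)V)}=\Cutvfs{\posv{V}}{y.\apos{N}}$; on the right, $\apos{(kV)}=\rt{(\posv{V})}$ and the base clause $\Cutvfsc{\rt{V'}}{c'}=\Cutvfs{V'}{c'}$ gives $\Cutvfsc{\rt{(\posv{V})}}{y.\apos{N}}=\Cutvfs{\posv{V}}{y.\apos{N}}$, so both sides coincide. If $M=(\lb x.M_0)V$, then the substitution yields $(\lb x.\sub{\lb y.N}{k}{M_0})V$; unfolding $\apos{(\_)}$, applying the induction hypothesis to $M_0$, and then using the clause $((x.M'):c')=x.\Cutvfsc{M'}{c'}$ together with $\Cutvfsc{\Cutvfs{V'}{c}}{c'}=\Cutvfs{V'}{(c:c')}$, one rewrites both sides to $\Cutvfs{\posv{V}}{x.\Cutvfsc{\apos{M_0}}{y.\apos{N}}}$. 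If $M=VW(\lb x.M_0)$, then the substitution yields $VW(\lb x.\sub{\lb y.N}{k}{M_0})$; proceeding as before but invoking the clause $(\garg{\posv{W}}{x}{\apos{M_0}}:c')=\garg{\posv{W}}{x}{\Cutvfsc{\apos{M_0}}{c'}}$, both sides reduce to $\Cutvfs{\posv{V}}{\garg{\posv{W}}{x}{\Cutvfsc{\apos{M_0}}{y.\apos{N}}}}$.

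I do not expect a genuine obstacle: the lemma is routine, and the three cases are forced entirely by the three forms of command of $\CPS$ and the matching clauses in the definitions of $\apos{(\_)}$, of $\Cutvfsc{(\_)}{(\_)}$, and of the formal-context operation $(c:c')$. The only point needing a moment's care is the bookkeeping of the covariable convention, namely that $k\notin FV(V)$ for every value $V$ of $\CPS$, so that $\sub{\lb y.N}{k}{(\_)}$ commutes with value positions and descends only into continuation bodies; once this is noted, the induction goes through mechanically.
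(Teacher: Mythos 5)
Your proposal is correct and takes essentially the same route as the paper's proof: structural induction on the command $M$ with the three cases $kV$, $(\lb x.M_0)V$ and $VW(\lb x.M_0)$, using the clauses of $\Cutvfsc{(\_)}{(\_)}$ and $(c:c')$ together with the observation that $k$ does not occur in values (the paper treats the third case as ``analogous''). No gaps.
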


\begin{proof}
	By induction on $M$.
	
	Case $M=kV$.
	$$
	\begin{array}{rcll}
		 \apos{(\sub{\lb y.N}{k}{(kV)})} & = & \apos{(\lb y.M)V}\\
		 & = & \Cutvfs{\posv{V}}{y.\apos{N}}\\
		 &=& \Cutvfsc{\rt{\posv{V}}}{y.\apos{N}}\\
		 &=& \Cutvfsc{\apos{(kV)}}{y.\apos{N}}
	\end{array}
	$$
	
	Case $M=(\lb x.M_0)V$.
	$$
	\begin{array}{rcll}
		 \apos{(\sub{\lb y.N}{k}{((\lb x.M_0)V)})}& = & \apos{((\lb x. \sub{\lb y.N}{k}{M_0})V)} & (k \notin V)\\
		& = & \Cutvfs{\posv{V}}{x.\apos{(\sub{\lb y.N}{k}{M_0})}}\\
		& = & \Cutvfs{\posv{V}}{x.\Cutvfsc{\apos{M_0}}{y.\apos{N}}} & \text{(by IH)}\\
	     &=& \Cutvfs{\posv{V}}{(x.\apos{M_0} : y. \apos{N})}\\
	     &=& \Cutvfsc{\Cutvfs{\posv{V}}{x.\apos{M_0}}}{y.\apos{N}}\\
	     &=& \Cutvfsc{\apos{((\lb x.M_0)V)}}{y.\apos{N}}
	\end{array}
	$$
	
	The remaining case is analogous.
	
\end{proof}

\noindent \textbf{Theorem \ref{thm:iso}} (\VFS $\cong$ \CPS )\textbf{.}
\begin{enumerate}
	\item For all $M,V\in\VFS$, $\post{{\ngt{M}}}=M$ and $\apos{{\angt{M}}}=M$ and $\posv{{\ngv{V}}}=V$.
	\item For all $P,M,V\in\CPS$, $\ngt{{\post{P}}}=P$ and $\angt{{\apos{M}}}=M$ and $\ngv{{\posv{V}}}=V$.
	\item If $M_1\to M_2$ in $\VFS$ then $\angt{M_1}\to\angt{M_2}$ in $\CPS$ (hence $\ngt{M_1}\to\ngt{M_2}$ in $\CPS$).
	\item If $M_1\to M_2$ in $\CPS$ then $\apos{M_1}\to\apos{M_2}$ in $\VFS$. Hence If $P_1\to P_2$ in $\CPS$ then $\post{P_1}\to\post{P_2}$ in $\VFS$.
\end{enumerate}

\begin{proof}
	Item 1 is straightforward by simultaneous induction on $M$ and $V$. Similarly, item 2 is straightforward by simultaneous induction on $P, M$ and $V$. Item 3 follows by induction on the relation $M_1 \to M_2$ in $\VFS$. We just show the base cases.
	
	Case $(\sigmav)$.
	$$
	\begin{array}{rcll}
		\angt{\Cutvfs{V}{x.M}} & = & (\lb x.\angt{M})\ngv{V}\\
		& \red_{\sigmav} & \sub{\ngv{V}}{x}{\angt{M}}\\
		& = & \angt{(\sub{V}{x}{M})} & \text{(by Lemma \ref{lem:cong-subs})}
	\end{array}
	$$
	
	Case $(\Bv)$.
	$$
	\begin{array}{rcll}
		 \angt{\Cutvfs{\lb x.M}{\garg{V}{y}{N}}} & = & (\lb x. \ngt{M})\ngv{V}(\lb y.\angt{N}) \\
		& \red_{\Bv} & (\lb x. \sub{\lb y.\angt{N}}{k}{\angt{M}})\ngv{V}\\
		& = & (\lb x. \angt{\Cutvfsc{M}{y.N}})\ngv{V} & \text{(by Lemma \ref{lem:cong-col})} \\
		& = & \angt{\Cutvfs{V}{x.\Cutvfsc{M}{y.N}}}
	\end{array}
	$$
	
	Item 4 follows by induction on the relation $M_1 \to M_2$ in $\CPS$. We just show the base cases.
	
	Case $(\sigmav)$.
	$$
	\begin{array}{rcll}
		\apos{(\lb x.M)V} & = & \Cutvfs{\posv{V}}{x.\apos{M}}\\
		& \red_{\sigmav} & \sub{\posv{V}}{x}{\apos{M}}\\
		& = & \apos{(\sub{V}{x}{M})} & \text{(by Lemma \ref{lem:cong-subs-CPS})}
	\end{array}
	$$
	
	Case $(\Bv)$, with $K = \lb y.N$.
	$$
	\begin{array}{rcll}
		\apos{((\lb xk.M)W(\lb y.N))} & = &  \Cutvfs{\posv{(\lb xk.M)}}{\garg{\posv{W}}{y}{\apos{N}}}\\
		& \red_{\Bv} & \Cutvfs{\posv{W}}{x.\Cutvfsc{\apos{M}}{y.\apos{N}}}\\
		& = & \Cutvfs{\posv{W}}{x.\apos{(\sub{\lb y.N}{k}{M})}} & \text{(by Lemma \ref{lem:cong-col-CPS})}\\
		& = &  \apos{((\lb x. \sub{\lb y.N}{k}{M})W)}
	\end{array}
	$$
	
\end{proof}

\subsection{Proofs of Section \ref{sec:bds}}

Next we are going to prove Theorem \ref{thm:two-isomorphisms}. Consider the translations given in Tables \ref{table:ves-vfs} and  \ref{table:ces-cnf}. 
The next 8 Lemmas are needed to obtain Theorem \ref{thm:two-isomorphisms}.

\begin{lem}\label{lem:psi-subs}
	For all $V,W, c_y, M \in \VES$: 
\begin{enumerate}
	\item $\Psi_v (\sub{V}{x}{W}) =  \sub{\Psi_v V}{x}{\Psi_v W}$. 
	\item $\Psi_y (\sub{V}{x}{c_y}) =  \sub{\Psi _vV}{x}{\Psi_y (c_y)}$. 
	\item $\Psi (\sub{V}{x}{M}) =  \sub{\Psi_vV}{x}{\Psi M}$. 
\end{enumerate}
\end{lem}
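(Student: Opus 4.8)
The plan is to prove the three statements of Lemma~\ref{lem:psi-subs} by a single simultaneous induction, following exactly the pattern already used for the analogous substitution lemmas (cf.~Lemma~\ref{lem:subs-LQ-LNF}, Lemma~\ref{lem:cong-subs}, Lemma~\ref{lem:vfs-subs}). First I would set up the induction: since $\Psi$ is defined by mutual recursion with $\Psi_v$ and $\Psi_x$, and the grammar of $\VES$ has terms $M$ mutually defined with the class $c_x$, I would do simultaneous induction on the structure of $W$ (for item~1), of $c_y$ (for item~2), and of $M$ (for item~3), using the three induction hypotheses freely across the three items. Note first that item~3 essentially follows from items~1 and~2: if $M=V'$ is a value then $\Psi(\sub{V}{x}{V'})=\rt{\Psi_v(\sub{V}{x}{V'})}=\rt{\sub{\Psi_v V}{x}{\Psi_v V'}}=\sub{\Psi_v V}{x}{\rt{\Psi_v V'}}=\sub{\Psi_v V}{x}{\Psi V'}$ by item~1; and if $M=\lt{z}{V'}{c_z}$ then $\Psi(\sub{V}{x}{M})=\Cutvfs{\Psi_v(\sub{V}{x}{V'})}{\Psi_z(\sub{V}{x}{c_z})}$, and applying items~1 and~2 (with appropriate $\alpha$-renaming so that $z\notin FV(V)\cup\{x\}$) gives $\Cutvfs{\sub{\Psi_v V}{x}{\Psi_v V'}}{\sub{\Psi_v V}{x}{\Psi_z(c_z)}}=\sub{\Psi_v V}{x}{\Cutvfs{\Psi_v V'}{\Psi_z(c_z)}}=\sub{\Psi_v V}{x}{\Psi M}$, since substitution distributes over $\Cutvfs{\cdot}{\cdot}$.

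For item~1, the base cases $W=x$ and $W=y$ (with $y\neq x$) are immediate, since $\Psi_v$ is the identity on variables; the case $W=\lb z.M$ unfolds $\Psi_v(\lb z.M)=\lb z.\Psi M$, pushes the substitution inside (assuming $z\notin FV(V)\cup\{x\}$), and appeals to the item~3 induction hypothesis on $M$. For item~2, the case $c_y$ of the form $x'.M$ (i.e.~$\Psi_y(c_y)=y.\Psi M$ — here I would be careful about the bound variable of the first production, renaming it away from $x$ and $FV(V)$) reduces to item~3 on $M$; the case $c_y=\garg{W}{z}{N}$, with the side condition $y\notin FV(W)\cup FV(N)$, unfolds $\Psi_y(\garg{W}{z}{N})=\garg{\Psi W}{z}{\Psi N}$ and applies the item~1 hypothesis on $W$ (actually $\Psi$ on values) and the item~3 hypothesis on $N$.

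The main obstacle I anticipate is purely bookkeeping rather than conceptual: the interaction of the substitution with the variable side-conditions attached to the class $c_x$ (namely that in $\lt{y}{xW}{N}$ we require $x\notin FV(W)\cup FV(N)$, and in the definition of $\Psi_x$ the first component of $c_x$ carries a bound variable that must be kept disjoint from $x$). One must check that after substituting $V$ for $x$, the resulting formal context is still well-formed in $\VFS$ and that the translation clauses still apply — this is exactly the kind of $\alpha$-conversion care that was flagged in the paragraph following the definition of $\ltc{y}{c_z}{P}$ in $\VES$. Since $x$ is the variable being substituted for and does not occur free in the problematic positions, and since we are free to $\alpha$-rename the bound variables of $V$, the side conditions are preserved; making this explicit in one or two representative cases suffices, and the remaining cases are routine and can be dispatched by ``analogous'' remarks, as is done elsewhere in the appendix.
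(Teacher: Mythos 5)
Your proposal is correct and matches the paper's argument: the paper proves this lemma exactly by simultaneous structural induction on $W$, $c_y$, and $M$ (it only states this, calling it straightforward), and your case analysis and handling of the variable side-conditions fill in the same routine details.
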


\begin{proof}
	Straightforward by simultaneous induction on $W$, $c_y$, and $M$.
\end{proof}

\begin{lem}\label{lem:psi-col}
	For all $c_x, P, M \in \VES$:
	\begin{enumerate}
		\item $\Psi_x(\ltc{z}{c_x}{P}) = (\Psi_x(c_x): z.\Psi P)$.
		\item $\Psi(\ltc{z}{M}{P}) = \Cutvfsc{\Psi M}{z. \Psi P}$.
	\end{enumerate}
\end{lem}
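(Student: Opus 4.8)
The plan is to prove items~1 and~2 simultaneously, by induction on the structure of $M$ and $c_x$, following exactly the recursive structure of the operators $\ltc{y}{M}{P}$ and $\ltc{y}{c_z}{P}$ on $\VES$. The two statements invoke each other precisely along the recursive unfoldings of those definitions, so a single simultaneous induction is the natural instrument, and it is well-founded because the unfoldings always descend in the size of the underlying $\VES$-expression: item~1 on a term-valued $c_x=M$ is, after unfolding $\Psi_x$, literally item~2 on that same $M$, and item~2 then recurses only into strictly smaller terms. Beyond unfolding the definitions of $\Psi$, $\Psi_x$, $\Cutvfsc{\cdot}{\cdot}$ and $(\cdot:\cdot)$, nothing further is required; throughout I carry the standing conventions $x\notin FV(P)$ and $z\notin FV(P)$, exactly the free-variable bookkeeping already performed when those operators and the class $c_x$ were set up.

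For item~2 I split on the shape of $M$. If $M=V$ is a value, then $\ltc{z}{V}{P}=\lt{z}{V}{P}$ and applying $\Psi$ yields $\Cutvfs{\Psi_v V}{z.\Psi P}$, which is exactly $\Cutvfsc{\rt{\Psi_v V}}{z.\Psi P}=\Cutvfsc{\Psi V}{z.\Psi P}$ by the clause $\Cutvfsc{\rt V}{c'}=\Cutvfs{V}{c'}$. If $M=\lt{x}{V}{c_x}$, then $\ltc{z}{M}{P}=\lt{x}{V}{\ltc{z}{c_x}{P}}$, so $\Psi(\ltc{z}{M}{P})=\Cutvfs{\Psi_v V}{\Psi_x(\ltc{z}{c_x}{P})}$; the induction hypothesis for item~1 rewrites the formal context as $(\Psi_x(c_x):z.\Psi P)$, and the clause $\Cutvfsc{\Cutvfs{V}{c}}{c'}=\Cutvfs{V}{(c:c')}$ then identifies the whole expression with $\Cutvfsc{\Cutvfs{\Psi_v V}{\Psi_x(c_x)}}{z.\Psi P}=\Cutvfsc{\Psi M}{z.\Psi P}$.

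For item~1 I split on the shape of $c_x$. If $c_x$ is a term $M$, then $\ltc{z}{c_x}{P}$ is the term $\ltc{z}{M}{P}$, so $\Psi_x(\ltc{z}{c_x}{P})=x.\Psi(\ltc{z}{M}{P})$; by item~2 on $M$ this equals $x.\Cutvfsc{\Psi M}{z.\Psi P}$, which by the clause $((x.N):c')=x.\Cutvfsc{N}{c'}$ is $((x.\Psi M):z.\Psi P)=(\Psi_x(c_x):z.\Psi P)$. If $c_x=\lt{y}{xW}{N}$ with $x\notin FV(W)\cup FV(N)$, then $\ltc{z}{c_x}{P}=\lt{y}{xW}{\ltc{z}{N}{P}}$, which is again a legitimate $c_x$ since $x\notin FV(W)$ and $x\notin FV(\ltc{z}{N}{P})$ follow from $x\notin FV(N)$ and $x\notin FV(P)$; hence $\Psi_x(\ltc{z}{c_x}{P})=\garg{\Psi W}{y}{\Psi(\ltc{z}{N}{P})}$, and the induction hypothesis for item~2 together with the clause $(\garg{W}{x}{N}:c')=\garg{W}{x}{\Cutvfsc{N}{c'}}$ give $\garg{\Psi W}{y}{\Cutvfsc{\Psi N}{z.\Psi P}}=(\garg{\Psi W}{y}{\Psi N}:z.\Psi P)=(\Psi_x(c_x):z.\Psi P)$.

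So the argument is a routine structural induction matching the two recursive definitions clause by clause. There is no genuine obstacle; the only point needing care is the bookkeeping of the free-variable side conditions on the formal-context class $c_x$ — keeping $z\notin FV(P)$ and $x\notin FV(P)$ in force so that every recursive step is legitimate and every output stays in the intended syntactic class — and this is precisely the management already carried out in the definitions of $\ltc{y}{M}{P}$ and $\ltc{y}{c_z}{P}$ on $\VES$.
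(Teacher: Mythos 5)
Your proof is correct and matches the paper's own argument: a simultaneous induction on $c_x$ and $M$ following the defining clauses of $\ltc{z}{\cdot}{P}$, $\Psi$, $\Psi_x$, $\Cutvfsc{\cdot}{\cdot}$ and $(\cdot:\cdot)$, with the same four cases handled in the same way. Your extra remarks on well-foundedness of the mutual recursion and on the free-variable side conditions are sound refinements of what the paper leaves implicit.
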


\begin{proof}
	By simultaneous induction on $c_x$ and $M$. We proceed by cases.
	
	Case $c_x = M$.
	$$
	\begin{array}{rcll}
		\Psi_x(\ltc{z}{M}{P}) & = & x.\Psi(\ltc{z}{M}{P})\\
		& = & x. \Cutvfsc{\Psi M}{z. \Psi P} & \text{(by IH)}\\
		& = & ((x.\Psi M):z.\Psi P)\\
		& = & (\Psi_x (M):z.\Psi P)
	\end{array}
	$$
	
	Case $c_x = \lt{y}{xW}{N}$.
	$$
	\begin{array}{rcll}
		&& \Psi_x(\ltc{z}{\lt{y}{xW}{N}}{P}) \\
		& = & \Psi_x(\lt{y}{xW}{\ltc{z}{N}{P}}) \\
		& = & \garg{\Psi W}{y}{\Psi(\ltc{z}{N}{P})}\\
		& = & \garg{\Psi W}{y}{\Cutvfsc{\Psi N}{z. \Psi P}} & \text{(by IH)}\\
		& = & ( \garg{\Psi W}{y}{\Psi N}: z. \Psi P)\\
		& = & (\Psi_x(\lt{y}{xW}{N}): z. \Psi P)
	\end{array}
	$$

	Case $M = V$.
	$$
	\begin{array}{rcll}
		\Psi(\ltc{z}{V}{P}) & = & \Psi(\lt{z}{V}{P})\\
		& = & \Cutvfs{\Psi_v V}{z. \Psi P}\\
		& = & \Cutvfsc{\rt{\Psi_v V}}{z. \Psi P} \\
		& = & \Cutvfsc{\Psi V}{z. \Psi P} 
	\end{array}
	$$
	
	Case $M= \lt{x}{V}{c_x}$.
	$$
	\begin{array}{rcll}
		\Psi(\ltc{z}{\lt{x}{V}{c_x}}{P}) & = & \Psi(\lt{x}{V}{\ltc{z}{c_x}{P}})\\
		& = & \Cutvfs{\Psi V}{\Psi_x(\ltc{z}{c_x}{P})}\\
		& = & \Cutvfs{\Psi V}{(\Psi_x(c_x):z. \Psi P)} & \text{(by IH)}\\
		& = & \Cutvfsc{\Cutvfs{\Psi V}{\Psi_x(c_x)}}{z.\Psi P}\\
		& =& \Cutvfsc{\Psi(\lt{x}{V}{c_x})}{z. \Psi P}
	\end{array}
	$$
	
\end{proof}

\begin{lem}\label{lem:theta-subs}
	For all $V,W, c, M \in \VFS$: 
	\begin{enumerate}
		\item $\Theta (\sub{V}{x}{W}) =  \sub{\Theta V}{x}{\Theta W}$. 
		\item $\Theta_y (\sub{V}{x}{c}) =  \sub{\Theta V}{x}{\Theta_y (c)}$. 
		\item $\Theta (\sub{V}{x}{M}) =  \sub{\Theta V}{x}{\Theta M}$. 
	\end{enumerate}
\end{lem}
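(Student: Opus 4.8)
The plan is to establish the three identities by a single simultaneous induction: on the value $W$ for item~1, on the formal context $c$ for item~2, and on the term $M$ for item~3 (mirroring the proof of Lemma~\ref{lem:psi-subs}). Throughout I read $\Theta$ applied to a value as $\Theta_v$, and I work up to $\alpha$-equivalence, choosing the bound variable of a formal context $z.N$, the binder of a formal context $\garg{W}{z}{N}$, and the implicit hole variable of a formal context to be fresh for $x$ and for $V$; with this convention $\sub{V}{x}{(z.N)}=z.\sub{V}{x}{N}$ and $\sub{V}{x}{\garg{W}{z}{N}}=\garg{\sub{V}{x}{W}}{z}{\sub{V}{x}{N}}$.

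First I would dispatch item~1: the variable case $W=z$ splits into $z=x$ (both sides are $\Theta_v V$) and $z\neq x$ (both sides are $z$), and the abstraction case $W=\lb z.N$ reduces, after pushing $\Theta_v$ and $\sub{V}{x}{-}$ past the binder, to the induction hypothesis for item~3 on $N$. Then item~3: the case $M=\rt{W}$ is immediate from item~1 on $W$, and the case $M=\Cutvfs{W}{c}$ follows from items~1 (on $W$) and~2 (on $c$), once one observes that $\sub{\Theta_v V}{x}{-}$ commutes past the let-binder produced by $\Theta$ because that binder is fresh. Finally item~2: the case $c=\garg{W}{z}{N}$ unfolds $\Theta_y$ on both sides and closes using items~1 (on $W$) and~3 (on $N$), needing only that the hole variable $y$ and the binder $z$ differ from $x$ and are not free in $\Theta_v V$.

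The one step I expect to need genuine care is item~2 in the case $c=z.N$. There $\Theta_y(z.N)=\sub{y}{z}{\Theta N}$ --- the defining clause that replaces the bound variable of the formal context by the hole variable rather than merely renaming it --- so, after applying the induction hypothesis for item~3, the goal becomes the substitution-commutation identity $\sub{y}{z}{\sub{\Theta_v V}{x}{\Theta N}}=\sub{\Theta_v V}{x}{\sub{y}{z}{\Theta N}}$. This is an instance of the standard substitution lemma, valid because $z$ is fresh (so $z\neq x$ and $z\notin FV(\Theta_v V)$) and because $x\neq y$ (as $y$ is the fresh hole variable); it is the only place where the argument is not a one-line unfolding. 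Since all induction hypotheses are invoked only on proper subexpressions, the simultaneous induction is well-founded, and assembling these cases yields the lemma.
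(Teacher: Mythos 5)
Your proposal is correct and follows exactly the route the paper takes: the paper's own proof is just "straightforward by simultaneous induction on $W$, $c$ and $M$", and your case analysis (including the freshness conventions and the substitution-commutation step needed for the clause $\Theta_y(z.N)=\sub{y}{z}{\Theta N}$) is a faithful, more detailed elaboration of that same induction.
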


\begin{proof}
	Straightforward by simultaneous induction on $W$, $c$ and $M$.
\end{proof}

\begin{lem}\label{lem:phi-col}
	For all $c, N, M \in \VFS$:
	\begin{enumerate}
		\item $\Theta_x((c: y.N)) = \ltc{y}{\Theta_x (c)}{\Theta N}$.
		\item $\Theta(\Cutvfsc{M}{y.N}) = \ltc{y}{\Theta M}{\Theta N}$.
	\end{enumerate}  
\end{lem}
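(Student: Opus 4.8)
The plan is to prove the two items simultaneously by induction on the structure of $c$ and of $M$ — a joint induction is natural here because $(c:c')$ and $\Cutvfsc{M}{c'}$ are defined by the same mutual recursion. Each case amounts to unfolding, on the $\VFS$ side, a clause of the definition of $(c:y.N)$ or $\Cutvfsc{M}{y.N}$, then pushing $\Theta$ (resp.\ $\Theta_x$) inside via its defining clauses, and finally recognising on the $\VES$ side a clause of the definition of $\ltc{y}{c_x}{P}$ or $\ltc{y}{M}{P}$.

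For item 1 I would split on $c$. When $c = x.M$ is a term-context (taking, by the usual variable convention, its bound variable to be the hole variable $x$), one has $(c:y.N) = x.\Cutvfsc{M}{y.N}$, hence $\Theta_x((c:y.N)) = \Theta(\Cutvfsc{M}{y.N})$, which by the induction hypothesis for item 2 applied to $M$ equals $\ltc{y}{\Theta M}{\Theta N} = \ltc{y}{\Theta_x(c)}{\Theta N}$. When $c = \garg{W}{z}{M}$, one has $(c:y.N) = \garg{W}{z}{\Cutvfsc{M}{y.N}}$, so $\Theta_x((c:y.N)) = \lt{z}{x(\Theta_v W)}{\Theta(\Cutvfsc{M}{y.N})}$; the induction hypothesis for item 2 turns this into $\lt{z}{x(\Theta_v W)}{\ltc{y}{\Theta M}{\Theta N}}$, which, by the clause of the $\VES$ operator $\ltc{y}{-}{-}$ that handles a let-expression with an application as parameter, equals $\ltc{y}{\lt{z}{x(\Theta_v W)}{\Theta M}}{\Theta N} = \ltc{y}{\Theta_x(c)}{\Theta N}$.

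For item 2 I would split on $M$. When $M = \rt{V}$, then $\Cutvfsc{\rt V}{y.N} = \Cutvfs{V}{y.N}$ and $\Theta(\Cutvfs{V}{y.N}) = \lt{x}{\Theta_v V}{\sub{x}{y}{\Theta N}} =_{\alpha} \lt{y}{\Theta_v V}{\Theta N}$; since $\Theta_v V$ is a value this is $\ltc{y}{\Theta_v V}{\Theta N} = \ltc{y}{\Theta(\rt V)}{\Theta N}$. When $M = \Cutvfs{V}{c}$, then $\Cutvfsc{M}{y.N} = \Cutvfs{V}{(c:y.N)}$, so $\Theta(\Cutvfsc{M}{y.N}) = \lt{x}{\Theta_v V}{\Theta_x((c:y.N))}$; the induction hypothesis for item 1 applied to $c$ rewrites this to $\lt{x}{\Theta_v V}{\ltc{y}{\Theta_x(c)}{\Theta N}}$, and, by the clause of $\ltc{y}{-}{-}$ that handles a let-expression with a value as parameter, this is $\ltc{y}{\lt{x}{\Theta_v V}{\Theta_x(c)}}{\Theta N} = \ltc{y}{\Theta(\Cutvfs{V}{c})}{\Theta N}$.

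The algebraic content is thus a line-by-line matching of recursive clauses, and I expect the only real obstacle to be the bookkeeping of bound and hole variables — in particular the side conditions $x \notin FV(P)$ carried by the derived operator $\ltc{y}{c_x}{P}$ of $\VES$. These are met because the hole variables $x$ are the fresh names introduced by $\Theta$, which do not occur in $N$, so $x \notin FV(\Theta N)$. If one prefers not to normalise a formal context's bound variable to its hole variable, the term-context subcase of item 1 additionally invokes the routine law $\sub{u}{v}{\ltc{y}{M}{P}} = \ltc{y}{\sub{u}{v}{M}}{\sub{u}{v}{P}}$ in $\VES$, valid whenever $v \notin FV(P)$.
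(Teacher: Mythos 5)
Your proposal is correct and follows essentially the same route as the paper: a simultaneous induction on $c$ and $M$, unfolding the defining clauses of $(c:y.N)$, $\Cutvfsc{M}{y.N}$, $\Theta$, $\Theta_x$ and of the $\VES$ operator $\ltc{y}{-}{-}$, with the freshness of the hole variable guaranteeing the provisos. The only cosmetic difference is in the term-context case of item 1, where you $\alpha$-normalise the bound variable to the hole variable (or invoke the substitution law), whereas the paper keeps $c=z.M$ general and commutes $\sub{x}{z}{-}$ with $\ltc{y}{-}{\Theta N}$ using $z\notin FV(N)$ --- the same routine fact you mention as a fallback.
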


\begin{proof}
	By simultaneous induction on $c$ and $M$.
	
	Case $c=z.M$.
	$$
	\begin{array}{rcll}
		\Theta_x(z.M:y.N) & = & \Theta_x(z.\Cutvfsc{M}{y.N})\\
		& = & \sub{x}{z}{\Theta(\Cutvfsc{M}{y.N})}\\
		& = & \sub{x}{z}{(\ltc{y}{\Theta M}{\Theta N})} & \text{(by IH)}\\
		& = & \ltc{y}{\sub{x}{z}{\Theta M}}{\Theta N}& (z \notin N)\\ 
		& =& \ltc{y}{\Theta_x(z.M)}{\Theta N}
	\end{array}
	$$
	
	Case $c=\garg{W}{z}{P}$.
	$$
	\begin{array}{rcll}
		\Theta_x(\garg{W}{z}{P}:y.N) & = & \Theta_x\garg{W}{z}{\Cutvfsc{P}{y.N}}\\
		& = & \lt{z}{x\Theta_v W}{\Theta_x (\Cutvfsc{P}{y.N})}\\
		& = & \lt{z}{x\Theta_v W}{\ltc{y}{\Theta P}{\Theta N}} & \text{(by IH)}\\
		& = & \ltc{y}{\lt{z}{x\Theta_v W}{\Theta P}}{\Theta N}\\
		& = & \ltc{y}{\Theta_x \garg{W}{z}{P}}{\Theta N}
	\end{array}
	$$
	
	Case $M=\rt{V}$.
	$$
	\begin{array}{rcll}
		\Theta (\Cutvfsc{\rt{V}}{y.N}) &=& \Theta(\Cutvfs{V}{y.N})\\
		&=& \lt{x}{\Theta_v V}{\Theta_x (y.N)}\\
		&=& \lt{x}{\Theta_v V}{\sub{x}{y}{\Theta N}}\\
		&=& \lt{y}{\Theta_v V}{\Theta N}\\
		&=& \ltc{y}{\Theta_v V}{\Theta N}\\
		&=& \ltc{y}{\Theta (\rt{V})}{\Theta N}
	\end{array}
	$$
	
	Case $M=\Cutvfs{V}{c}$.
	$$
	\begin{array}{rcll}
		\Theta (\Cutvfsc{\Cutvfs{V}{c}}{y.N}) &=& \Theta (\Cutvfs{V}{(c:y.N)})\\
		& = & \lt{x}{\Theta_v V}{\Theta_x(c:y.N)}\\
		& = & \lt{x}{\Theta_v V}{\ltc{y}{\Theta_x(c)}{\Theta N}} & \text{(by IH)}\\
		& = & \ltc{y}{\lt{x}{\Theta_v V}{\Theta_x(c)}}{\Theta N}\\
		& = & \ltc{y}{\Theta(\Cutvfs{V}{c})}{\Theta N}
	\end{array}
	$$

\end{proof}

\begin{lem}\label{lem:upsi-sub}
	For all $V,W,M \in \ces$: 
	\begin{enumerate}
		\item $\Upsilon (\sub{V}{x}{W}) =  \sub{\Upsilon V}{x}{\Upsilon W}$. 
		\item $\Upsilon (\sub{V}{x}{M}) =  \sub{\Upsilon V}{x}{\Upsilon M}$. 
	\end{enumerate}
\end{lem}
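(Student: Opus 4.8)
The plan is to prove the two clauses simultaneously by structural induction --- clause (1) by induction on $W$ and clause (2) by induction on $M$ --- exactly in the style of the proofs of Lemmas \ref{lem:psi-subs} and \ref{lem:theta-subs}. The reason this works is that in $\ces$ the grammars of values and of terms are mutually recursive (a value $\lb y.N$ carries a term $N$, and a term $\lt{y}{W_1W_2}{N}$ carries two values and a term), that capture-avoiding substitution is defined by precisely this recursion on both the $\ces$ side and the $\cnf$ side, and that $\Upsilon$ sends each $\ces$ constructor to the corresponding $\cnf$ constructor. So each case of the induction is a one-line calculation: unfold $\Upsilon$, push the substitution past the outermost constructor, apply the induction hypothesis to the immediate subexpressions, and fold $\Upsilon$ back.

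First I would dispatch clause (1). If $W=x$, both sides equal $\Upsilon V$; if $W=y\ne x$, both sides equal $y$; if $W=\lb y.N$, then $\Upsilon(\sub{V}{x}{(\lb y.N)})=\lb y.\Upsilon(\sub{V}{x}{N})$, which by the induction hypothesis for clause (2) on $N$ equals $\lb y.\sub{\Upsilon V}{x}{\Upsilon N}=\sub{\Upsilon V}{x}{\Upsilon(\lb y.N)}$. Then clause (2): if $M$ is a value it is clause (1); if $M=\lt{y}{W_1W_2}{N}$, then $\Upsilon(\sub{V}{x}{M})=\gapp{\Upsilon(\sub{V}{x}{W_1})}{\Upsilon(\sub{V}{x}{W_2})}{y}{\Upsilon(\sub{V}{x}{N})}$, and applying clause (1) to $W_1$ and $W_2$ together with the induction hypothesis for clause (2) to $N$ rewrites this to $\sub{\Upsilon V}{x}{\gapp{\Upsilon W_1}{\Upsilon W_2}{y}{\Upsilon N}}=\sub{\Upsilon V}{x}{\Upsilon M}$. (Implicitly one uses that $\ces$-substitution is closed on the $\ces$ grammar, which holds because substituting a value for a variable turns values into values, so $\sub{V}{x}{W_1}$ and $\sub{V}{x}{W_2}$ remain values.)

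I do not expect a genuine obstacle: the only care needed is the standard $\alpha$-convention in the two binding cases, i.e.\ taking the bound variable distinct from $x$ and not free in $V$ so that the substitution commutes past the binder in the same way on both sides. The lemma is purely auxiliary: combined with the identity $\Upsilon(\ltc{y}{M}{P})=\lsub{\Upsilon M}{y}{\Upsilon P}$, it is exactly what delivers the $\betav$-case of the reduction-simulation direction of the isomorphism $\ces\cong\cnf$ in Theorem \ref{thm:two-isomorphisms}.
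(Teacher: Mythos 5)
Your proposal is correct and follows exactly the route the paper takes: the paper's proof is simply "by an easy simultaneous induction on $W$ and $M$", and your case analysis (variables, $\lb$-abstraction via the mutual induction hypothesis, and $\lt{y}{W_1W_2}{N}$ via clause (1) on the values plus the induction hypothesis on $N$, under the usual $\alpha$-convention) is the intended unfolding of that induction.
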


\begin{proof}
	By an easy simultaneous induction on $W$ and $M$.
\end{proof}

\begin{lem}\label{lem:upsi-LET}
	For all $M,P \in \ces$, $\Upsilon(\ltc{y}{M}{P}) = \lsub{\Upsilon M}{y}{\Upsilon P}$.
\end{lem}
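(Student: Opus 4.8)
The plan is to prove Lemma~\ref{lem:upsi-LET} by induction on the structure of the $\ces$-term $M$, mirroring clause by clause the recursive definitions of the $\ces$-operation $\ltc{y}{M}{P}$ (base case $\ltc{y}{V}{P}=\sub{V}{y}{P}$, step case $\ltc{y}{(\lt{x}{VW}{M})}{P}=\lt{x}{VW}{\ltc{y}{M}{P}}$) and of left substitution $\lsub{N}{y}{P}$ in $\cnf$ (base case $\lsub{V}{y}{P}=\sub{V}{y}{P}$, step case $\lsub{\gapp{V}{W}{x}{N}}{y}{P}=\gapp{V}{W}{x}{\lsub{N}{y}{P}}$). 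Since $\Upsilon$ sends $\ces$-values to $\cnf$-values and $\ces$-let-expressions to generalized applications, the two recursions line up, and each induction case reduces to a short calculation.

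In the base case $M=V$, I would compute $\Upsilon(\ltc{y}{V}{P})=\Upsilon(\sub{V}{y}{P})$, rewrite this as $\sub{\Upsilon V}{y}{\Upsilon P}$ using Lemma~\ref{lem:upsi-sub}(2), and then observe that $\Upsilon V$ is a value of $\cnf$ (immediate from $\Upsilon x=x$ and $\Upsilon(\lb x.M)=\lb x.\Upsilon M$), so that $\sub{\Upsilon V}{y}{\Upsilon P}=\lsub{\Upsilon V}{y}{\Upsilon P}$ by the base clause of left substitution. In the step case $M=\lt{x}{VW}{M'}$, I would unfold $\ltc{y}{(\lt{x}{VW}{M'})}{P}=\lt{x}{VW}{\ltc{y}{M'}{P}}$, apply the defining clause $\Upsilon(\lt{x}{VW}{N})=\Upsilon V\garg{\Upsilon W}{x}{\Upsilon N}$, invoke the induction hypothesis on $M'$ to replace $\Upsilon(\ltc{y}{M'}{P})$ by $\lsub{\Upsilon M'}{y}{\Upsilon P}$, and recognize the outcome as $\lsub{\gapp{\Upsilon V}{\Upsilon W}{x}{\Upsilon M'}}{y}{\Upsilon P}=\lsub{\Upsilon(\lt{x}{VW}{M'})}{y}{\Upsilon P}$ via the step clause of left substitution, using that $\Upsilon V\garg{\Upsilon W}{x}{\Upsilon M'}$ is exactly the generalized application $\gapp{\Upsilon V}{\Upsilon W}{x}{\Upsilon M'}$.

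I do not expect any genuine obstacle here. The two points that need a little attention are: feeding Lemma~\ref{lem:upsi-sub} into the base case so that $\ces$-substitution and $\cnf$-substitution coincide on the translated value; and keeping the implicit bound-variable conditions aligned — the side condition $x\notin FV(P)$ that legitimizes the equation $\ltc{y}{(\lt{x}{VW}{M'})}{P}=\lt{x}{VW}{\ltc{y}{M'}{P}}$ (by the variable convention) is precisely the one needed to apply the step clause of $\lsub{\_}{y}{\_}$, so both sides remain in sync under $\alpha$-equivalence. Together with the $\Psi$/$\Theta$ counterparts, this lemma will then be used to establish the simulation clauses of the isomorphism $\ces\cong\cnf$ in Theorem~\ref{thm:two-isomorphisms}.
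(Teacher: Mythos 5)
Your proposal is correct and matches the paper's own proof: the same induction on $M$, with the base case discharged via Lemma~\ref{lem:upsi-sub} and the observation that $\lsub{\Upsilon V}{y}{\Upsilon P}=\sub{\Upsilon V}{y}{\Upsilon P}$ on values, and the step case handled by unfolding $\Upsilon(\lt{x}{VW}{M'})$ as a generalized application and applying the induction hypothesis together with the step clause of left substitution. Your remarks on the bound-variable side conditions are a harmless (and accurate) addition.
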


\begin{proof}
	By induction on $M$. 
	
	Case $M = V$.
	$$
	\begin{array}{rcll}
		\Upsilon(\ltc{y}{V}{P}) & = & \Upsilon(\sub{V}{y}{P})\\
		& = & \sub{\Upsilon V}{y}{\Upsilon P} & \text{(by Lemma \ref{lem:upsi-sub})}\\
		& = & \lsub{\Upsilon V}{y}{\Upsilon P}
	\end{array}
	$$
	
	Case $M = \lt{x}{VW}{N}$.
	$$
	\begin{array}{rcll}
		\Upsilon(\ltc{y}{\lt{x}{VW}{N}}{P}) & = & \Upsilon(\lt{x}{VW}{\ltc{y}{N}{P}})\\
		& = & \li{\Upsilon V}{\Upsilon W}{x}{\Upsilon(\ltc{y}{N}{P})} \\
		& = &  \li{\Upsilon V}{\Upsilon W}{x}{\lsub{\Upsilon N}{y}{\Upsilon P}} & \text{(by IH)}\\
		& = & \lsub{\li{\Upsilon V}{\Upsilon W}{x}{\Upsilon N}}{y}{\Upsilon P}\\
		& = & \lsub{\Upsilon(\lt{x}{VW}{N})}{y}{\Upsilon P}
	\end{array}
	$$
	
\end{proof}

\begin{lem}\label{lem:phi-subs}
	For all $V,W,M \in \cnf$: 
	\begin{enumerate}
		\item $\Phi (\sub{V}{x}{W}) =  \sub{\Phi V}{x}{\Phi W}$. 
		\item $\Phi (\sub{V}{x}{M}) =  \sub{\Phi V}{x}{\Phi M}$. 
	\end{enumerate}
\end{lem}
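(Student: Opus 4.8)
The plan is to prove the two items by a single simultaneous induction, with part~(1) proceeding by induction on $W$ and part~(2) by induction on $M$, exactly in the style of the companion Lemma~\ref{lem:upsi-sub}. The induction is genuinely simultaneous because the defining clauses of $\Phi$ interleave the syntactic classes: the abstraction case of part~(1) will invoke part~(2) on the body, while the generalized-application case of part~(2) will invoke part~(1) on the two value subterms and part~(2) on the continuation body. I would take the measure to be the structural size of the value/term being substituted into, and throughout I assume the variable convention, so that the bound variable in any $\lambda$- or $\mathsf{let}$-binder is fresh for $x$ and for $FV(V)$.

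For part~(1): if $W=x$ then both sides equal $\Phi V$; if $W=y\neq x$ then both sides equal $y$; if $W=\lb y.N$ then $\Phi(\sub{V}{x}{(\lb y.N)})=\lb y.\Phi(\sub{V}{x}{N})=\lb y.\sub{\Phi V}{x}{\Phi N}=\sub{\Phi V}{x}{(\lb y.\Phi N)}=\sub{\Phi V}{x}{\Phi(\lb y.N)}$, using the induction hypothesis for part~(2) on $N$. For part~(2): when $M$ is a value $W$, the claim is exactly part~(1); the only other case, since $M$ ranges over commutative normal forms, is $M=\gapp{W_1}{W_2}{y}{N}$ with $W_1,W_2$ values, where $\sub{V}{x}{M}=\gapp{\sub{V}{x}{W_1}}{\sub{V}{x}{W_2}}{y}{\sub{V}{x}{N}}$, so that $\Phi(\sub{V}{x}{M})=\lt{y}{\Phi(\sub{V}{x}{W_1})\,\Phi(\sub{V}{x}{W_2})}{\Phi(\sub{V}{x}{N})}$; applying the induction hypothesis (part~(1) twice, part~(2) once) and pulling the substitution back out yields $\sub{\Phi V}{x}{(\lt{y}{\Phi W_1\,\Phi W_2}{\Phi N})}=\sub{\Phi V}{x}{\Phi M}$.

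I expect essentially no obstacle: the statement is just the homomorphism property of $\Phi$ with respect to substitution, the grammar of $\cnf$ has only these few productions, and the same pattern has already been carried out for $\Psi$ (Lemma~\ref{lem:psi-subs}), $\Theta$ (Lemma~\ref{lem:theta-subs}) and $\Upsilon$ (Lemma~\ref{lem:upsi-sub}). The only point needing a moment's care is the freshness of bound variables in the $\lambda$- and $\mathsf{let}$-cases, handled by the standard variable convention exactly as in those earlier proofs; accordingly I would state the result as ``straightforward by simultaneous induction on $W$ and $M$'', in line with the neighbouring substitution lemmas.
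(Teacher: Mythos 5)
Your proof is correct and follows exactly the route the paper takes: the paper disposes of this lemma with "by an easy simultaneous induction on $W$ and $M$", and your case analysis (variables, abstraction via item~2 on the body, and the generalized application $\gapp{W_1}{W_2}{y}{N}$ via items~1 and~2) is precisely the omitted routine verification, with the variable convention handling freshness as in the neighbouring substitution lemmas.
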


\begin{proof}
	By an easy simultaneous induction on $W$ and $M$.
\end{proof}

\begin{lem}\label{lem:phi-LET}
	For all $M,P \in \cnf$, $\Phi (\lsub{M}{x}{P}) = \ltc{x}{\Phi M}{\Phi P}$.
\end{lem}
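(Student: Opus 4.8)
The plan is to prove the identity by structural induction on the $\cnf$-term $M$, in exact parallel with the proof of Lemma~\ref{lem:upsi-LET} (of which this is the formal dual). Recall that a $\cnf$-term is either a value $V$ or a commutative normal form $\li{V}{W}{y}{N}$, that $\Phi$ sends values to values (it is the identity on variables and commutes with $\lb$-abstraction), that left substitution on $\cnf$ satisfies $\lsub{V}{x}{P}=\sub{V}{x}{P}$ and $\lsub{\li{V}{W}{y}{N}}{x}{P}=\li{V}{W}{y}{\lsub{N}{x}{P}}$, and that the derived $\ces$-operation $\ltc{x}{-}{P}$ satisfies $\ltc{x}{V'}{P}=\sub{V'}{x}{P}$ and $\ltc{x}{(\lt{y}{V'W'}{M'})}{P}=\lt{y}{V'W'}{\ltc{x}{M'}{P}}$.

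For the base case $M=V$, I would compute $\Phi(\lsub{V}{x}{P})=\Phi(\sub{V}{x}{P})$, rewrite it as $\sub{\Phi V}{x}{\Phi P}$ using Lemma~\ref{lem:phi-subs}, and observe that, since $\Phi V$ is a value, this is precisely $\ltc{x}{\Phi V}{\Phi P}$ by the base clause of the $\ltc{}{}{}$-operation; as $\Phi V=\Phi M$, the case closes. For the inductive case $M=\li{V}{W}{y}{N}$ (with $y$ renamed so that $y\neq x$ and $y\notin FV(P)$), the computation runs $\Phi(\lsub{\li{V}{W}{y}{N}}{x}{P})=\Phi(\li{V}{W}{y}{\lsub{N}{x}{P}})=\lt{y}{\Phi V\,\Phi W}{\Phi(\lsub{N}{x}{P})}$ by the definition of $\Phi$, then $=\lt{y}{\Phi V\,\Phi W}{\ltc{x}{\Phi N}{\Phi P}}$ by the induction hypothesis on $N$. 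Since $\Phi V$ and $\Phi W$ are values, $\Phi V\,\Phi W$ has the shape required of the actual parameter of a $\ces$ let-expression, so the recursion clause of $\ltc{}{}{}$ folds this into $\ltc{x}{(\lt{y}{\Phi V\,\Phi W}{\Phi N})}{\Phi P}=\ltc{x}{\Phi(\li{V}{W}{y}{N})}{\Phi P}$, as desired.

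I do not expect a genuine obstacle: this is a two-case structural induction mirroring an already established lemma. The only points that need (minimal) care are the appeal to Lemma~\ref{lem:phi-subs} in the base case — needed because left substitution degenerates to ordinary substitution on values — and the fact that $\Phi$ preserves the class of values, which is exactly what guarantees that $\lt{y}{\Phi V\,\Phi W}{\Phi N}$ is a legitimate $\ces$-term and that the recursion clause $\ltc{x}{(\lt{y}{V'W'}{M'})}{P}=\lt{y}{V'W'}{\ltc{x}{M'}{P}}$ applies with $V'W'=\Phi V\,\Phi W$.
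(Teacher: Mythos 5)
Your proposal is correct and matches the paper's own proof: the same two-case structural induction on $M$, with the base case discharged via Lemma~\ref{lem:phi-subs} and the clause $\ltc{x}{V}{P}=\sub{V}{x}{P}$ of the $\ces$-operation, and the inductive case by unfolding $\lsub{\cdot}{x}{\cdot}$, applying $\Phi$, invoking the induction hypothesis, and refolding via the recursion clause of $\ltc{x}{\cdot}{\cdot}$. Nothing is missing.
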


\begin{proof}
	By induction on $M$. 
	
	Case $M = V$.
	$$
	\begin{array}{rcll}
		\Phi(\lsub{V}{x}{P}) & = & \Phi(\sub{V}{x}{P})\\
		& = & \sub{\Phi V}{x}{\Phi P} & \text{(by Lemma \ref{lem:phi-subs})}\\
		& = & \ltc{x}{\Phi V}{\Phi P}
	\end{array}
	$$
	
	Case $M = \li{V}{W}{y}{N}$.
	$$
	\begin{array}{rcll}
		\Phi(\lsub{\li{V}{W}{y}{N}}{x}{P}) & = & \Phi (\li{V}{W}{y}{\lsub{N}{x}{P}})\\
		& = & \lt{y}{\Phi V \Phi W}{\Phi(\lsub{N}{x}{P})}\\
		& = & \lt{y}{\Phi V \Phi W}{\ltc{x}{\Phi N}{\Phi P}} & \text{(by IH)}\\
		& = & \ltc{x}{\lt{y}{\Phi V \Phi W}{\Phi N}}{\Phi P}\\
		& = & \ltc{x}{\Phi(\li{V}{W}{y}{N})}{\Phi P}
	\end{array}
	$$
	
\end{proof}

\noindent \textbf{Theorem \ref{thm:two-isomorphisms}}\textbf{.}
$\VES\cong\VFS$ and $\ces\cong\cnf$.

\begin{proof}
	First we are going to prove the isomorphism $\VES\cong\VFS$. The following items need to be proven: 
	\begin{enumerate}
		\item For all $M,V, c_x\in\VES$,  $\Theta(\Psi M)=M$, $\Theta(\Psi V)=V$ and $\Theta_x(\Psi_x (c_x)) = c_x$.
		\item For all $M,V,c\in\VFS$, $\Psi(\Theta M)=M$, $\Psi(\Theta V)=V$ and $\Psi_x(\Theta_x(c))=c$.
		\item If $M_1\to M_2$ in $\VES$ then $\Psi M_1 \to \Psi M_2$ in $\VFS$.
		\item If $M_1\to M_2$ in $\VFS$ then $\Theta M_1 \to \Theta M_2$ in $\VES$.
	\end{enumerate}

	Item 1 follows by simultaneous induction on $M$, $V$ and $c_x$. Similarly, item 2 follows by simultaneous induction on $M$, $V$ and $c$. Item 3 follows by induction on the relation $M_1 \to M_2$ in $\VES$. We just show the base cases.
	
	Case $(\ltv)$.
	$$
	\begin{array}{rcll}
		\Psi(\lt{x}{V}{M}) & = & \Cutvfs{\Psi V}{x. \Psi M}\\
		& \red_{\sigmav} & \sub{\Psi V}{x}{\Psi M} \\
		& = & \Psi (\sub{V}{x}{M})& \text{(by Lemma \ref{lem:theta-subs})}
	\end{array}
	$$
	
	Case $(\Bv)$.
	$$
	\begin{array}{rcll}
		&& \Psi(\lt{y}{\lb x.M}{\lt{z}{yV}{P}}) \\
		& = & \Cutvfs{\lb x. \Psi M}{\garg{\Psi V}{z}{\Psi P}}\\
		& \red_{\Bv} &  \Cutvfs{\lb x. \Psi M}{x. \Cutvfsc{\Psi M}{z. \Psi P}}\\
		& = & \Cutvfs{\Psi V}{x. \Psi(\ltc{z}{M}{P})}& \text{(by Lemma \ref{lem:psi-col})}\\
		& = & \Psi(\lt{x}{V}{\ltc{z}{M}{P}})
	\end{array}
	$$
	
		Item 4 follows by induction on the relation $M_1 \to M_2$ in $\VFS$. We just show the base cases.
		
		Case $(\sigmav)$.
		$$
		\begin{array}{rcll}
			\Theta (\Cutvfs{V}{y.N})& = & \lt{y}{\Theta V}{\Theta N}\\
			& \red_{\ltv} & \sub{\Theta V}{x}{\Theta N} \\
			& = & \Theta (\sub{V}{x}{N})& \text{(by Lemma \ref{lem:phi-subs})}
		\end{array}
		$$
		
		Case $(\Bv)$.
		$$
		\begin{array}{rcll}
			&& \Theta (\Cutvfs{\lb x.M}{\garg{V}{y}{N}}) \\
			& = & \lt{z}{\lb x. \Theta M}{\lt{y}{z\Theta V}{\Theta N}}\\
			& \red_{\Bv} & \lt{x}{\Theta V}{\ltc{y}{\Theta M}{\Theta N}}\\
			& =& \lt{x}{\Theta V}{\Theta(\Cutvfsc{M}{y.N})} & \text{(by Lemma \ref{lem:phi-col})}\\
			& =& \Theta (\Cutvfs{V}{x.\Cutvfsc{M}{y.N}})
		\end{array}
		$$

		Now we are going to prove the second isomorphism $\ces\cong\cnf$. The following items need to be proven: 
		
		\begin{enumerate}
			\item For all $M,V \in \ces$,  $\Phi(\Upsilon M)=M$ and $\Phi(\Upsilon V)=V$.
			\item For all $M,V \in\cnf$, $\Upsilon(\Phi M)=M$ and $\Upsilon(\Phi V)=V$.
			\item If $M_1\to M_2$ in $\ces$ then $\Upsilon M_1 \to \Upsilon M_2$ in $\cnf$.
			\item If $M_1\to M_2$ in $\cnf$ then $\Phi M_1 \to \Phi M_2$ in $\ces$.
		\end{enumerate}
		
		Item 1 follows by simultaneous induction on $M$ and $V$. Analogously, item 2 follows by simultaneous induction on $M$ and $V$.  Item 3 follows by induction on the relation $M_1 \to M_2$ in $\ces$. We just show the base case.
		
		Case $(\betav)$.
		$$
		\begin{array}{rcll}
			\Upsilon(\lt{y}{(\lb x.M)V}{P}) & = & \li{\Upsilon (\lb x.M)}{\Upsilon V}{y}{\Upsilon P}\\
			& = & \li{\lb x. \Upsilon M}{\Upsilon V}{y}{\Upsilon P}\\
			& \red_{\betav} & \lsub{\sub{\Upsilon V}{x}{\Upsilon M}}{y}{\Upsilon P}\\
			& = & \lsub{\Upsilon(\sub{V}{x}{M})}{y}{\Upsilon P} & \text{(by Lemma \ref{lem:upsi-sub})}\\
			& = & \Upsilon(\ltc{y}{\sub{V}{x}{M}}{P}) & \text{(by Lemma \ref{lem:upsi-LET})}
		\end{array}
		$$
		
		Similarly, item 4 follows by induction on the relation $M_1 \to M_2$ in $\cnf$. We just show the base case.
		
		Case $(\betav)$.
		$$
		\begin{array}{rcll}
			\Phi(\li{(\lb y.M)}{W}{x}{P}) & = & \lt{x}{\Phi(\lb y. M)\Phi W}{\Phi P}\\
			& = & \lt{x}{(\lb y. \Phi M)\Phi W}{\Phi P} \\
			& \red_{\betav} & \ltc{x}{\sub{\Phi W}{y}{\Phi M}}{\Phi P}\\
			& = & \ltc{x}{\Phi(\sub{W}{y}{M})}{\Phi P} & \text{(by Lemma \ref{lem:phi-subs})}\\
			& = & \Phi(\lsub{\sub{W}{y}{M}}{x}{P}) & \text{(by Lemma \ref{lem:phi-LET})}
		\end{array}
		$$
\end{proof}

Next we are going to prove Theorem \ref{thm:iso-cnf}. 

\begin{table}[t]
		$$
		\begin{array}{rclcrcl}
			\ngv x & = & x &\qquad&  \angt {V} & = & k\ngv V\\
			\ngv{(\lb x.M)} & = & \lb x. \ngt{M} &&\angt{(V{\garg{W}{x}{M}})}& = & \ngv V \ngv W (\lb x.\angt M) \\
			\ngt M & = & \lb k.\angt M && \\
			\\
			\posv{x}&=&x&&\apos{(kV)}&=&\posv{V}\\
			\posv{(\lb x.P)} &=&\lb x.\post{P}&&\apos{(VW(\lb x.M))}&=&\posv{V}\garg{\posv{W}}{x}{\post{M}}\\
			\post{(\lb k.M)}&=&\apos M&&
		\end{array}
		$$
	\caption{Translation from $\cnf$ to $\cps$ and vice-versa.}
	\label{table:cnf-cps}
\end{table}

Consider the translations given in table \ref{table:cnf-cps}.  The next 4 Lemmas are needed to obtain Theorem \ref{thm:iso-cnf}.

\begin{lem}\label{lem:subs-neg}
	For all $V,W,M \in \cnf$:
	\begin{enumerate}
		\item $\ngv{(\sub{V}{x}{W})} = \sub{\ngv{V}}{x}{\ngv{W}}$.
		\item $\angt{(\sub{V}{x}{M})} = \sub{\ngv{V}}{x}{\angt{M}}$.
	\end{enumerate}
\end{lem}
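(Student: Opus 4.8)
The plan is to prove the two equations simultaneously by induction, with item~1 going by induction on the structure of the value $W$ and item~2 by induction on the structure of the $\cnf$-term $M$; the two inductions are mutually recursive, since the abstraction case of item~1 appeals to item~2 on the body of the abstraction, while the case of item~2 where $M$ is a value appeals to item~1. This mirrors the pattern of Lemma~\ref{lem:cong-subs} (the analogous substitution lemma for the negative translation $\VFS\to\CPS$), so I expect the structure to carry over essentially verbatim.

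For item~1 I would first dispatch the two variable cases by unfolding definitions: if $W=x$ then both sides are $\ngv{V}$, and if $W=y$ with $y\neq x$ then both sides are $y$. For $W=\lb y.N$ — choosing, by the usual variable convention, $y$ distinct from $x$ and not free in $V$ — I would push the substitution under the binder, unfold $\ngv{(\lb y.N)}=\lb y.\lb k.\angt{N}$, apply the induction hypothesis of item~2 to $N$, and then pull $\sub{\ngv{V}}{x}{}$ back out past the binders $\lb y$ and $\lb k$. For item~2, when $M$ is a value $W$ I would use $\angt{W}=k\ngv{W}$, apply the induction hypothesis of item~1 to $W$, and commute the substitution over the head $k$; when $M=\gapp{W_1}{W_2}{y}{N}$ is a commutative normal form I would distribute the substitution over the three immediate subexpressions, unfold $\angt{(\gapp{W_1}{W_2}{y}{N})}=\ngv{W_1}\ngv{W_2}(\lb y.\angt{N})$, invoke the induction hypotheses of item~1 for $W_1,W_2$ and of item~2 for $N$, and reassemble.

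I do not anticipate a genuine obstacle: this is the standard "the translation commutes with substitution" statement, and every case reduces to a one- or two-line equational computation. The only points that deserve to be flagged explicitly before the induction are the freshness conditions. First, the bound variable $y$ in the abstraction and application cases must be taken distinct from $x$ and outside $FV(V)$, so that $\sub{\ngv{V}}{x}{}$ commutes with $\lb y$ without capture. Second — and this is the step that makes the $k$- and $\lb k$-constructors introduced by the negative translation harmless — the fixed covariable $k$ of $\cps$ never occurs free in any $\ngv{V}$ (values carry no occurrence of $k$ in $\cps$), and $k\neq x$, so $\sub{\ngv{V}}{x}{}$ always passes through the constructor $k(-)$ and past the binder $\lb k$. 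Once these conventions are granted, the induction is routine.
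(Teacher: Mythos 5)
Your proof is correct and follows exactly the route the paper takes: the paper's proof of Lemma~\ref{lem:subs-neg} is precisely a simultaneous induction on $W$ and $M$ (left implicit there, patterned on Lemma~\ref{lem:cong-subs}), and your case analysis, use of the mutual induction hypotheses, and the freshness observations about the bound variable and the covariable $k$ are the details the paper omits.
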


\begin{proof}
	By simultaneous induction on $W$ and $M$.
\end{proof}

\begin{lem}\label{lem:col-neg}
	For all $M,N \in \cnf$, $\angt{(\lsub{N}{x}{M})} = \sub{\lb x. \angt{M}}{k}{\angt{N}}$.
\end{lem}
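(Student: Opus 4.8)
The plan is to prove the identity by structural induction on $N\in\cnf$, following exactly the two clauses defining left substitution $\lsub{N}{x}{M}$: a base clause for $N$ a value, and a recursive clause for $N=\gapp{V}{W}{y}{N_3}$ (a commutative normal form, so $V,W$ are values and $N_3\in\cnf$).

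In the base case $N=V$, left substitution collapses to ordinary substitution, so the left-hand side is $\angt{(\sub{V}{x}{M})}$, which by Lemma~\ref{lem:subs-neg}(2) equals $\sub{\ngv{V}}{x}{\angt{M}}$. On the right-hand side $\angt{V}=k\ngv{V}$, and applying the structural substitution $\sub{\lb x.\angt{M}}{k}{-}$ to the command $k\ngv{V}$ fires its critical clause, producing exactly $\sub{\ngv{V}}{x}{\angt{M}}$; this is legitimate because $k\notin FV(\ngv{V})$. Hence the two sides coincide. In the inductive case $N=\gapp{V}{W}{y}{N_3}$, the defining clause of $\lsub{-}{x}{M}$ pushes the substitution into the tail, so the left-hand side unfolds to $\angt{(\gapp{V}{W}{y}{\lsub{N_3}{x}{M}})}=\ngv{V}\ngv{W}(\lb y.\angt{(\lsub{N_3}{x}{M})})$ by the clause of the variant negative translation, and the induction hypothesis rewrites the body as $\sub{\lb x.\angt{M}}{k}{\angt{N_3}}$. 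On the right-hand side $\angt{N}=\ngv{V}\ngv{W}(\lb y.\angt{N_3})$, and since $k$ occurs free in neither $\ngv{V}$ nor $\ngv{W}$, the structural substitution descends through the application and the continuation abstraction, reaching the same term $\ngv{V}\ngv{W}(\lb y.\sub{\lb x.\angt{M}}{k}{\angt{N_3}})$.

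The one point that needs care — and the reason Lemma~\ref{lem:subs-neg} and the freshness remarks above are invoked — is the bookkeeping of the distinguished covariable $k$: one must know that translated $\cnf$-values never contain $k$ free (immediate from the clauses defining $\ngv{(-)}$), so that $\sub{\lb x.\angt{M}}{k}{-}$ leaves $\ngv{V}$ and $\ngv{W}$ untouched when traversing an application $\ngv{V}\ngv{W}K$, and so that the critical clause $\sub{\lb x.\angt{M}}{k}{(k\ngv{V})}=\sub{\ngv{V}}{x}{\angt{M}}$ applies without capture. Apart from this, the argument is a routine unfolding of definitions, entirely parallel to the proof of Lemma~\ref{lem:cong-col} for the full systems $\VFS$ and $\CPS$.
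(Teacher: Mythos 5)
Your proof is correct and follows essentially the same route as the paper: induction on $N$, handling the value case via Lemma~\ref{lem:subs-neg} together with the critical clause $\sub{\lb x.\angt{M}}{k}{(k\ngv{V})}=\sub{\ngv{V}}{x}{\angt{M}}$, and pushing the substitution through $\ngv{V}\ngv{W}(\lb y.\angt{N_3})$ in the application case using the induction hypothesis and the fact that $k$ is not free in translated values. The only cosmetic point is that this is the $\cps$ substitution performing the $\sigma_v$-step on the fly (it merely \emph{echoes} a structural substitution), not the structural substitution $\sub{\CC}{k}{-}$ itself, but your use of its clauses is the intended one.
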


\begin{proof}
	By induction on $N$.
	
	Case $N=V$.
	$$
	\begin{array}{rclcrcl}
		\angt{(\lsub{V}{x}{M})} &=& \angt{(\sub{V}{x}{M})}\\
		& =& \sub{\ngv{V}}{x}{\angt{M}} & \text{(by Lemma \ref{lem:subs-neg})}\\
		&=& \sub{\lb x. \angt{M}}{k}{(k\ngv{V})} \\
		&=&  \sub{\lb x. \angt{M}}{k}{\angt{V}} 
	\end{array}
	$$
	
	Case $N=\li{V}{W}{y}{N}$.
	$$
	\begin{array}{rclcrcl}
		\angt{(\lsub{\li{V}{W}{y}{N}}{x}{M})} &=& \angt{(\li{V}{W}{y}{\lsub{N}{x}{M}})}\\
		&=& \ngv{V}\ngv{W}(\lb y. \angt{(\lsub{N}{x}{M})}) \\
		&=& \ngv{V}\ngv{W}(\lb y. \sub{\lb x.\angt{M}}{k}{\angt{N}}) & \text{(by IH)}\\
		&=& \sub{\lb x. \angt{M}}{k}{(\ngv{V}\ngv{W}(\lb y. \angt{N}))}\\
		&=& \sub{\lb x. \angt{M}}{k}{\angt{(\li{V}{W}{y}{N})}}
	\end{array}
	$$
\end{proof}

\begin{lem}\label{lem:subs-cps}
	For all $V,W,M,P \in \cps$:
	\begin{enumerate}
		\item $\posv{(\sub{V}{x}{W})} = \sub{\posv{V}}{x}{\posv{W}}$.
		\item $\apos{(\sub{V}{x}{M})} = \sub{\posv{V}}{x}{\apos{M}}$.
		\item $\post{(\sub{V}{x}{P})} = \sub{\posv{V}}{x}{\post{P}}$.
	\end{enumerate} 
\end{lem}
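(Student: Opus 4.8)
The plan is to establish items~1--3 by a single simultaneous structural induction: on $W$ for item~1, on the command $M$ for item~2, and on the term $P$ for item~3 --- the same pattern already used for Lemma~\ref{lem:cong-subs-CPS} and Lemma~\ref{lem:subs-neg}. This is well-founded because $\posv{(\_)}$, $\apos{(\_)}$ and $\post{(\_)}$ recurse along the grammar of $\cps$: a value $\lb y.P$ contains the strictly smaller term $P$, a term $\lb k.M$ contains the command $M$, a command $kV$ contains the value $V$, and a command $VW(\lb x.N)$ contains the values $V,W$ and the term $N$; so every appeal to the induction hypothesis is to a structurally smaller expression.

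For item~1: if $W=x$ both sides are $\posv{V}$; if $W=y\neq x$ both sides are $y$; if $W=\lb y.P$, choosing $y$ fresh for $x$ and for $FV(V)$, we unfold $\posv{(\sub{V}{x}{(\lb y.P)})}=\lb y.\post{(\sub{V}{x}{P})}$, apply item~3 of the induction hypothesis to $P$, and refold into $\sub{\posv{V}}{x}{\posv{(\lb y.P)}}$. For item~2: in the case $M=kW$ both sides unfold to $\posv{(\sub{V}{x}{W})}$ resp.\ $\sub{\posv{V}}{x}{\posv{W}}$, which agree by item~1; in the case $M=W_1W_2(\lb y.N)$, with $y$ fresh, the left side unfolds to $\posv{(\sub{V}{x}{W_1})}\garg{\posv{(\sub{V}{x}{W_2})}}{y}{\post{(\sub{V}{x}{N})}}$, and items~1, 1, 3 rewrite the three slots, after which $\sub{\posv{V}}{x}{(\_)}$ can be pulled out past the $\cnf$-constructor. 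For item~3, the only case $P=\lb k.M$ unfolds to $\apos{(\sub{V}{x}{M})}$, handled by item~2, using $k\notin FV(V)$ since $V$ is a $\cps$-value.

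I do not expect a genuine obstacle here: this is a routine commutation-of-substitution statement, and the only point requiring attention is the standard $\alpha$-renaming discipline --- ensuring the bound names $y$ and $k$ are chosen disjoint from $x$ and from $FV(V)$, which is precisely what lets the outer substitution slide under the binders. Once in place, items~1--3 feed directly into the proof that the $\cnf$-variant of the negative translation is an isomorphism $\cnf\cong\cps$ (Theorem~\ref{thm:iso-cnf}), playing the role that Lemma~\ref{lem:cong-subs-CPS} played in the proof of $\VFS\cong\CPS$ (Theorem~\ref{thm:iso}).
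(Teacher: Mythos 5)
Your proposal is correct and matches the paper's proof, which simply states that the lemma follows by simultaneous induction on $W$, $M$ and $P$; your case analysis (variables, $\lb y.P$, $kW$, $W_1W_2(\lb y.N)$, $\lb k.M$) with the usual freshness conventions for the bound names is exactly the intended argument.
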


\begin{proof}
	By simultaneous induction on $W, M$ and $P$.
\end{proof}

\begin{lem}\label{lem:col-cps}
	For all $M,N \in \cps$, $\apos{(\sub{\lb x.N}{k}{M})} = \lsub{\apos{M}}{x}{\apos{N}}$.
\end{lem}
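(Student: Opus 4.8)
The plan is to prove the identity by induction on the structure of the command $M\in\cps$, of which there are exactly two forms, $kV$ and $VW(\lb y.M_0)$ (with $K=\lb y.M_0$ the continuation of the application). Before the induction I would record two auxiliary facts. First, how the structural substitution acts: its critical clause is $\sub{\lb x.N}{k}{(kV)}=\sub{V}{x}{N}$, and the remaining clauses are the evident homomorphic ones; since in $\cps$ the covariable $k$ does not occur free in values, on an application command one has $\sub{\lb x.N}{k}{(VW(\lb y.M_0))}=VW(\lb y.\sub{\lb x.N}{k}{M_0})$. Second, the left substitution of $\cnf$ satisfies $\lsub{V}{x}{P}=\sub{V}{x}{P}$ when its first argument is a value, and $\lsub{\gapp{V}{W}{y}{N_3}}{x}{P}=\gapp{V}{W}{y}{\lsub{N_3}{x}{P}}$ otherwise. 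Throughout I adopt the usual variable conventions, so that $y\neq x$, $y\notin FV(N)$ and $k\notin FV(N)$.

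In the base case $M=kV$ I would compute the left-hand side as $\apos{(\sub{\lb x.N}{k}{(kV)})}=\apos{(\sub{V}{x}{N})}$, and then invoke Lemma~\ref{lem:subs-cps}(2) to get $\apos{(\sub{V}{x}{N})}=\sub{\posv{V}}{x}{\apos{N}}$. For the right-hand side, $\apos{(kV)}=\posv{V}$ is a value, so $\lsub{\apos{(kV)}}{x}{\apos{N}}=\lsub{\posv{V}}{x}{\apos{N}}=\sub{\posv{V}}{x}{\apos{N}}$ by the value clause of left substitution, and the two sides coincide.

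In the inductive case $M=VW(\lb y.M_0)$ I would compute, using the clause of the structural substitution recalled above together with the definition of $\apos$ on application commands,
$$\apos{(\sub{\lb x.N}{k}{M})}=\apos{(VW(\lb y.\sub{\lb x.N}{k}{M_0}))}=\posv{V}\,\garg{\posv{W}}{y}{\apos{(\sub{\lb x.N}{k}{M_0})}}\enspace.$$
Applying the induction hypothesis to $M_0$ rewrites the innermost term as $\lsub{\apos{M_0}}{x}{\apos{N}}$, and folding the second clause of left substitution yields $\lsub{\bigl(\posv{V}\,\garg{\posv{W}}{y}{\apos{M_0}}\bigr)}{x}{\apos{N}}=\lsub{\apos{M}}{x}{\apos{N}}$, which is the right-hand side.

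I expect the only delicate point to be the base case: one must recognise that the critical clause of $\sub{\lb x.N}{k}{-}$, which silently performs the $\sigmav$-step $(\lb x.N)V\to\sub{V}{x}{N}$ on the fly, lines up on the $\cnf$ side precisely with ordinary substitution — and hence, the argument being a value, with left substitution — via Lemma~\ref{lem:subs-cps}. Everything else is the routine commutation of the translation and of substitution through the generalized-application constructor, plus the standard variable-convention bookkeeping (that $k$ and $y$ are not free in $N$, that $y\neq x$), which I would leave implicit.
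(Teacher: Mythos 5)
Your proposal is correct and follows essentially the same route as the paper: induction on the command $M$ with the two cases $kV$ and $VW(\lb y.M_0)$, using the critical clause of $\sub{\lb x.N}{k}{-}$ together with Lemma~\ref{lem:subs-cps} in the base case, and the homomorphic clause plus folding of left substitution in the application case. The extra bookkeeping you record (that $k$ does not occur in values, and the value clause $\lsub{V}{x}{P}=\sub{V}{x}{P}$) is exactly what the paper's calculation uses implicitly.
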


\begin{proof}
	By induction on $M$.
	Case $M=kV$.
	$$
	\begin{array}{rclcrcl}
		\apos{(\sub{\lb x.N}{k}{(kV)})} &=& \apos{(\sub{V}{x}{N})}\\
		&=& \sub{\posv{V}}{x}{\apos{N}} & \text{(by Lemma \ref{lem:subs-cps})}\\
		&=& \lsub{\posv{V}}{x}{\apos{N}}\\
		&=& \lsub{\apos{(kV)}}{x}{\apos{N}}
	\end{array}
	$$
	
	Case $M=VW(\lb y.P)$.
	$$
	\begin{array}{rclcrcl}
		\apos{(\sub{\lb x.N}{k}{(VW(\lb y.P))})} &=& \apos{VW(\lb y.\sub{\lb x.N}{k}{P})}\\
		&=& \li{\posv{V}}{\posv{W}}{y}{\apos{(\sub{\lb x.N}{k}{P})}}\\
		&=& \li{\posv{V}}{\posv{W}}{y}{\lsub{\apos{P}}{x}{\apos{N}}} & \text{(by IH)}\\
		&=& \lsub{\li{\posv{V}}{\posv{W}}{y}{\apos{P}}}{x}{\apos{N}}\\
		&=& \lsub{\apos{(VW(\lb y.P))}}{x}{\apos{N}}
	\end{array}
	$$
	
\end{proof}

\noindent \textbf{Theorem \ref{thm:iso-cnf}}\textbf{.}
$\cnf \cong \cps$.
\begin{enumerate}
	\item For all $M,V\in\cnf$, $\post{{\ngt{M}}}=M$ and $\apos{{\angt{M}}}=M$ and $\posv{{\ngv{V}}}=V$.
	\item For all $P,M,V\in\cps$, $\ngt{{\post{P}}}=P$ and $\angt{{\apos{M}}}=M$ and $\ngv{{\posv{V}}}=V$.
	\item If $M_1\to M_2$ in $\cnf$ then $\angt{M_1}\to\angt{M_2}$ in $\cps$ (hence $\ngt{M_1}\to\ngt{M_2}$ in $\cps$).
	\item If $M_1\to M_2$ in $\cps$ then $\apos{M_1}\to\apos{M_2}$ in $\cnf$. Hence If $P_1\to P_2$ in $\cps$ then $\post{P_1}\to\post{P_2}$ in $\cnf$.
\end{enumerate}

\begin{proof}
	Item 1 is straightforward by simultaneous induction on $M$ and $V$. Similarly, item 2 is straightforward by simultaneous induction on $P, M$ and $V$. Item 3 follows by induction on the relation $M_1 \to M_2$ in $\cnf$. We just show the base case.
	
	Case $(\betav)$.
	$$
	\begin{array}{rcll}
		\angt{((\lb y.M)\garg{V}{x}{P})} & = & (\lb y. \lb k. \angt{M})\ngv{V}(\lb x.\angt{P})\\
		& \red_{\betav} & \sub{\lb x. \angt{P}}{k}{\sub{\ngv{V}}{y}{\angt{M}}} \\
		&=&  \sub{\lb x. \angt{P}}{k}{\angt{(\sub{V}{y}{M})}} & \text{(by Lemma \ref{lem:subs-neg})}\\
		&=&  \angt{(\lsub{\sub{V}{y}{M}}{x}{P})} &\text{(by Lemma \ref{lem:col-neg})}
	\end{array}
	$$
	
	Analogously, item 4 follows by induction on the relation $M_1 \to M_2$ in $\cps$. We just show the base case.
	
	Case $(\betav)$.
	$$
	\begin{array}{rcll}
		\apos{((\lb y. \lb x.M)W(\lb x.N))} & = & (\lb y. \apos{M})\garg{\posv{W}}{x}{\post{N}}\\
		& \red_{\betav} & \lsub{\sub{\posv{W}}{y}{\apos{M}}}{x}{\post{N}}\\
		& = & \lsub{\apos{(\sub{W}{y}{M})}}{x}{\post{N}} & \text{(by Lemma \ref{lem:subs-cps})}\\
		& = & \apos{(\sub{\lb x.N}{k}{\sub{W}{y}{M}})} & \text{(by Lemma \ref{lem:col-cps})}
	\end{array}
	$$

\end{proof}

\end{document}